\PassOptionsToPackage{dvipsnames}{xcolor}

\PassOptionsToPackage{unicode}{hyperref}

\documentclass[a4paper,UKenglish,cleveref, autoref, thm-restate]{article}
\usepackage{hyperref}

\usepackage{mathrsfs}  
\usepackage{csquotes}
\usepackage{amsmath}
\usepackage{amsfonts}
\usepackage{amssymb}
\usepackage{amsthm}
\usepackage{commath}
\usepackage{bm}
\usepackage{xfrac}
\usepackage{mathtools}
\usepackage{url}
\usepackage{bookmark}
\usepackage{float}
\usepackage{booktabs}
\usepackage{subcaption}
\usepackage{tikz}
\usetikzlibrary{calc,positioning,shapes,arrows}
\usepackage[dvipsnames]{xcolor}

\newcommand\MANONlim{ELim}

\newcommand{\composition}{\textit{composition}}
\newcommand{\linearLODE}{\textit{linear length ODE}}

\newcommand{\robODE}{\textit{robust continuous ODE}}
\newcommand{\contClasslim}{\bar{\mathbb{RCD}}}
\newcommand{\contClass}{\mathbb{RCD}}
\newcommand{\contClasslight}{\mathbb{RCD_{*}}}
\newcommand\encodageconfiguration{\gamma_{\textit{config}}}

\newcommand\manonclasslighttanh{\linearderivlength^\circ}

\newcommand\manonclasslighttanhlim{\bar{\manonclasslighttanh}}

\newcommand\send{\operatorname{send}}
\newcommand\smooth{\mathcal{C}}
\newcommand\sendtanh{\operatorname{\smooth-send}}
\newcommand\sendsymbol{\mapsto}

\newcommand\TTtanh{\operatorname{\smooth-if}}

\newcommand\relu{\operatorname{ReLU}}

\newcommand\sig{\operatorname{\mathfrak{s}}}
\newcommand\sigtanh{\operatorname{\mathfrak{\smooth-s}}}
\newcommand\relutanh{\operatorname{ReLU\mathfrak{-s}}}

\newcommand\distance[2]{d(#1,#2)}

\newcommand{\accept}{\textit{accept}}
\newcommand{\reject}{\textit{reject}}

\usetikzlibrary{cd}

\usepackage{versions}
\newcommand\dyadic{\mathbb{D}}

\newcommand\vectorl[1]{{\mathbf#1}}

\newcommand\vx{\vectorl{x}}
\newcommand\vy{\vectorl{y}}
\newcommand\vz{\vectorl{z}}

\newcommand\R{\mathbb{R}}
\newcommand\Q{\mathbb{Q}}
\newcommand\N{\mathbb{N}}
\newcommand\Z{\mathbb{Z}}

\newcommand{\dom}{\textit{Dom}}

\newcommand\projection[2]{\mathbf{\pi}_{#1}^{#2}}

\newcommand\plus{\mathbf{+}}
\newcommand\minus{\mathbf{-}}

\newcommand\gammaentier{\gamma_{\N}}
\newcommand\gammacompact{\gamma_{[0,1]}}
\newcommand\gammac{\Upsilon}
\newcommand\gammaentierk{\gamma_{\N}^{k}}
\newcommand\gammacompactk{\gamma_{[0,1]}^{k}}
\newcommand\gammacompactpk{\gamma_{[0,1]}^{'k}}

\newcommand\REACHPERTURBEDSPACE[2]{\reach^{#1}_{\{#2\}}}
\newcommand\mP{\mathcal{H}}

\newcommand{\tu}[1]{\mathbf{#1}}

\newcommand{\cp}[1]{\mathbf{#1}}
\newcommand{\Ptime}{\cp{PTIME}} 

\newcommand{\FPtime}{\cp{FPTIME}}

\newcommand{\Pspace}{\cp{PSPACE}}

\newcommand{\FPspace}{\cp{FPSPACE}}

\renewcommand\bar[1]{\overline{#1}}

\usepackage{color}

\newcommand{\dderiv}[2]{\frac{\delta #1}{\delta #2}}

\newcommand{\dderivl}[1]{\frac{\delta #1}{\delta \ell(x)}}

\newcommand\lengthnotation{\ell}
\newcommand{\length}[1]{\mathrm{\lengthnotation}(#1)}

\newcommand{\linearderivlength}{\mathbb{LDL}}

\usepackage{todonotes}

\newcommand\encodagemot{\gamma_{word}}
\newcommand\base{4}
\newcommand\symboleun{1}
\newcommand\symboledeux{3}

\newcommand\Image{\mathcal{I}}

\newcommand{\Next}{\textit{Next}}

\newcommand{\init}{\textit{init}}

\newcommand{\reach}{\textit{$\mathfrak{R}$}}

\definecolor{lightergray}{gray}{0.95} 
\usepackage{mdframed}
\newenvironment*{problems}[3]{\begin{mdframed}[linecolor= lightergray, linewidth = 2pt,leftmargin = 0.5pt , rightline = false, bottomline=false, topline = false]
		\textbf{#1}: \newline \underline{Input:} #2 \newline \underline{Output:} #3}
	{\end{mdframed}}

\newtheorem{theorem}{Theorem}

\theoremstyle{proposition}
\newtheorem{proposition}[theorem]{Proposition}

\theoremstyle{lemma}
\newtheorem{lemma}[theorem]{Lemma}

\theoremstyle{corollary}
\newtheorem{corollary}[theorem]{Corollary}

\theoremstyle{definition}
\newtheorem{definition}[theorem]{Definition}

\theoremstyle{remark}
\newtheorem{remark}[theorem]{NB}

\theoremstyle{observation}

\usepackage{wrapfig}
\usepackage{multicol}

\newcommand\THEPAPIERS{\cite{MFCS2019,MFCSJournal}}

\title{$\Pspace$-Completeness of the Reachability Relation of Robust Dynamical Systems}

	\usepackage{fancyhdr}

\usepackage{geometry}
\author{Manon BLANC, ITU i K{\o}benhavn, mbrb (at) itu.dk}
\date{ 
	version updated: September 2026}

\usetikzlibrary{decorations.pathreplacing,calligraphy}

\begin{document}
	\maketitle
	\begin{abstract}
		Dynamical systems are a widely used model of computation and an intuitive representation of many natural phenomena in many applied fields. The last decades have seen an impressive use of computers in studying and analysing them, with several visible theoretical breakthroughs.
	 The question of reachability for dynamical systems is central: given two points, is there a trajectory between them in the dynamical system? 
	 Reachability, in particular over the reals, is undecidable in the general case: it is only computably enumerable, even under reasonable hypotheses, e.g. for a computable dynamics in general. But it is known to become decidable, and even lies in $\Pspace$, under suitably quantified robustness assumptions. 
		 
	In this paper, we focus on $\Pspace$-completeness results over the reals, in the sense of computable analysis, regarding discrete-time and continuous-time \emph{robust} dynamical systems. 
	We prove that, under some additional and rather reasonable hypotheses, the reachability relation becomes even $\Pspace$-complete for both classical and second-order complexity \emph{second-order} complexity, for an adapted notion of reduction. As a side-effect, we also prove an improvement of an algebraic Ordinary Differential Equations-based characterisation of $\Pspace$, removing one of the base functions of the algebra.

	By doing so, we link results from dynamical systems and complexity over the reals, including the models of computable analysis and second-order complexity. 
	\end{abstract}

\underline{NB:} This is an updated version (update: September 2026). The first version was made available on my personal webpage in January 2026. The free version Grammarly was used as a grammatical checker in the introduction. The writing, the statements and their proofs are my own, and checked by me.

\underline{Acknowledgments:} I thank Olivier Bournez for rereading this paper and pointing out the typos and imprecisions.

\newpage

	\section{Introduction}

Dynamical systems are a model of computation widely used in many areas of fundamental and applied science. Their interactions with computations over the reals have various motivations (generalised shifts \cite{Moo91}, low-dimensional dynamical systems \cite{KCG94}, control theory \cite{BloTsi00}, analogue computing \cite{bournez2021survey}). Given a fixed  dynamical system, a common computational problem is reachability ($\textbf{Reach}$): deciding whether there is a path in the dynamical system reaching some given point $y$ starting from a given $x$. This decision problem is well-known to be computably enumerable (c.e.), but not decidable in the general case, even when the underlying dynamics is computable. 	It can be proven by embedding a Turing machine into a dynamical system; the machine stops if and only if the reachability relation holds.

However, with a proper notion of \emph{robustness,} as defined in Definition \ref{defRobust}, $\textbf{Reach}$ becomes decidable. Intuitively, this involved definition can be understood as insensitivity to an arbitrarily small perturbation. Instead of sending one point to one solution point, the perturbed dynamic function sends one point to an open set, in the topological sense. A system is robust if and only if each of those open sets intersects the (non-perturbed) solution point. For example, in game theory or in verification, even if \textbf{Reach} is not decidable in the general case, we can still certify that a subclass of dynamical systems can reach or not reach some subset of states.
The authors of \cite{asarin01perturbed} and \cite{gao2012delta} apply this concept mainly to decidability questions.
Considering dynamical systems stable under an infinitesimal perturbation is an old idea, rediscovered several times under different names.  Among them, we mention the concept of ``chain reachability'' (\cite{conley1978}, \cite{collins05}). In verification, the concept of infinitesimally perturbed dynamics has been explored to provide alternative semantics for certain models of computation; e.g. \cite{puri2000dynamical} for timed automata. 
The authors of \cite{CSL24} proved it is also possible to deal with complexity questions, and not only computability. With the proper quantification over the allowed perturbation, namely the radius of the open sets ($2^{-p(n)}$, $p$ a polynomial and $n\in \N$ related to some notion of size), they prove that $\textbf{Reach}$ is in $\Pspace$. This paper studies similar premises to those of \cite{CSL24}  where they mainly use the \emph{computable analysis}' model to define computability and complexity theory over the reals, since they handle general dynamical systems described by Ordinary Differential Equations (ODEs).

The main results of this paper are $\Pspace$-completeness \emph{over the reals} results. But many different models of computation exist for real numbers, and the proper definitions of complexity classes vary, depending on the model.
For discrete-time models of computations over the reals, the most famous approaches are computable analysis, based on Turing machines (TMs) (\cite{Tur36}, \cite{Wei00}) and algebraic models such as the Blum Shub Smale (BSS) model (\cite{BSS89}, \cite{BCSS98}). For the latter model, an extensive list of decision problems  $\exists\R$-complete has been recently given: see \cite{Schaefer24}. It includes problems coming from evolutionary game theory, such as properties of Nash equilibria, which are also usually described by dynamical systems. Other completeness results regarding the reachability of some equilibria higher in the complexity hierarchy exist: see, for example, \cite{etessami} and \cite{CSR21}.
Both models were tailored for very different applications, and we cannot unify existing models with the equivalent of a Church-Turing thesis. For example, computable functions in a computable analysis model are necessarily continuous (see e.g. \cite{Wei00}), while the BSS model intends to consider functions and problems over the polynomials that are not necessarily continuous. 
It is also explained by the fact that some models have not been introduced with the idea of corresponding to actual physical machines, but also to discuss abstract complexity (lower and upper bounds) for the associated problems. 
Using TMs, in a discrete-time framework, the authors of \cite{BlancBournezMFCS2023vantardise} prove it is possible to have algebraic characterisations of $\Pspace$, relying on discrete Ordinary Differential Equations (ODEs) (also called finite difference ODEs). 
For continuous-time models, one of the first machines was the Differential Analysers \cite{ulmann2020analog}, mathematically formalised by the General Purpose Analog Computer (GPAC) (\cite{Sha41}). 

There are also various recent approaches and models from deep learning, such as Neural ODEs \cite{chen2018neural,kidger2022neural}. 
Each model corresponds to a particular class of  ODEs.  E.g., the GPAC corresponds to polynomial ODEs \cite{GraCos03} and Neural ODEs are made by selecting the best solution among a parameterised class of ODEs (e.g. \cite{kidger2022neural}). With TMs,  the authors of \cite{ICALP24} prove it is possible to have algebraic characterisations of $\Pspace$, relying on continuous ODEs. In \cite{KawCook12}, the authors have $\Pspace$-completeness results by restricting themselves to compact domains.

\paragraph*{Our contributions}

In the general case, both in computable analysis and in second-order complexity over the reals, the problem of reachability is not decidable. We consider these frameworks so we can describe what polynomial space means more precisely in models of computation closer to the analogue world. 
\begin{itemize}
	 \item We prove, in Section \ref{secClassicRed}, the $\Pspace$-hardness of \textbf{Reach} in a robust dynamical system of a rational domain (included in $\Q$) (Definition \ref{defRobust}) (``$\textbf{Reach}$ Robust $\manonclasslighttanh_\textit{light}$'' in \textbf{Proposition \ref{propRRLDLhard}} and \textbf{Corollary \ref{RRLightComp}}). The usual notion of reduction is used here.
	 The $\Pspace$-membership is proved in \cite{CSL24, BlancBournezJoC}. Thus, it is $\Pspace$-complete: we prove the necessary complexity lower-bound. 
	 \item We show that, with robust oracle TMs (Definition \ref{defType2}), and second-order reductions over real functions (Definition \ref{defTopoRed}), we have the $\Pspace^2$-$\leq_m^2$-hardness (see Definition \ref{defPspaceCompKC}) of $\textbf{Reach}$, for both discrete-time (``\textbf{DRRL}'' in \textbf{Lemma \ref{DRRLHard}} and \textbf{Corollary \ref{DRRLComp}}) and continuous-time dynamics (``\textbf{CRRL}'' in \textbf{Theorem \ref{CRRLhard}} and \textbf{Corollary \ref{CRRLComp}}). With the membership results of \cite{CSL24, BlancBournezJoC}, it implies that both problems are $\Pspace^2$-$\leq_m^2$-complete.  Up to our knowledge, such questions 
	 within the frameworks of \cite{KawCook12}, \cite{BlancBournezMFCS2023vantardise} and \cite{ICALP24}, were open.
	 To sum-up the two previous items, we prove:
\hspace{-.3cm} \begin{center}
	 	\begin{tabular}{|c|c|c|c|}
	 		\multicolumn{1}{c|}{ } & \multicolumn{1}{c|}{Usual reduction} & \multicolumn{2}{|c|}{Second-order reduction} \\
	 		\hline
	 		& $\textbf{Reach}$ Robust $\manonclasslighttanh_\textit{light}$ & \textbf{DRRL} & \textbf{CRRL} \\
	 		\hline
	 		-Completeness & $\Pspace$ (\textcolor{Violet}{Cor. \ref{RRLightComp}}) & $\Pspace^2$ (\textcolor{Violet}{Cor. \ref{DRRLComp}}) & $\Pspace^2$ (\textcolor{Violet}{Cor. \ref{CRRLComp}}) \\
	 		\hline
	 	\end{tabular}
	 \end{center}
	 All these results are presented here on $\Q$ or $\R$ but are valid on $\Q^d$ and $\R^d$, for $d\in\N$: vectors can be projected on each of their coordinates.
	 \item As a side-effect, we prove that we can remove the cosine as a base function from the algebra of \cite{ICALP24}, which gives a characterisation of $\FPspace$ over the reals (\textbf{Theorem \ref{thNewRCD}}) with an algebra with a smaller basis. Thus, we are converging toward a minimal algebra for characterising polynomial space with ODEs.

\end{itemize}

\paragraph*{Organisation of the paper}

In Section \ref{secMathBack}, we introduce several tools and existing results that we rely on later. 
First, it includes some notion of dynamical systems and their computability properties, and the notion of robustness we consider here. Then we introduce \emph{second-order} complexity and previous results regarding complexity classes over the reals and hardness results. Then, 

in Section \ref{secClassicRed}, we prove that the reachability relation is $\Pspace$-complete, considering the framework of \cite{BlancBournezMFCS2023vantardise}, with the usual reduction between decision problems for this. 
 In Section \ref{secPSComplete}, we prove our main results about $\Pspace^2$-$\leq_m^2$-completeness of the reachability relation for robust discrete- and continuous-time dynamical systems, with a stronger notion of reduction, dealing with functions over the reals in second-order complexity (\cite{KawCook12}). We also prove that the base of the algebra of \cite{ICALP24} characterising $\FPspace$ can be minimised. 
 
\subparagraph{Notation.} 
For $a,b\in\R$, $a=_\epsilon b $ means $\|a-b\| < \epsilon$, for $\|.\|$ the sup-norm. For $A,B$ two sets, $f$ some function, $f\in B^A$ means $f: A \rightarrow B$. We define an algebra $[f_1, \cdots, f_n; o_1, \cdots, o_k]$ as the set of functions containing the functions $f_1, \cdots, f_n$, closed under the operators $o_1, \cdots, o_k$.

The following notations for \textbf{Reach} are used throughout the paper:

\begin{tabular}{|c||c|}
	\hline
	& Meaning \\
	\hline
	\hline
	$\reach^\mP$ & Reachability relation on $\mP$ \\
	\hline
	$\reach_{\epsilon}^\mP$ & Reachability relation on $\mP$ with an $\epsilon$ perturbation \\
	\hline
	$\reach_{\{q\}}^\mP $ & Reachability relation on $\mP$ with an $2^{-q(n)}$ perturbation, with $q$ a polynomial \\
	\hline
	$\reach_\omega^\mP$ & $\bigcap_{\epsilon>0} \reach_{\epsilon}^\mP $ \\
	\hline
\end{tabular}

\section{Mathematical Background and state of the art}\label{secMathBack}

We assume the reader is familiar with the basic concepts of \emph{classical} computability, including TMs, and complexity.
We also consider notions of computable analysis, see \cite{Ko91}, \cite{Wei00} for a more comprehensive study. The main notions are in Appendix \ref{secCA}.

\subsection{Discrete-Time Dynamical Systems}

As dynamical systems are widely used in fundamental and applied science, the relations between dynamical systems over the reals and computations have been the source of many works, such as \cite{Moo91}, \cite{AMP95}, \cite{bournez2021survey}. For now, we are interested in discrete-time dynamical systems:

\begin{definition}[Discrete-Time Dynamical System \cite{HSD03}]
	A discrete-time dynamical system $\mP$ is a tuple $(X, f)$, with $X$ a set, called the \emph{domain} of $\mP$, written $\dom(\mP)$, and $f: X \rightarrow X$ such that for all $\tu x \in X$, the dynamic starting from $x$ is defined by the sequence $\tu x_0 = \tu x$ and $\tu x_{n+1} = f(\tu x_n)$. In other words, $(\tu x_n)_{n\in\N} = (f^n(\tu x))_{n\in\N}$. 
\end{definition}

The main problems we consider here are the problems of \emph{point-to-point reachability} and \emph{point-to-set reachability} in a discrete-time dynamical system. The reachability problem determines whether there exists $n\in\N$ such that $y = f^n(\tu x)$ or $f^n(\tu x) \in Y $. 
We write $\reach^\mP(\tu x,\tu y)$ and $\reach^\mP(\tu x,Y)$ for the corresponding relations in the dynamical system $\mP$.

\begin{multicols}{2}
\begin{problems}
	{Reach($\tu x$,$\tu y$)}{A dynamical system $\mP = (X,f)$, $\tu x,\tu y \in X$}{$\reach^\mP(\tu x,\tu y)$}
\end{problems}

\begin{problems}
	{Reach($\tu x$,$Y$)}{A dynamical system $\mP = (X,f)$, $\tu x\in X$, $Y \subseteq X $}{$\reach^\mP(\tu x, Y)$}
\end{problems}
\end{multicols}

We notice that restricting $f$ to be a PAM, instead of just computable, does not improve the computability, where PAM stands for ``Piecewise Affine Map'',  a map of the form $f(\tu x) = \tu A_i \tu x + \tu b_i$, $\tu x \in P_i$, $i = 1 \cdots N$, where $A_i$ are rational $d\times d$-matrices, $b_i \in \Q^d$ and $P_i$ are convex rational polyhedral sets. 
There exist some \emph{classical} complexity results for time- and space-bounded dynamical systems.

Assuming $f$ preserves the rationals ($f(\Q)\subseteq \Q$), we have:

\begin{center}
	\begin{tabular}{|c|c|c|}
		\hline
		& $X\subseteq \Q$, $f$ computable & $X\subseteq \Q$, $f$ PAM \\
		\hline
		\textbf{Reach}($\tu x$,$y$) & c.e.-complete & c.e.-complete  \\
		\hline
		\textbf{Reach}($\tu x$,$Y$) & c.e.-complete & c.e.-complete  \\
		\hline
	\end{tabular}
\end{center}

To prove those hardness results, it is sufficient to reduce from the Halting problem. See e.g. \cite{Moo91, HKPV95, pouly} for the proofs.

Notice that the complexity of the problem crucially depends on the dynamical systems we consider:

\begin{center}
	\begin{tabular}{|c|c|c|}
		\hline
		& $X\subseteq \Q$, $f\in \FPtime$ & $X\subseteq \Q$, $f\in \FPspace$ \\
		\hline
		\textbf{Reach}($\tu x$,$y$) & $\Ptime$-complete & $\Pspace$-complete \\
		\hline
		\textbf{Reach}($\tu x$,$Y$) & $\Ptime$-complete & $\Pspace$-complete \\
		\hline
	\end{tabular}
\end{center}
\noindent where we assume time to be given in unary and $\textbf{Reach}$ to be time-bounded, meaning the computation time does not exceed some given $t\in \N$. To prove those results, it is sufficient to reduce from the time-bounded Halting Problem.

\subsection{Robustness for dynamical systems}\label{subRobust}

We give more details on the notion of robustness for a dynamical system of \cite{asarin01perturbed}, so the reachability problem becomes decidable.
The idea is to apply the paradigm of small perturbations.
Let $\mP= (X, \tu f)$ be discrete-time dynamical system. For any $\epsilon>0$, $\mP_{\epsilon}$ defines the $\epsilon$-perturbed system. Its trajectories are defined as sequences $\mathbf{x}_t$ satisfying $\distance{\mathbf{x}_{t+1}}{\tu f\left(\mathbf{x}_t\right)}<\epsilon$ for all $t$, and $d$ a distance. 
This non-deterministic system is considered as $\mP$ subjected to a noise of magnitude $\epsilon$. For convenience, we write $\tu y \in \tu f_{\epsilon}(\vx)$ as a synonym for $\distance{\tu f(\vx)}{\tu y} < \epsilon$. 
The reachability in the system $\mP_\epsilon$ is denoted by $\reach^\mP_\epsilon(\cdot, \cdot)$ and $B(\tu y, \epsilon)$ the (open) ball of radius $\epsilon$ centered on $\tu y$. 
Thus, we give the definition of the \emph{robust} reachability problem:

\begin{definition}[Robust \textbf{Reach} \cite{CSL24, BlancBournezJoC}]\label{defRobRe}
	With a discrete-time dynamical system $\mP$, we associate the relation $\reach^\mP_\epsilon(\tu x, \tu y)$, such that for two points $\tu x, \tu y \in \dom(\mP)$, $\epsilon = 2^{-p}$ for $p\in\N$, $\reach^\mP_\epsilon(\tu x, \tu y)=1$ if there exists a finite trajectory in $\mP_\epsilon$ from $\tu x$ to $B(\tu y, \epsilon)$.
\end{definition}
\begin{definition}[Discrete-Time Robust Dynamical System \cite{CSL24, BlancBournezJoC}]\label{defRobust}
	Define $\reach^\mP_\omega(\tu x, \tu y)$ iff $\forall \epsilon > 0$, $\reach^\mP_\epsilon(x, y)$. Without loss of generality, we assume that $\epsilon$ is of the form $2^{-p}$, $p\in\N$. We consider a discrete-time dynamical system $\mP$ \emph{robust} iff for all $\tu x, \tu y \in \dom(\mP)$, $ \reach^\mP_\omega(\tu x, \tu y) = \reach^\mP(\tu x,\tu y)$.
\end{definition}

All trajectories of a non-perturbed system $\mP$ are also trajectories of the $\epsilon$-perturbed system $\mP_{\epsilon}$. If $\epsilon_1<\epsilon_2$ then any trajectory of the $\epsilon_1$-perturbed system is also a trajectory of the $\epsilon_2$-perturbed system. By \cite{CSL24},	$\operatorname{\bf{Reach}}$ for robust discrete-time dynamical systems is co-c.e. It is c.e. by definition.
We deduce that $\textbf{Reach}$ for robust discrete-time dynamical systems is decidable.

\begin{definition}[Polynomially robust system]\label{defPolyRob}
	A discrete-time dynamical system $\mP$ is polynomially robust if there is a polynomial $q$, with $q(n) \to \infty$, such that $\reach^\mP_{2^{-q(n)}}(x,y) = \reach^\mP(x,y)$ for all $x, y \in \dom(\mP)$ of size at most $n$.
\end{definition}

\begin{theorem}[\cite{CSL24, BlancBournezJoC}] \label{thdirectionunpbis}  \label{ab:ab:ab}
	Let $\mP= (X,\tu f)$, with $\tu f\in\FPtime$ locally Lipschitz, $X$ a closed rational box ($X = \Pi_{i=1}^d[a_i, b_i] \subseteq\R$, $a_i, b_i\in \Q$) . Then, $\REACHPERTURBEDSPACE{\mP}{q} \subseteq \Q^{d}\times \Q^{d} \times \N \in \Pspace$.
\end{theorem}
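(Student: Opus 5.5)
The plan is to decide $\reach^\mP_{2^{-q(n)}}(\tu x,\tu y)$ for input $(\tu x,\tu y,n)$ by replacing the perturbed system with a finite graph whose vertices are grid points of the box $X$. I will work at a precision $m = q(n)+c$, for a constant $c$ fixed below, and use the dyadic grid $G_m = X \cap 2^{-m}\Z^d$.

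\textbf{Step 1: the finite graph.} Since $X$ is a fixed bounded rational box, every point of $G_m$ has size $O(m)$, polynomial in $n$. Put an edge $u \to v$ when $v$ is within distance about $2^{-q(n)}$ of an approximation of $\tu f(u)$. The approximation is computed to precision $2^{-m}$ by the $\FPtime$ machine for $\tu f$, using time, hence space, polynomial in $n$. The start vertex is the grid point nearest $\tu x$. The target is the set of grid points in $B(\tu y, 2^{-q(n)})$, shrunk or enlarged by a few units of $2^{-m}$.

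\textbf{Step 2: soundness and completeness.} I will show that a path in this graph yields a genuine $2^{-q(n)}$-perturbed trajectory, and conversely that a $2^{-q(n)}$-perturbed trajectory can be rounded to the grid to give a graph path, possibly with a slightly larger or smaller threshold.

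Rounding a point moves it by at most $2^{-m}$. By local Lipschitzness on the compact box, which gives a global constant $L$, this changes $\tu f$ of that point by at most $L2^{-m}$. I choose $c$ with $2^{c} > 4L$, so that the total slack $(L+2)2^{-m}$ is absorbed by the gap between neighbouring perturbation levels.

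The strict inequality in the definition of the perturbation makes it unclear whether the boundary cases go the right way. To handle this, I will sandwich:
\[
\reach_{2^{-q}-\delta} \subseteq \text{Graph} \subseteq \reach_{2^{-q}+\delta}.
\]
I then use the robustness assumption $\reach_{2^{-q(n)}} = \reach$, applied at both neighbouring levels, to make both sides coincide. If necessary, $q$ can be replaced by a slightly shifted polynomial, since all $\reach_{2^{-p}}$ with $p \ge q(n)$ agree with $\reach$.

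\textbf{Step 3: the space bound.} Reachability in the implicit graph is decided by Savitch's algorithm. The graph has $2^{O(m)}$ vertices, so Savitch uses $O(m^2)$ space, and each edge query is a polynomial-space computation of $\tu f$. Alternatively, a nondeterministic walk storing only the current vertex and a counter up to $2^{O(m)}$ runs in $\mathbf{NPSPACE} = \Pspace$.

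The main obstacle is Step 2. The difficulty is making the rounding errors, the Lipschitz distortion, and the strict versus non-strict ball conventions all fit inside one polynomial precision. This is exactly where robustness must be used to collapse the approximate sandwich into an equality.
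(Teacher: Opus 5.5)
Your route is the one the paper points to for this cited result: discretise the box into small cubes at a precision tied to the allowed perturbation, treat them as vertices of an implicit graph, and run Savitch. Steps~1 and~3 are fine; the problems are in Step~2.

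\textbf{Missing hypothesis.} The displayed statement has no robustness hypothesis, but your argument cannot close without one, and no argument could. With the strict inequalities of Definition~\ref{defRobRe}, the exact relation $\reach^\mP_{2^{-q(n)}}$ need not even be decidable for a polynomial-time Lipschitz $\tu f$. For instance, on $X=[0,1]$ take $f=\frac12+h$. Here $h\ge 0$ vanishes outside $[0,\frac18]$ and is a sum of disjoint tents. The tent centred at $x_i$ has height $0$ if the $i$-th machine never halts. Otherwise its height is a positive dyadic smaller than $2^{-T}$ ($T$ the halting time), smaller than $2^{-q(i)}$, and smaller than the tent's half-width. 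This $h$ is Lipschitz and polynomial-time computable. After one step, every perturbed trajectory from $x_i$ is near $\frac12$, where $h=0$. So with $\epsilon=2^{-q(i)}$ and $y=\frac12+2\epsilon$, one gets $\reach^\mP_\epsilon(x_i,y)$ iff $h(x_i)>0$, i.e. iff the machine halts. What you are really proving, and what the paper uses (see its summary table and the promise in \textbf{DRRL}), is the polynomially robust version: if $\reach^\mP_{2^{-q(n)}}=\reach^\mP$ on inputs of size at most $n$, then this relation is in $\Pspace$. State that hypothesis up front instead of introducing it in the middle of the proof.

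\textbf{Wrong direction in the collapse.} Even granting robustness, your collapse fails on the upper side. For $\epsilon'\le\epsilon$ we have $\reach^\mP\subseteq\reach^\mP_{\epsilon'}\subseteq\reach^\mP_{\epsilon}$, so robustness at level $2^{-q(n)}$ propagates only to finer perturbations. It says nothing about $\reach^\mP_{2^{-q}+\delta}$, which may be strictly larger than $\reach^\mP$. The graph has to be built strictly inside the robust range:
\begin{itemize}
\item use threshold $2^{-q(n)-1}$ both for edges and for the target set;
\item choose $c$ with $(L+2)2^{-c}\le\frac14$, so that $(L+2)2^{-m}\le 2^{-q(n)-2}$. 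Your condition $2^{c}>4L$ does not give this.
\end{itemize}
Your rounding estimates then yield $\reach^\mP_{2^{-q(n)-2}}\subseteq\text{Graph}\subseteq\reach^\mP_{2^{-q(n)}}$, and both ends equal $\reach^\mP$.

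Your remark about shifting $q$ is exactly this fix, but it is mandatory, not a fallback. With it, and with edges defined from the specific dyadic output of the fixed $\FPtime$ machine (so the graph is well defined), the argument is complete.
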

A function $f\in\R^E$ is \emph{Lipschitz} if there exists $k\in\N$ such that for all $ x,y \in E$, $\left|f(x)-f(y)\right| \leq k\left|x-y\right| $,
and \emph{locally} Lipschitz if every $x \in E$ has a neighbourhood $U$ such that $f$ restricted on $U$ is Lipschitz.
The existing results regarding those systems are summed up here:

\noindent\begin{tabular}{|c|c|c|c|}
	\hline
	& $X\subseteq\R$, $\mP$ non-robust & $\mP$ robust & $\mP$ poly. robust, $\tu f\in\FPtime$ Lips. \\
	\hline
	\textbf{Reach}$(x,y)$ & Undecidable & Decidable \cite{asarin01perturbed} & $\in \Pspace$ \cite{CSL24, BlancBournezJoC} \\
	\hline
\end{tabular}

\subsection{Reductions in continuous settings and second-order}

To talk about computable analysis within our framework of \emph{second-order}, we could consider a variant of Turing machines in this context, namely Type-2 machines: machines with $k$ input and working tapes together with a type specification $\left(Y_1, \cdots, Y_k, Y_0\right)$ with $Y_i\in \{\Sigma^*, \Sigma^\omega \}$, giving the type (finite or infinite) for each input tape and output tape.
These machines contain some read-only input tapes, containing the inputs, which can be either finite or infinite words, a read-write working tape and one write-only output tape. The only difference with classical Turing machines is that we allow it to output an infinite word written forever on its (or one of its) write-only infinite output tape(s). 
However, Type-2 machines are not appropriate for general second-order \emph{complexity} theory. For example, taking $f: \{0,1\}^\omega \rightarrow \{0,1\}^\omega$, then for $u$ of length $n$, the associated machine has to go through at most $2^{n-1}$ possible solutions for $f(u)$, enforcing directly an exponential upper-bound. Further discussions about it are in \cite{KawCook12}.
Thus, in the rest of the paper, we consider the definition through \emph{oracle} machines:
\begin{definition}[Oracle TM \cite{Ko91}]\label{defType2}
A real function $f : \R \rightarrow \R$ is computable if there is a function-oracle TM $M$ such that for each $x \in \R$ and each $\varphi \in \text{CF}_x$ (the set of all Cauchy sequences binary converging to x), the function $\psi$ computed by $M$ with oracle $\varphi$ (i.e. $\psi(n) = M^\varphi (n)$) is in $\text{CF}_{f(x)}$. 
\end{definition}
Intuitively, an oracle Turing machine $M$ computes a function $f$ over $\R$, if, for any $x\in\dom(f)$, for any $\varphi \in \text{CF}_x$, $M$ with oracle $\varphi$, the following: given the output precision $2^{-n}$, given in the form of integer $n$, $M$ eventually outputs $d \in \mathbb{D}$ with $\|d - f(x)\| \leq 2^{-n}$.
Computability and complexity classes over the reals depend on the model of computation we consider, and also the representations of the reals: some complexity results hold for some representation (of the reals) but not for another (see \cite{Wei00}): 
a \emph{representation} of a set $S$ is a surjective function $\delta : {\textbf{Reg}} \rightarrow S$ (\cite{Wei00}, \cite{KawCook12}), with \textbf{Reg} the set of regular functions (Definition \ref{defSize}).
Here, we consider only equivalent representations (see Definition \ref{defRepEquiv} in Appendix \ref{secCA}).
We could represent real numbers with infinite sequences (in $\Sigma^\omega$). The problem with such representation is that there is no robust measure of size for space complexity. Following \cite{KawCook12}, we use $\textbf{Reg}$ (Definition \ref{defSize}) as a naming system for real numbers, sets and functions. Hence, the complexity is defined as the length of the word representing the real number, plus the size of the underlying oracle, also represented by a word.
The size of a regular function $\phi$, also denoted by $\left| \phi \right|$, is $|\phi|(|u|) = |\phi(u)|$ and $|s|$ denotes the length of the string $s$.

\begin{definition}[Regular Functions \cite{KawCook12}] \label{defSize}
	A function $f:\Sigma^{*}\rightarrow\Sigma^{*}$ is \emph{regular} iff, for all $u,v \in \dom(f)$ such that $|u|\leq|v|$, then $|f(u)|\leq|f(v)|$.
\end{definition}

 The usual notion of reduction only deals with sets of languages as inputs. Here, we have functions as inputs; thus, we do reductions for the complexity with \emph{pseudofunctors}.
   In this context, we say that complexity classes are defined in \emph{second-order}. More precisely, second-order \emph{polynomials} (\cite{KawCook12}) are defined inductively as follows: a positive integer and a variable are a second-order polynomial. If $P$, $Q$ are second-order polynomials, then so are $P + Q$, $P\cdot Q$, $P(L)$ where $L\in\N^\N$ is a type-1 (a total or partial map from $\N$ to $\N$) variable. A second-order polynomial $P$ is in $\left(\N^\N\right)^{\N^\N}$.

\begin{definition}[Second-order polynomials \cite{KawCook12}]
	A ((non-)deterministic) TM $M$ runs in \emph{second-order polynomial time} if there is a second-order polynomial $P$ such that, given any oracle $\phi\in\emph{\textbf{Reg}}$, any input $u\in\Sigma^{*}$, $M$ halts within $P(|\phi|)(|u|)$ steps. Define \emph{second-order polynomial space} analogously by counting the number of visited cells on all tapes. 
\end{definition}

\begin{definition}[Complexity Classes over $\R$]\label{defT2TCC}\label{defT2CCC}
	$\Pspace^2$ (resp. $\Ptime^2$) is the class of decision problems decidable in second-order polynomial space (resp. time) by an oracle TM. $\FPspace^2$ (resp. $\FPtime^2$) is the class of functions computable in second-order polynomial space (resp. time) by an oracle TM.
\end{definition}

We now define an adapted notion of reduction, with respect to oracle TMs (Definition \ref{defType2}).
Following \cite{KawCook12}, the reductions work on the set of regular functions $\textbf{Reg}$. 
For decision problems, we assume that the represented set $S$ 
is $\{0,1 \}^{\Sigma^{*}}$. We denote by $\textbf{Pred}$ the set of $\{0,1\}$-valued regular functions, as in \cite{KawCook12}:

\begin{definition}[Type-2 reductions \cite{KawCook12}]\label{defTopoRed}
	Let $\textbf{A}$ and $\textbf{B}$ be decision problems, or multi-functions from \emph{$\textbf{Reg}$} to \emph{$\textbf{Pred}$}. We write $\textbf{A} \leq_m^2 \textbf{B}$ if there exists $\mathfrak{s}, \mathfrak{t} \in \FPtime^2$ such that for any $\phi\in \dom(\textbf{A})$, we have $\mathfrak{s}(\phi)\in\dom(\textbf{B})$ and for $\theta\in  \textbf{B}\left(\mathfrak{s}(\phi)\right)$ the function $\theta \circ \mathfrak{t}(\phi)\in\textbf{A}\left(\phi\right)$.
\end{definition}

Intuitively, for $\textbf{A}$ and $\textbf{B}$ two problems, 
we have that $\dom(\textbf{A})$ is the set of functions, which are inputs of the problem $\textbf{A}$,
and $\textbf{A}\left(\phi\right)(x) = \left( \textbf{B}\left(\mathfrak{s}\left(\phi\right)\right)\circ \mathfrak{t}\left(\phi\right)\right)(x) $.
The function $\mathfrak{s}$ transforms the functional instances, and $\mathfrak{t}$ does the ``pre-treatement'' at each step of the computation.

\begin{definition}[$\Pspace^2$-$\leq_m^2$-completeness]\label{defPspaceCompKC}
	A problem $P$ is $\Pspace^2$-$\leq_m^2$-complete iff
		$P\in \Pspace^2$ and
		$P$ is $\Pspace^2$-$\leq_m^2$-hard: for all $P'\in \Pspace^2$, $P' \leq_m^2 P$.
\end{definition}

This construction seems somewhat artificial, but it is invariant under replacing representations with equivalent ones, and strong enough to capture various classical problems (\cite{KawCook12}), including:
\begin{problems}
	{Basic $\Pspace^2$}{ $<M, \bar \mu, \phi>$, with $M$ an oracle TM, $\bar\mu \in \textbf{Reg}$ such that $\bar\mu(u) = 0^{\mu(|u|)}$ where $\mu:\N\rightarrow\N$ a non-decreasing polynomial bounding the space of $M$, $\phi\in\textbf{Reg}$; a string $u$.}{Does $M$, with an oracle $\phi$, accept $u$?}
\end{problems}	

where, for $s$ a string, $s^{k}$ is the string with $s$ repeated $k$ times.

\begin{proposition}[\cite{KawCook12}]
	\textbf{Basic $\Pspace^2$} is  $\Pspace^2$-$\leq_m^2$-complete.
\end{proposition}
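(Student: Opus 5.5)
We prove the two parts of the completeness claim in turn: membership of \textbf{Basic $\Pspace^2$} in $\Pspace^2$, then $\leq_m^2$-hardness. The hardness argument is a generic reduction, the second-order analogue of the classical proof that the bounded acceptance problem for space-bounded machines is $\Pspace$-complete.

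\emph{Membership.} Given $\langle M,\bar\mu,\phi\rangle$ and $u$, we simulate $M$ on $u$ with oracle $\phi$ using a universal oracle machine. Each oracle query of $M$ is answered by querying $\phi$ directly.
\begin{itemize}
\item From $\bar\mu$ we read off $\mu(|u|)$ as $|\bar\mu(u)|$. This is a single oracle call whose answer has length $|\bar\mu|(|u|)$.
\item We use $\mu(|u|)$ as the space budget and reject if $M$ ever exceeds it.
\item We add a step counter bounded by $|Q_M|\cdot\mu(|u|)\cdot|\Sigma|^{\mu(|u|)}$, so that looping computations are cut off and rejected. This counter needs only $O(\mu(|u|))$ bits.
\end{itemize}
The oracle tape holds query strings of length at most $\mu(|u|)$ and answers of length at most $|\phi|(\mu(|u|))$. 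The total space is therefore bounded by a second-order polynomial in $|\bar\mu|$, $|\phi|$ and $|u|$, namely $O(|M|\cdot|\bar\mu|(|u|)+|\phi|(|\bar\mu|(|u|)))$. Here we use that the length of $M$ is encoded in the input. Strictly speaking, $M$ should be packed into the functional input, for instance as a constant-length prefix; we handle this through the pairing of regular functions that \cite{KawCook12} uses for multi-argument inputs.

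\emph{Hardness.} Let $A\in\Pspace^2$ be decided by an oracle TM $M_A$ within space $P(|\phi|)(|u|)$, where $P$ is a second-order polynomial. We build the two maps required by Definition~\ref{defTopoRed}.
\begin{itemize}
\item $\mathfrak{s}(\phi)$ is the regular function encoding the triple $\langle M_A,\bar\mu,\phi\rangle$, with $\bar\mu(u)=0^{P(|\phi|)(|u|)}$.
\item $\mathfrak{t}$ is essentially the identity on strings, up to the tagging required by the pairing.
\end{itemize}
By construction, $\theta\circ\mathfrak{t}(\phi)$ then agrees with $A(\phi)$ for the answer $\theta$ of \textbf{Basic $\Pspace^2$}.

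The main obstacle is that $\mathfrak{s}$ must lie in $\FPtime^2$ and must output regular functions. Computing $\bar\mu(u)$ means evaluating $P(|\phi|)(|u|)$. This requires computing $|\phi|(m)$, which is the length of $\phi$ on a length-$m$ string. Since $\phi$ is regular, $|\phi|(m)=|\phi(0^m)|$, so it takes one oracle query. Nested applications in $P$ are evaluated recursively, with one query per nesting level. Writing $0^{P(|\phi|)(|u|)}$ then takes time polynomial in that same second-order polynomial, which is acceptable in $\FPtime^2$. Regularity of $\bar\mu$ follows because $P$ is monotone in both arguments. We must also check that the problem is well defined, that is, that $\mu$ is indeed a non-decreasing function of $|u|$ for fixed $\phi$, which holds by monotonicity.

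Finally, we verify that the paired function $\langle M_A,\bar\mu,\phi\rangle$ is regular. For this we choose a length-monotone pairing, for example by padding each component to the maximum length, exactly as in \cite{KawCook12}.
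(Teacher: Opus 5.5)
The paper does not prove this proposition; it quotes it from \cite{KawCook12}. Your argument is the standard generic one. For membership, you read the budget $|\bar\mu(u)|$ and simulate $M^\phi$ on $u$ inside it. For hardness, you send $\phi$ to $\langle M_A,\bar\mu,\phi\rangle$ with $\bar\mu(u)=0^{P(|\phi|)(|u|)}$ and take $\mathfrak{t}$ to be the identity. The technical points you isolate are the right ones: computing $|\phi|(m)$ as $|\phi(0^m)|$, regularity of $\bar\mu$ from monotonicity of $P$, a length-monotone pairing, and absorbing $|M|$ into the size of the packed input.

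One step fails, however, against the problem as the paper states it. \textbf{Basic $\Pspace^2$} requires $\mu$ to be a non-decreasing \emph{polynomial}, and Definition~\ref{defTopoRed} requires $\mathfrak{s}(\phi)\in\dom(\textbf{B})$. Your $\mu(n)=P(|\phi|)(n)$ is in general not a polynomial: take $P(L)(n)=L(n)$ and $|\phi|(n)=2^n$. Moreover, with $\mathfrak{t}$ the identity and the oracle left equal to $\phi$, no polynomial $\mu$ can work in general, since a second-order polynomial-space machine may have to write a query of length $|\phi|(|u|)$ (for instance, to decide the first bit of $\phi(\phi(u))$). You only checked that $\mu$ is non-decreasing. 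There are two ways to fix this.
\begin{itemize}
\item Say explicitly that you read $\mu$ as an arbitrary non-decreasing function. Supplying $\mu$ through the functional input $\bar\mu$ makes that reading natural, and your membership argument never uses polynomiality.
\item Move the second-order growth into $\mathfrak{t}$. Let $\mathfrak{t}(\phi)(u)=u\#0^{P(|\phi|)(|u|)}$, suitably encoded; this is in $\FPtime^2$ by exactly the evaluation of $P$ you describe. Let $M'$ reject malformed inputs and otherwise simulate $M_A$ on $u$, rejecting if it would visit more than $|w|$ cells on input $w$. Then $\mathfrak{s}(\phi)=\langle M',\bar\mu,\phi\rangle$ with a linear $\mu$ is a legal instance, and $M'^\phi$ accepts $\mathfrak{t}(\phi)(u)$ iff $M_A^\phi$ accepts $u$.
\end{itemize}

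A minor point on membership: your cutoff $|Q_M|\cdot\mu(|u|)\cdot|\Sigma|^{\mu(|u|)}$ undercounts the configurations of an oracle machine. It ignores the input-head position and the contents and heads of the query and answer tapes, so it could abort an accepting run. Counting all configurations still needs only second-order polynomially many bits, so your space bound survives.
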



		\subsection{Previous results on complexity over the reals}
		
		The discrete derivative is denoted by $\delta$, defined as $\frac{\delta f}{\delta n}(n,\tu x) = f(n+1, \tu x) - f(n,\tu x)$, and the (continuous) derivative by $\partial$. A \emph{discrete ODE} is an ODE generated by a discrete derivative.
		We introduce algebraic characterisations of $\FPtime$ and $\FPspace$, given in \cite{BlancBournezMFCS2023vantardise}. They allow us to have a proper equational representation for the execution of a TM. 
		The main model of discrete ODEs to characterise $\FPtime$ are Linear Length ODEs. 	
		A length ODE is a particular case of a discrete ODE, where instead of having a derivative step of size 1, the derivative steps is of size the \emph{binary} length of $n$. The length function $\ell: \N \to \N$ maps some integer to the length of its binary representation. Thus, the derivative steps are of size the successive powers of 2. A more detailed definition is in Appendix \ref{lenODE}.		
		The size of the solution of a linear length ODE can never exceed a polynomial, as proven in \cite{BlancBournezMCU22vantardise}, ensuring the $\FPtime$-membership of solving such ODEs. 

Also, $\FPtime$ and $\FPspace$ are not stable under the general notion of limit. 
It means that the limit of a sequence of functions computable in polynomial time (resp. space) is not necessarily computable in polynomial time (resp. space). Thus, the notion of limit must be restricted by imposing that the constructed approximations converge \emph{exponentially} quickly to a solution:

\begin{theorem}[Characterisation of $\FPtime$, \cite{BlancBournezMFCS2023vantardise}]\label{thCaracTimeDis}\label{thmCharacPTIME}
	Let $\manonclasslighttanhlim = [\mathbf{0},\mathbf{1},\projection{i}{k}, \ell, \plus, \minus,\tanh,\frac{\cdot}{2},\frac{\cdot}{3};$ ${\composition, \textit{Linear Length ODE}, \MANONlim}],$ where $\projection{i}{k}$ outputs the $i^{th}$ coordinate of a vector of size $k$,
	$\tanh$ is the hyperbolic tangent function,
	$\frac{\cdot}{2}: \R \to \R$ divides by $2$ its argument (similarly for $\frac{\cdot}{3}$). The operator $\MANONlim$ is the \emph{effective limit}: given $\tilde{\tu f}:\N^{d+1} \to \R^{d'} \in \manonclasslighttanh =  \left(\manonclasslighttanhlim\backslash\{\MANONlim\}\right)$ such that
	for all $\tu m \in \N^{d}$, $n \in \N$,
	$\|\tilde{\tu f}(\tu m,~2^{n}) - \tu f(\tu m) \| \le 2^{-n},$ then 
	$\MANONlim(\tilde{\tu f})$ is the (uniquely defined) corresponding function  $\tu f: \N^{d} \to \R^{d'}$.
	 We have: 
	$\manonclasslighttanhlim \cap \R^{\N^{d} \times \R^{d'}}  = \FPtime \cap \R^{\N^{d} \times \R^{d}}.$
\end{theorem}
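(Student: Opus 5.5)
We prove the two inclusions of Theorem~\ref{thCaracTimeDis} separately. Both rest on the fact that linear length ODEs faithfully encode polynomial-time computation, as recalled from \cite{BlancBournezMCU22vantardise}.

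\textbf{Soundness ($\manonclasslighttanhlim \subseteq \FPtime$).} We argue by structural induction on the algebra, in the sense of computable analysis.

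For the base functions $\mathbf{0},\mathbf{1},\projection{i}{k},\ell,\plus,\minus,\frac{\cdot}{2},\frac{\cdot}{3},\tanh$, each is polynomial-time computable to precision $2^{-n}$. For $\tanh$, this follows from standard series or Newton-type evaluation; its derivative is bounded by $1$, so it is $1$-Lipschitz and has a polynomial modulus of continuity.

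Composition preserves polynomial-time computability. The reason is that polynomial-time computable real functions have polynomial moduli of continuity on the relevant domains, and these moduli compose.

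For the linear length ODE scheme we rely on \cite{BlancBournezMCU22vantardise}. There the solution is written as an explicit sum over the $\ell(n)$ derivative steps, each step multiplying by an essentially linear matrix built from previously obtained functions. This gives a size bound polynomial in $\ell(n)$, hence polynomial in the input size. To get a real-valued solution to precision $2^{-m}$, we evaluate the unrolled recurrence with a working precision of $m + \mathrm{poly}(\ell(n))$ bits. This absorbs the polynomially bounded growth of the coefficients.

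For $\MANONlim$, to compute $\tu f(\tu m)$ to precision $2^{-n}$, we compute $\tilde{\tu f}(\tu m, 2^{n+1})$ to precision $2^{-(n+1)}$. Since $\ell(2^{n+1})$ is linear in $n$, and the argument $2^{n+1}$ enters the algorithm only through its binary length, this stays polynomial in $n$ and $|\tu m|$.

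\textbf{Completeness ($\FPtime \subseteq \manonclasslighttanhlim$).}

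First we handle the discrete case. Given a TM running in time $p(|\tu m|)$, we encode a configuration as a tuple of rationals: the tape halves are base-$\base$ expansions using digits $\symboleun,\symboledeux$ (with $\encodagemot$), together with the state and the head symbol. One step of the machine is then a function $\Next$ built from extracting the leading digit, a finite case distinction, and affine updates. We iterate $\Next$ $p(|\tu m|)$ times with a linear length ODE, using the observation that $\ell$ of a suitably chosen integer $2^{p(\ell(\tu m))}$-like argument yields polynomially many steps. Composition with polynomially many nested length-ODE counters realises any polynomial iteration count.

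\textbf{Main obstacle.} The hard step is that digit extraction and case selection are discontinuous, while the algebra only contains continuous functions ($\tanh$). We handle this by using robust approximations $\sigtanh$ and $\TTtanh$ built from $\tanh$. These are accurate to $2^{-k}$ whenever the argument is at distance at least a constant from the thresholds. The digit choice $\{\symboleun,\symboledeux\}$ in base $\base$ guarantees this separation. The errors introduced at each step must then be controlled by a contraction or error-correction argument, so that they do not accumulate over polynomially many steps. We choose the precision parameter $k$ polynomial and apply the rounding at every step to re-snap the encoded configuration.

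Finally, for real inputs or real-valued outputs, the function computed is obtained as the limit of the rational approximations produced by the machine at precision $2^{-n}$. These converge exponentially fast, so $\MANONlim$ yields $\tu f$. This gives equality on $\R^{\N^d\times\R^{d'}}$.
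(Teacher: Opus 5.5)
The paper gives no proof of this theorem. It is imported from \cite{BlancBournezMFCS2023vantardise}, and only the construction of $\overline{\Next}$ is reproduced (Lemma~\ref{lemNext}, Appendix~\ref{AppNext}). Your route is the one of the cited work, and your soundness direction is fine. For functions $\N^d\to\R^{d'}$, your completeness sketch is also acceptable, up to routine loops: writing the bits of $\tu m$ on the tape and decoding the output dyadic.

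The genuine gap is the real arguments $\tu x\in\R^{d'}$, which your last paragraph settles in one sentence. A polynomial-time $\tu f$ with real arguments is computed by an oracle machine reading dyadic approximations of $\tu x$. To simulate it in the algebra, you must place a valid encoding of such an approximation on the simulated tape, using only functions that are continuous (indeed analytic) in $\tu x$. This is impossible exactly, since a continuous map from $\R$ to a discrete set of encodings is constant. It also fails approximately. The robustness of $\sigtanh$ that you invoke comes from the separation of valid encodings (Lemma~\ref{lemGap}, via the digits $\symboleun,\symboledeux$). That separation holds for configurations the simulation produces itself, not for an arbitrary input. A real $x$ can sit arbitrarily close to a rounding threshold. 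There the extracted digit is an intermediate value, the simulated point is not a configuration, and $\tilde{\tu f}(\tu m,2^n,\tu x)$ can be far from $\tu f(\tu m,\tu x)$. So the hypothesis of $\MANONlim$ fails whatever precision you choose.

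The missing ingredient is the polynomial modulus of continuity of $\tu f$, as in the characterisation recalled in Appendix~\ref{secCA}. One way: round each coordinate of $\tu x$ on two grids shifted by half a step, and run the simulation on each combination. Then blend the outputs with continuous weights built from $\sigtanh$ that vanish wherever the corresponding rounding is unreliable. Every rounding with non-negligible weight is then a genuine grid point close to $\tu x$. The simulation on it is correct, its output is close to $\tu f(\tu m,\tu x)$ by the modulus, and so is the blend.

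Relatedly, your last step applies $\MANONlim$ to a function with real parameters, which the displayed operator (defined only on $\tilde{\tu f}:\N^{d+1}\to\R^{d'}$) does not allow. Moreover, the number of simulated steps is an integer argument of a length ODE, so it cannot grow with the magnitude of $\tu x$. You therefore have to specify on which region the bound $2^{-n}$ is demanded, e.g.\ $\|\tu x\|\le 2^n$ or a compact domain.

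Two smaller points:
\begin{itemize}
\item Iterating $\overline{\Next}$ with a linear length ODE requires checking that it is essentially linear: tape components are multiplied by $4$ or $1/4$, and the coefficients depend on the configuration only through sigmoids. This is precisely why $\xi$ and $\TTtanh$ are used.
\item Nothing re-snaps the whole configuration. Only the state and the scanned digit are corrected, while tail errors are amplified by up to $4$ per step (Lemma~\ref{lemAutoCor}). What absorbs them is a precision $2^{-m}$ with $m$ polynomial in the running time, not a contraction.
\end{itemize}
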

	
The effective limit of a sequence of polynomial-time (resp. -space) functions is computable in polynomial time (resp. space) (\cite{BlancBournezMCU22vantardise, BlancBournezMFCS2023vantardise}).
 All other basic functions are defined as usual, considered here as functions in ${\R}^{\R} $.

A characterisation of $\FPspace$ based on \emph{continuous} ODEs also exists:

\begin{definition}[Robust (continuous) ODE, \cite{ICALP24}]\label{defRobODE}
	
	A function $\tu f : \R \rightarrow \R$ is robustly ODE definable (from $\tu g$, $\tu{h}$, and dynamic $\tu u$) if:
	\begin{itemize}
		\item it corresponds to the
		solution of: $\tu f(0,\tu x) 
			= \tu g(\tu x)   \text{ and } \frac{\partial \tu f(t,\tu x)}{\partial t} 
			=   \tu u(\tu f(t,\tu x), \tu h(t,\tu x),
			t,\tu x);$
		\item there is some rational $\Delta >0$, 	and some polynomial $p$ such that  the previous schema is (polynomially) numerically stable on $[0, \Delta]$:
		for all integer $n$, considering $\epsilon(n)=2^{-p\left(n+\ell\left(\lceil\tu x\rceil\right)\right)}$ we can compute $\tu f(t,\tu x)$ by working with precision $\epsilon(n)$.
		Considering any solution of  
			$\tilde{\tu x} =_{\epsilon(n)} \tu x \quad \text{and} \quad \tilde{\tu h}(t,\tilde{\tu x}) =_{\epsilon(n)} {\tu h}(t,\tilde{\tu x})$,
		 and 
		$
		\tilde{\tu f}(0,\tilde{\tu x}) =_{\epsilon(n)}
		\tu g(\tu x)$  and 
		$
		\frac{\partial \tilde{\tu f}(t,\tilde{\tu x})}{\partial t} =_{\epsilon(n)}
		\tu u(\tilde{\tu f}(t,\tilde{\tu x}), \tilde{\tu h}(t,\tilde{\tu x}),
		t,\tilde{\tu x}) 
		$
		then
		$\tilde{\tu f}(t,\tilde{\tu x}) =_{\epsilon(n)} \tu f(t,\tu x)$ when  $0 \le t \le \Delta$;
		
		\item for $t \ge \Delta$,  we can compute $\tu f(t,\tu x)$ by computing some approximation $\widetilde{\tu f(t/2,\tu x)}$ of 
		$\tu f(t/2,\tu x)$ at precision  $\epsilon(n)$, i.e. we can compute $\Phi(\tu g(\tu x),t/2)$, by computing some approximation of  $\Phi(\widetilde{\tu f(t/2,\tu y)},t/2)$, working at precision  $\epsilon(n)$.
	\end{itemize}	
\end{definition}

\begin{theorem}[$\FPspace$ \cite{ICALP24}] \label{th:main:twopSpace} 
	$\contClasslim \cap \R^{\N^{d} \times \R^{d'}}= \FPspace \cap \R^{\N^{d} \times \R^{d}}$, where $\contClasslim = [0, 1,\pi_i^k, +,$ $ -, \times, \tanh, \cos, \pi, \frac{\cdot}{2}, \frac{\cdot}{3};$ $\composition,$ $ \robODE, \MANONlim].$
\end{theorem}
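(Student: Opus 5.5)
The statement has two inclusions, and I would prove them separately, working on the discrete part of the domain ($\N^d$) with real outputs, as in Theorem \ref{thmCharacPTIME}.

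\textbf{Soundness ($\contClasslim \subseteq \FPspace$).} I would argue by structural induction on the algebra.
\begin{itemize}
\item \emph{Base functions.} $0,1,\pi_i^k,+,-,\times,\tanh,\cos,\pi,\frac{\cdot}{2},\frac{\cdot}{3}$ are all polynomial-time computable in the sense of computable analysis on bounded-size inputs, hence in $\FPspace$.
\item \emph{Composition.} Polynomial-space functions have polynomial moduli of continuity on inputs of size $n$. It therefore suffices to compute the inner function at a polynomially larger precision, which keeps the space polynomial.
\item \emph{Effective limit.} $\MANONlim$ is handled exactly as in the $\FPtime$ case recalled after Theorem \ref{thmCharacPTIME}: to get precision $2^{-n}$, evaluate $\tilde{\tu f}(\tu m,2^{n+1})$ at precision $2^{-(n+1)}$.
\item \emph{Robust ODE, short times.} On $[0,\Delta]$ the numerical-stability clause of Definition \ref{defRobODE} lets me run an explicit Euler (or Taylor) scheme. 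I would work with rounded values of precision $\epsilon(n)=2^{-p(n+\ell(\lceil \tu x\rceil))}$ and a step size polynomially small in $\epsilon(n)$. The stability hypothesis guarantees that the computed trajectory, which is an $\epsilon(n)$-perturbed solution, stays $\epsilon(n)$-close to $\tu f$. Only the current state, of polynomial size, has to be stored.
\item \emph{Robust ODE, long times.} For $t\ge\Delta$ I would use the halving clause recursively: computing $\tu f(t,\tu x)$ reduces to computing $\tu f(t/2,\cdot)$ and then applying the flow for time $t/2$ from an approximated point. This gives a recursion of depth $O(\log t)$, which is polynomial in the input size, and each level stores polynomially many bits. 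This is the usual Savitch-style argument and yields polynomial total space.
\end{itemize}

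\textbf{Completeness ($\FPspace \subseteq \contClasslim$).} Take $f\in\FPspace$ computed by a TM $M$ running in space $s(n)$, with $s$ a polynomial, and in time at most $2^{q(n)}$.
\begin{itemize}
\item \emph{Encoding configurations.} I would encode a configuration of $M$ as a tuple of reals, with tape contents written as base-$\base$ fractions over the digits $\{\symboleun,\symboledeux\}$, in the style of $\encodageconfiguration$. The digit gaps then leave room for error correction.
\item \emph{One-step map.} Using $\tanh$-based smooth approximations of sign, selection ($\sigtanh$, $\TTtanh$) and multiplication, I would build a function $\Next$ in the algebra. It maps any point within $2^{-r(n)}$ of an encoded configuration to a point within $2^{-r(n)}$ of the next configuration; that is, it is contracting around valid encodings.
\item \emph{Iterating to exponential length.} I would realise the iteration $\Next^{2^{q(n)}}$ by a robust ODE. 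The standard construction uses two variables that alternately track a target $\Next(z)$ and copy it back; $\cos$ (or $\tanh$ of a periodic-like signal) provides the switching clock. Time $t$ then corresponds to $t$ machine steps, and the halving clause of Definition \ref{defRobODE} corresponds to splitting $2^{q(n)}$ iterations into two halves.
\item \emph{Reading off the output.} I would decode the output tape with $\tanh$-based digit extraction and obtain exact real values through $\MANONlim$, taking precision parameter $2^n$.
\end{itemize}

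\textbf{Main obstacle.} The hard step is proving that this iteration ODE satisfies the polynomial numerical-stability clause over exponentially long times. Errors must not accumulate across $2^{q(n)}$ steps. My plan is to show that each half-period of the clock strictly contracts deviations back toward the encoded configuration. That requires the $\tanh$ approximations to be sharp to within $2^{-\mathrm{poly}(n)}$, which in turn requires polynomially large gain constants, themselves built from $+,\times$ and $\ell$. With such a contraction, an $\epsilon(n)$-perturbation of the ODE yields trajectories that remain $\epsilon(n)$-close at every integer time. This gives both the $[0,\Delta]$ stability and the compatibility with halving.
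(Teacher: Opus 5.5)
\textbf{Gap: the error-control step.} Your overall architecture matches the route the paper relies on. The result is imported from \cite{ICALP24}, and its ingredients are recalled in Appendix \ref{detailBra}: soundness comes from the numerical-stability and halving clauses of Definition \ref{defRobODE}, and completeness from $\bar{\Next}$, the alternating ODE \eqref{code} and $\MANONlim$. The gap is in the step you yourself single out as the main obstacle. You attribute the non-accumulation of errors to $\Next$ being contracting around valid encodings. The $\tanh$-based $\bar{\Next}$ on the base-$\base$ encoding is not contracting. A right move replaces $\bar r$ by $\xi(\base\bar r)$, which multiplies any deviation on that component by $4$. This is exactly item~1 of Lemma \ref{lemAutoCor}: $\left|\Next(z)-\encodageconfiguration(\delta_M(C))\right|\le 4\left|z-\encodageconfiguration(C)\right|+2^{-m}$. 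Sharpening the $\tanh$ approximations only lowers the additive $2^{-m}$. The cubic tracking only drives $\tu y_1$ to its target whatever its initial value (Lemma \ref{lem:quatre:cinq}). But that target is computed from the memory variable $\tu y_2$, whose deviation is inherited and multiplied by up to $4$. So nothing in the construction you describe makes a half-period contract deviations back toward the encoded configuration.

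\textbf{Missing ingredient and fixes.} What is missing is the rounding function $r$ of \eqref{code}, realised as $r_{\epsilon'}$, applied to the memory variable (and to $t$) inside the target $\tu G(r(\tu y_2),r(t),\tu x)$. Because $r_{\epsilon'}$ is almost constant on $[k\eta-\frac{\eta}{4},k\eta+\frac{\eta}{2}]$, any deviation below $\frac{\eta}{4}$ is reset to $\epsilon'$ at every step (Lemma \ref{lemInterStep}). Here $\eta$ is the lattice spacing of the encodings, of order $4^{-S}$ (Lemma \ref{lemGap}, Remark \ref{explicationGap}). The inductions of Proposition \ref{the:manon:trick} and Lemma \ref{lemAmeSin} are then on the per-step error, independent of the number of steps. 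Alternatively, you could drop the non-accumulation claim and use the additive bound of Lemma \ref{lemAutoCor}. There, a deviation injected at step $s$ is scaled by $4^{h(t)-h(s)}\le 4^{S}$, so $T_{\max}$ steps cost a factor $(T_{\max}+1)4^S$, absorbed by polynomially many extra bits as in Corollary \ref{corPolyRob}. You would then still have to check compatibility with the halving clause.

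\textbf{Two smaller corrections.}
\begin{itemize}
\item $\ell$ is not a base function of $\contClasslim$. The gains must come from the precision argument $2^n$ of $\MANONlim$ (and the inputs) through $\times$, as the paper does with $\relutanh$.
\item The switch $\theta_{\epsilon'}$ and the constant $\mathfrak{c}$ need gains of order $2^{\mathrm{poly}(n)}$, not polynomial ones. The reason is that $\relu$ has no margin at $0$ (Lemma \ref{lemmeRelu} uses $2^{m+2}$), and Lemma \ref{lem:quatre:cinq} requires $\mathfrak{c}\ge 1/(2\gamma^2\int\phi)$.
\end{itemize}
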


	\subsection{Properly encoding real numbers into a TM}\label{subsecEnco}

	A \emph{configuration} $C(q,l,r)$ of an oracle TM is such that $l,r \in \Sigma^{\omega}$ are 
	words over $\Sigma=\{0, \symboleun,\symboledeux\}$, $q \in Q$ denotes the internal state of $M$. The configuration of a bi-infinite tape TM $M$ is encoded by real numbers using radix \base{} encoding, but using only digits $\symboleun$,$\symboledeux$.
	Write: $\encodagemot: \Sigma^{\omega} \to \R$ for the function mapping a word $w=w_{0} w_{1} w_{2} \dots$ to the dyadic $\encodagemot(w)= \sum_{n \geq 0} w_n \base^{-(n+1)}$. For a configuration $C = (q, l, r)$, we write $\encodageconfiguration(C)$ for $(q, \encodagemot(l), \encodagemot(r))$. The proof of the following lemma is direct from the definition of $\encodagemot$:
	\begin{lemma}\label{lemGammaWordPtime}
		For $w = w_0 w_1 w_2 \cdots \in\Sigma^\omega$, for $n \in \N$, $\encodagemot(w_0w_1\cdots w_n)$ is computable in polynomial time in n.
	\end{lemma}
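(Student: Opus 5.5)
The plan is to compute the finite sum $\encodagemot(w_0\cdots w_n)=\sum_{k=0}^{n} w_k \base^{-(k+1)}$ exactly, as a rational number, and to bound the cost of doing so. Since $\base=4=2^2$ and every digit $w_k$ lies in $\{0,\symboleun,\symboledeux\}$, each term $w_k 4^{-(k+1)}$ is a dyadic rational whose binary expansion ends at position $2(k+1)$. Consequently the whole sum is a dyadic of the form $N/4^{n+1}$, where $N=\sum_{k=0}^{n} w_k 4^{n-k}$ is an integer with at most $2(n+1)+2$ bits.

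The computation itself goes digit by digit. The first step is to produce $N$ with a Horner scheme: start with $N_{-1}=0$ and set $N_k = 4N_{k-1}+w_k$ for $k=0,\dots,n$. Each update is a shift by two bits followed by the addition of a constant smaller than $4$. The second step is a bound on intermediate sizes, proved by induction: $N_k<4^{k+1}$, so $N_k$ has $O(k)$ bits. Hence each step costs $O(n)$ bit operations, and the $n+1$ steps together cost $O(n^2)$. The final step is to output the pair $(N, 2(n+1))$, which represents $N\cdot 2^{-2(n+1)}$.

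If instead the encoding is wanted in binary positional form, no arithmetic is needed at all. One simply writes each base-$4$ digit $w_k$ as the two-bit block $00$, $01$ or $11$ (for $0$, $\symboleun$, $\symboledeux$ respectively) after the binary point. This is linear time in $n$.

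There is no genuine obstacle here. The only point that needs care is the size bound on $N_k$, which guarantees that the representation, and therefore the running time, stays polynomial (indeed quadratic) in $n$. It also matters that the result is an exact dyadic, so no approximation error has to be controlled. In the sense of computable analysis, we can therefore output this exact value for any requested precision $2^{-m}$.
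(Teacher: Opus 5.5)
Your proposal is correct and takes the same route as the paper, which only remarks that the lemma is direct from the definition of $\encodagemot$ as a finite base-$4$ sum. You make that computation explicit, via the Horner scheme with $N_k<4^{k+1}$ or via the two-bits-per-digit binary expansion; your bound of $2(n+1)+2$ bits is slightly loose, since $N<4^{n+1}$ already gives at most $2(n+1)$ bits, but this is harmless.
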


\begin{lemma}[\cite{BlancBournezMFCS2023vantardise}]\label{lemNext}
	We can construct some function $\bar {\Next}$ in $\manonclasslighttanh$ simulating one step of $M$: given a configuration $C$, writing $C'$ for the next one, for all integer $m$, $\| \bar{\Next}(2^{m},\bar C) - \bar C' \| \le 2^{-m}$, for $\bar C=\encodageconfiguration(C)=(q, \bar l,\bar r) \in \N \times \R^{2}$, where $\bar r= \encodagemot(r)$ and $\bar l=\encodagemot(l)$. 
\end{lemma}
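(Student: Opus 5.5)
The plan is to build $\bar{\Next}$ as a finite, explicitly written combination of basic functions of $\manonclasslighttanh$, one piece for each entry of the transition table of $M$. The precision parameter $2^{m}$ is used to sharpen smooth approximations of the discrete tests.

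\textbf{Step 1: reading the symbol under the head.} With the encoding $\encodagemot$, the symbol under the head is the first digit of $\bar r$, and digits lie in $\{0,\symboleun,\symboledeux\}$. Hence $4\bar r$ lies in $[0,1)$ for the digit $0$ (blank tail), in $[1,2)$ for $\symboleun$, and in $[3,4]$ for $\symboledeux$. There is a gap of width at least $1$ between the last two cases, and a boundary case $4\bar r \in [1, 4/3]$ for $\symboleun$ followed by $\symboleun$s, which I will handle by choosing the thresholds at $1/2$ and $5/2$. I introduce a smooth step $\sigtanh(2^m,x)$ built from $\tanh$, for instance $\frac{1}{2}(1+\tanh(2^m x))$ (with a polynomial rescaling of $m$). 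It is within $2^{-m}$ of the Heaviside function whenever $|x|\geq 1/4$. The key technical point is that multiplying the argument of $\tanh$ by the integer input $2^{m}$ is itself obtainable in $\manonclasslighttanh$. I plan to get it from a linear length ODE in the variable $2^m$, whose solution is linear in $x$ and doubles at each length step. Composing shifted copies of this step function gives approximate indicators $\chi_a(2^m,\bar r)$ of the current symbol $a$. Similarly, the state $q\in\N$ is bounded by $|Q|$, so indicators $\chi_{q'}(2^m,q)$ are obtained from differences of steps at $q'\pm 1/2$.

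\textbf{Step 2: the affine updates.} For a transition $(q,a)\mapsto(q',b,\text{move})$, the new configuration is an affine function of $(\bar l,\bar r)$ with dyadic coefficients. For a right move, $\bar l' = (b+\bar l)/4$ and $\bar r' = 4\bar r - a$. A left move is symmetric, using the first digit of $\bar l$, decoded as in Step 1. These maps use only $\plus,\minus,\frac{\cdot}{2}$ and constants. I then write
\[
\bar{\Next}(2^m,\bar C)=\sum_{(q_0,a)} \mathrm{sel}\bigl(\chi_{q_0}(2^m,q)\,\chi_a(2^m,\bar r),\ U_{q_0,a}(\bar C)\bigr),
\]
where $U_{q_0,a}$ is the affine update. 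Here $\mathrm{sel}(c,y)$ must realise $c\cdot y$ for an approximately boolean $c$ and a bounded $y\in[-4,4]$, without general multiplication, which is not a base function. I would realise it as a $\TTtanh$-style gadget, for example $\mathrm{sel}(c,y)=\relutanh(y+8c-8)-\relutanh(-y+8c-8)$, with the ReLU approximated by $\tanh$ at precision $2^{-m}$. Products of two indicators are replaced in the same way by a thresholded sum $\sigtanh(2^m,\chi+\chi'-3/2)$.

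\textbf{Step 3: error bookkeeping, the main obstacle.} Each gadget contributes an error of order $2^{-m}$ times a constant depending only on $M$ (number of transitions, bounds $|Q|$ and $4$ on the values). Running every gadget at precision $2^{-(m+c)}$ for a constant $c$, i.e.\ composing with $2^m\mapsto 2^{m+c}$, which is a constant multiplication, makes the total error at most $2^{-m}$. The delicate points are the following:
\begin{enumerate}
\item the boundary encodings where $4\bar r$ sits exactly at $1$ or near $4/3$, which I handle by choosing the thresholds with margin at least $1/4$;
\item obtaining the amplification by $2^m$ inside $\manonclasslighttanh$ through a linear length ODE, since this is where the absence of multiplication bites.
\end{enumerate}
I expect the second point to be the main obstacle. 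I would first try the ODE $\frac{\delta F}{\delta \ell}(n,x)=F(n,x)$ with $F(0,x)=x$, which is linear in $F$ and yields $F(2^m,x)=2^{m+1}x$, up to an indexing convention.
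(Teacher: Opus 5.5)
Your proposal is correct and follows essentially the same route as the paper's proof. You decode the scanned symbol and the state with $\tanh$-based sigmoids thresholded at $1/2$ and $5/2$ (the paper's $\sigma=\sig(1/4,3/4,\cdot)+2\,\sig(9/4,11/4,\cdot)$), write each branch of the transition table through the shift formulas with power-of-$4$ coefficients, glue the branches with a smooth selector (the paper's $\sendtanh$ and $\TTtanh$), and absorb the finitely many gadget errors by working at precision $2^{-(m+c)}$. The differences are cosmetic: you use $4\bar r-a$ on each branch instead of $\xi(4\bar r)$, an AND by a thresholded sum instead of $\sendtanh$ on pairs, and an explicit linear length ODE for the factor $2^m$, which the paper imports from \cite{BlancBournezMFCS2023vantardise}. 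One caveat: the paper's $\relutanh(z,x)=x\,Y(x,z)$ itself contains a product, so you should say how your ReLU approximation is obtained in $\manonclasslighttanh$, e.g.\ through a linear length ODE with a parameter-dependent coefficient, just as you do for the factor $2^m$.
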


We assume that $M$ halts on every input within $T_{\max} = |Q|\cdot S\cdot 3^{2S}$ steps (a machine working in space $S$ can be clocked within space $O(S)$), and that it erases its tape before halting, so that it has exactly two halting configurations, $C_{\text{accept}} = (q_{\text{accept}}, 0^\omega, 0^\omega)$ and $C_{\text{reject}} = (q_{\text{reject}}, 0^\omega, 0^\omega)$, encoded by $(q_{\text{accept}},0,0)$ and $(q_{\text{reject}},0,0)$. On a halting state $q$, we let $\Next(q,l,r) = (q, l/4, r/4)$, the branch being selected by $\TTtanh$ (the continuous approximation of an \textit{if}) on the state component. The function $\overline{\Next}$ remains in $\manonclasslighttanh_\text{light}$ and Lipschitz, and Lemma \ref{lemNext} still holds.
A proof of Lemma \ref{lemNext} is in Appendix \ref{AppNext}. Note that the construction of $\bar{\Next}$ involves discontinuous functions such as the floor function ($x\in\R \mapsto \lfloor x \rfloor = n \in \N $ such that $n \leq x < n+1$). But, since we are in the framework of computable analysis, we must deal only with continuous functions. The main motivation behind the (cryptic) use of only symbols $\symboleun$ and $\symboledeux$ in the base 4 is that it guarantees an expression like $\lfloor 4 \bar r \rfloor$, either equals $0$ (the specific case where there remain only blanks in $r$) or $\base \bar r$ is in the interval $[\symboleun,2]$ or in $[\symboledeux,4]$. 
From Theorem \ref{thmCharacPTIME} and Lemma \ref{lemNext}, we can deduce:

\begin{corollary}\label{coroTransPtime}
	An oracle TM's transition function is computable in second-order polynomial time.
\end{corollary}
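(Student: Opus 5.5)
The plan is to go through Theorem \ref{thmCharacPTIME} together with Lemma \ref{lemNext}, and then transfer the result from integer arguments to oracle access on real inputs. Fix an oracle TM $M$ with finite state set $Q$ and alphabet $\Sigma$. Its transition function acts on configurations $C=(q,l,r)$, which we encode as $\encodageconfiguration(C)=(q,\bar l,\bar r)\in\N\times\R^2$. Lemma \ref{lemNext} gives a function $\bar{\Next}\in\manonclasslighttanh$ satisfying $\|\bar{\Next}(2^m,\bar C)-\bar C'\|\le 2^{-m}$ for every integer $m$. This is exactly the exponential convergence required by the operator $\MANONlim$. Applying the effective limit to $\bar{\Next}$ with respect to its first argument therefore produces a function of $\manonclasslighttanhlim$ that equals the encoded transition $\bar C\mapsto\bar C'$.

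The next step is to invoke Theorem \ref{thmCharacPTIME}, which tells us that this function is in $\FPtime$ in the computable-analysis sense. Concretely, there is an oracle TM in the sense of Definition \ref{defType2} that does the following. It receives an oracle $\varphi\in\text{CF}$ for $(\bar l,\bar r)$, the integer $q$, and a precision $n$. It then outputs a dyadic within $2^{-n}$ of $\bar C'$, using time polynomial in $n$ and in the size of $q$. The size of $q$ is a constant that depends only on $M$.

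Finally, this bound must be read as a second-order polynomial bound. The machine only queries $\varphi$ at precisions polynomial in $n$, and each answer has length at most $|\varphi|$ evaluated at that query length. The total running time is therefore bounded by an expression of the form $P(|\varphi|)(n)$ with $P$ a second-order polynomial.

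The main obstacle is the mismatch between the two settings. Theorem \ref{thmCharacPTIME} is stated for functions on $\N^d\times\R^{d'}$ with real arguments accessed through Cauchy sequences, whereas second-order complexity measures time against $|\phi|$ for $\phi\in\textbf{Reg}$. To bridge this, I would first restrict to configurations whose tape contents are finite words. By Lemma \ref{lemGammaWordPtime}, their encodings are dyadics computable in polynomial time from the words. For these inputs, the regular-function representation and the Cauchy representation are polynomially equivalent in the sense of Definition \ref{defRepEquiv}, so the translation between them costs only a second-order polynomial overhead. Composing with this translation, and using that second-order polynomials are closed under composition, gives the claim.
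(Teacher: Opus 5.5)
Your route is the one the paper indicates: the corollary is stated as a direct consequence of Lemma \ref{lemNext} and Theorem \ref{thmCharacPTIME}. Modulo the technical point below, your argument does give polynomial-time computability of the \emph{ordinary} transitions on encoded configurations. The genuine gap is that you never treat the oracle of $M$.

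Lemma \ref{lemNext} simulates a plain TM and says nothing about a query step. In a query step the next configuration contains $\encodagemot(\phi(w))$, where $w$ is the current query word and $\phi\in\textbf{Reg}$ is the oracle of $M$. In your proof the only oracle is the Cauchy name $\varphi$ of $(\bar l,\bar r)$ from Definition \ref{defType2}, so you have identified the machine's oracle with the name of the real input. Consequently your bound $P(|\varphi|)(n)$ is just a polynomial in $n$. It says nothing about the dependence on $|\phi|$, which is the whole content of ``second-order'' here. That dependence is also what Lemma \ref{DRRLHard} needs, since there $g$ must simulate $M^\phi$ including its queries.

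The missing step is a separate query branch, selected on the state component:
\begin{itemize}
\item from the encoding of the query word $w$ ($|w|\le S$) known to precision $4^{-(S+2)}$, the word $w$ is determined by Lemma \ref{lemGap};
\item query $\phi(w)$;
\item output $\encodagemot(\phi(w))$ using Lemma \ref{lemGammaWordPtime}.
\end{itemize}
Since $\phi$ is regular, $|\phi(w)|\le|\phi|(S)$. This branch therefore costs $\mathrm{poly}(S)+O(|\phi|(S))$, a genuine second-order polynomial. It is exactly what the paper supplies later through $\Phi_S$ and Lemma \ref{lem63}(ii). Your proof needs this piece, or an equivalent one, to cover oracle machines at all.

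Two smaller points.

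First, the \MANONlim{} step is not licensed as written. The bound of Lemma \ref{lemNext} holds only at encodings of configurations, whereas \MANONlim{} requires it at every argument; moreover, as stated in Theorem \ref{thmCharacPTIME}, it applies to integer arguments only. Off the encodings, nothing guarantees convergence at rate $2^{-m}$, for instance on the ramps of $\sigma$, or where the selector fed to $\TTtanh$ is far from both $0$ and $1$. You can bypass the limit entirely: evaluate $\bar{\Next}(2^{n+1},\bar C)$ to precision $2^{-(n+1)}$, using only that functions of $\manonclasslighttanh$ are polynomial-time computable (the easy inclusion of Theorem \ref{thmCharacPTIME}).

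Second, Definition \ref{defRepEquiv} provides computable, not polynomial-time, translations, so it does not yield the ``polynomial equivalence'' of representations you invoke. Either cite the corresponding result of \cite{KawCook12}, or observe that for finite tape contents the encodings lie in $4^{-S}\Z$ and can be passed exactly.
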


	\section{$\Pspace$-completeness for the robust $\textbf{Reach}$ with ``classical'' TMs}\label{secClassicRed}

$\textbf{Reach}$, in the case where the underlying dynamical system is robust and the dynamic function is Lipschitz polynomial-time computable, is in $\Pspace$. 
Now, it is interesting to prove that this decision problem is even $\Pspace$-complete, hence has a lower-bound result. We consider here the framework of \cite{BlancBournezMFCS2023vantardise}, which gives a simple description of polynomial space with discrete ODEs. Here, we use the usual reduction between decision problems for this: $\textbf{A}\leq \textbf{B}$ if there is a computable in polynomial time function $f$ such that on every $w$, $w \in \dom(\textbf{A})$ is equivalent to $f(w) \in \textbf{B}$, as defined in \cite{Sip97}. It is sufficient, since rationals can be represented by tuples of integers. 
The algebra $\manonclasslighttanh = \manonclasslighttanhlim\backslash\{ \MANONlim\}$ allows us to talk about $\Ptime$ and $\Pspace$ with \emph{smooth}, even analytic, functions. We consider dynamic functions preserving the rationals on a rational domain. 
Let us define $\mathbb{LDL}^\circ_\textit{light} = \manonclasslighttanhlim\backslash\{  \ell; \linearLODE, \MANONlim\}= \left[\mathbf{0},\mathbf{1},\projection{i}{k},   \plus, \minus,\tanh,\frac{\cdot}{2},\frac{\cdot}{3}; \composition\right] $.
The algebra $\mathbb{LDL}^\circ_\textit{light}$ allows us to construct smooth approximations of the sigmoid, the transition and the \textit{if} functions ($\sigtanh, \sendtanh, \TTtanh$) only by composition (see \cite{BlancBournezMFCS2023vantardise} and Appendix \ref{AppNext}).

We first prove the $\Pspace$-completeness of the following problem:

\begin{problems}
	{Reach Robust $\manonclasslighttanh_\textit{light}$}{A discrete-time polynomially robust dynamical system $\mP = (X,g)$, $X$ a closed rational box, $g\in \mathbb{LDL}^\circ_\text{light}$, $Y\subseteq X$ a closed rational box, $\tu x\in X$ and $p\in \N$ given in unary}{$\reach_{2^{-p}}^\mP(\tu x, Y)$}
\end{problems}

\begin{proposition}\label{propRRLDLhard}
	$\operatorname{\bf{Reach}}$ $\operatorname{\bf{Robust}}$ $\manonclasslighttanh_\textit{light}$ is $\Pspace$-hard. 
\end{proposition}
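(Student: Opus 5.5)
We reduce from an arbitrary language $L\in\Pspace$. Fix a one-tape TM $M$ deciding $L$ in space $S(n)=\mathrm{poly}(n)$, normalised as in Subsection \ref{subsecEnco}: it is clocked, so it halts within $T_{\max}=|Q|\cdot S\cdot 3^{2S}$ steps, and it erases its tape before halting. Hence it has exactly the two halting configurations $C_{\text{accept}}$ and $C_{\text{reject}}$, encoded by $(q_{\text{accept}},0,0)$ and $(q_{\text{reject}},0,0)$. The polynomial-time reduction maps an input word $w$ of length $n$ to an instance $(\mP_w,\tu x_w,Y,p)$ built as follows.

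\begin{itemize}
\item The dynamics $g$ is a fixed function $\bar{\Next}$ in $\manonclasslighttanh_\textit{light}$, Lipschitz, and it does not depend on $w$.
\item The starting point is $\tu x_w=\encodageconfiguration(C_0(w))$, the encoding of the initial configuration. It is a rational point, computable in polynomial time by Lemma \ref{lemGammaWordPtime}.
\item The target $Y$ is a small closed rational box around $(q_{\text{accept}},0,0)$, with side $2^{-c}$ for a constant $c$.
\item The precision $p=p(n)$ is a polynomial written in unary.
\item The domain $X$ is a fixed rational box, for instance $[0,|Q|]\times[0,1]^2$.
\end{itemize}

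Then we must show two things: $w\in L$ iff $\reach^{\mP_w}_{2^{-p}}(\tu x_w,Y)$, and the system restricted to inputs of size $\le n$ is polynomially robust.

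The first key step is to obtain a one-step simulator without the length ODE and without the parameter $2^m$ of Lemma \ref{lemNext}. In that lemma, the precision argument $2^m$ is only used to sharpen the $\tanh$-based approximations $\sigtanh,\sendtanh,\TTtanh$. Here we fix a constant precision, say $m_0$. The reason this suffices is the digit encoding with symbols $\symboleun,\symboledeux$: the quantity $4\bar r$ always lies in $\{0\}\cup[1,2]\cup[3,4]$, so the relevant values are separated by gaps of constant size. With $\tanh$ at a suitably scaled constant slope, the map $g$ is then a \emph{contraction toward exact values} on neighbourhoods of encoded configurations.

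Concretely, we aim for the following property. For every point $\tilde C$ with $\|\tilde C-\encodageconfiguration(C)\|\le 2^{-k}$, where $k$ is large enough relative to the current tape length, we have $\|g(\tilde C)-\encodageconfiguration(C')\|\le 2^{-k-1}+\text{(small)}$. That is, errors are damped rather than amplified. To get this we rescale: state components are corrected by a sigmoid snapping to integers, and tape digits are read through $\sigtanh$ applied at a constant distance from the thresholds. Only $+,-,\tanh,\cdot/2,\cdot/3$ and composition are needed. Multiplications by $4$ are realised by additions, and multiplication by the constant $1/4$ via $\cdot/2$ twice. Halting configurations are fixed up to the shrinking $(q,l/4,r/4)$, which converges to $(q,0,0)$.

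The second key step proves correctness under perturbation. We take $p(n)$ larger than $2S(n)+c'$, so that a $2^{-p}$-noise never corrupts any of the first $S(n)$ base-4 digits, and the state stays within $1/4$ of an integer. By induction on $t\le T_{\max}$, every $2^{-p}$-perturbed trajectory from $\tu x_w$ stays within $2^{-p+1}$ of $\encodageconfiguration(C_t)$. The error bound is stationary because of the contraction above, so no growth in $t$ occurs. Therefore:

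\begin{itemize}
\item if $w\in L$, the exact (hence also the perturbed) trajectory enters $Y$;
\item if $w\notin L$, every perturbed trajectory converges near $(q_{\text{reject}},0,0)$ and stays there, at distance $\ge 1-2^{-c}$ from $Y$.
\end{itemize}

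The same argument with any $2^{-q(n)}$, $q\ge p$, gives $\reach_{2^{-q}}=\reach$ on points of size $\le n$, which is polynomial robustness. For arbitrary rational points that are not configuration encodings we must also argue that the behaviour is the same for all precisions beyond $p(\text{size})$. For this we make $g$ snap every point of the box toward a valid encoding within a step, since digits are decided by sigmoids with constant margins.

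The main obstacle is this uniform contraction and robustness for \emph{all} points of $X$, not only encodings, with a fixed $\tanh$ slope and no length ODE. Points near the decision thresholds, where $4\bar r$ lies between digit intervals, are where a constant-precision sigmoid fails to snap. To handle them, I would first try to restrict the domain to the closure of a neighbourhood of encodings. If that does not work, I would check that such points lead to equal reachability answers at all fine precisions, because a sigmoid midpoint is deterministic. Finally, the reduction is computable in polynomial time, since $g$ and $Y$ are fixed and $\tu x_w$ and $p$ are polynomially sized.
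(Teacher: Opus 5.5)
\noindent Your first key step fails. With a fixed precision $m_0$ and a fixed $g=\bar{\Next}$, the map is not a contraction toward exact encodings. So the stationary bound ``within $2^{-p+1}$ of $\encodageconfiguration(C_t)$ for all $t$'' does not hold, for two separate reasons.

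\noindent\emph{The constant error does not vanish.} At a constant slope, $\sigtanh$, $\sendtanh$ and $\TTtanh$ are analytic, so they cannot be exactly constant on the plateaus. They approximate the exact digit values only up to an error $\eta$ of order $2^{-m_0}$, which is nonzero and independent of $n$. This $\eta$ is injected into $\bar l'$ and $\bar r'$ at every step, e.g.\ through $\xi(4\bar r)=4\bar r-\sigma(4\bar r)$. Your ``(small)'' term is therefore a constant, while distinct configurations may be only $4^{-(S(n)+1)}$ apart (Lemma \ref{lemGap}). Once the head drifts about $m_0/2$ cells, an injected error reaches the size of the scanned digit. Hence $\bar{\Next}$ at a fixed precision cannot simulate a machine whose space $S(n)$ is unbounded.

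\noindent\emph{The shift expands errors.} Even with exact digit reading, a right move acts on the tape by $\bar r\mapsto 4\bar r-r_0$, and a left move by $\bar l\mapsto 4\bar l-l_0$. Nothing in $\bar{\Next}$ re-snaps the digits away from the head. For $\tilde C=(q,\bar l,\bar r+e)$ the scanned digit is read correctly, but the right-move branch outputs $\{4\bar r\}+4e$. The error is multiplied by $4$, not halved. The same observation defeats your plan to make $g$ snap every point of the box toward a valid encoding in one step.

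\noindent What is needed, and what the paper does, is to let $g$ depend on $n$: take $g=\bar{\Next}$ with precision parameter $m=q(n)$. This is still an instance of $\manonclasslighttanh_{\textit{light}}$ produced in polynomial time, since the factor $2^{m}$ can be written with $m$ nested doublings. Contraction is then replaced by a growth bound (Lemma \ref{lemAutoCor}):
\begin{itemize}
\item each step adds at most $\epsilon+2^{-m}$;
\item an error injected at time $s$ is rescaled by $4^{h(t)-h(s)}\le 4^{S}$, because the head stays within $S$ cells;
\item these contributions add up to at most $(t+1)\,4^{S}(\epsilon+2^{-m})$.
\end{itemize}
Since $T_{\max}+1\le 2^{5S+\lceil\log_2|Q|\rceil+2}$, this forces $p=q(n)=9S(n)+2c+\lceil\log_2|Q|\rceil+3$. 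Your choice $p>2S(n)+c'$ ignores both the factor $4^{S}$ and the accumulation over exponentially many steps. With this $p$, every perturbed trajectory stays within $4^{-(S+c)}$ of the run of $M$, and Lemma \ref{lemGap} then separates acceptance from rejection, as in Corollary \ref{corPolyRob}.

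\noindent Once this tracking bound holds, your constant-size target box would also work, since a valid configuration in state $q_{\text{accept}}$ has an empty tape. Finally, the paper's robustness argument only controls perturbed trajectories issued from $\encodageconfiguration(C_0)$. The uniform snapping of arbitrary points of $X$, which you list as the main obstacle, is not part of its route.
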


We reduce from the problem of knowing whether a polynomial space TM reaches an accepting state. The reduction maps the set of possible configurations to the domain of the dynamical system and the transition function into the dynamic. 

\begin{proof}[Proof of Proposition \ref{propRRLDLhard}]
	Let $M = (Q, \Sigma, \Sigma', \delta_M, q_0, Q_{\accept}, Q_{\reject})$ be a $\Pspace$ TM (working in space $S$), working on the rationals, and $\tu x_M, \tu y_M \in \Q$. We construct a discrete-time dynamical system $\mP = (X, g)$ such that:
	\begin{itemize}
		\item $X = [0,|Q|] \times [0,1]^2$, a closed rational box containing the encodings of all configurations of $M$;
		\item $g = \bar{\Next}$, the function of Lemma \ref{lemNext} with parameter $m = q(n)$ and our convention on the halting states;
		\item $\tu x = \encodageconfiguration(C_0)$, $C_0$ the initial configuration of $M$ on the word $u$;
		\item $Y = B(\encodageconfiguration(C_{\text{accept}}), 4^{-(S+c)})$, a closed rational box for the sup norm;
		\item $p=q(n)$ in unary, $q$ the polynomial of Corollary \ref{corPolyRob} and $n=|u|$.
	\end{itemize}
	
	$\overline{\Next}$ is total and Lipschitz on $X$. By Lemma \ref{lemAutoCor}, the trajectory never visits non reachable configurations, thus, invalid points are never reached.
	
	By Lemma \ref{lemNext}, $g \in \mathbb{LDL}^\circ_\text{light}$ and the reduction is computable in polynomial time. By Corollary \ref{corPolyRob}, $\mP$ is polynomially robust and $\reach_{2^{-p}}^\mP(\tu x, Y) = 1$ iff $M$ accepts $u$.
	We conclude.
\end{proof}

The lower polynomial complexity bound remains true under robustness hypothesis, while undecidability does not. It comes from the two following lemmas and Corollary \ref{corPolyRob}. 
We use them in the proofs of Proposition \ref{propRRLDLhard}, Lemma \ref{DRRLHard} and Theorem \ref{CRRLhard}.

\begin{lemma}[Gap]\label{lemGap}
	Let $M$ be a Turing machine in space $S$. Let $C$ and $C'$ be two distinct configurations of $M$. Then, $\left| \encodageconfiguration(C) - \encodageconfiguration(C')\right| \geq 4^{-(S+1)} $.
\end{lemma}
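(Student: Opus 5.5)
The plan is to compare the encodings componentwise, using the sup-norm convention fixed in the Notation paragraph: $\left|\encodageconfiguration(C)-\encodageconfiguration(C')\right|$ is the maximum of the differences of the three components $(q,\encodagemot(l),\encodagemot(r))$. Since $C=(q,l,r)\neq C'=(q',l',r')$, at least one of $q\neq q'$, $l\neq l'$, $r\neq r'$ holds, and it is enough to bound the corresponding component difference from below.

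If $q\neq q'$, the states are distinct integers, so the first components differ by at least $1\geq 4^{-(S+1)}$. Otherwise, suppose without loss of generality that $l\neq l'$; the case of $r$ is symmetric. Because $M$ works in space $S$, every tape half is a word of length at most $S$ over $\{\symboleun,\symboledeux\}$ (the non-blank cells) followed by $0^\omega$. Let $k$ be the first index with $l_k\neq l'_k$; then $k\le S-1$, or at any rate $k\le S$ once we allow for the head cell.

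Next I write
\[
\encodagemot(l)-\encodagemot(l') = (l_k-l'_k)\,\base^{-(k+1)} + \sum_{j>k}(l_j-l'_j)\,\base^{-(j+1)}.
\]
The digit pair $(l_k,l'_k)$ is one of $\{0,\symboleun\}$, $\{0,\symboledeux\}$ or $\{\symboleun,\symboledeux\}$. This is exactly where the choice of digits $\symboleun,\symboledeux$ matters, as explained after Lemma \ref{lemNext}.

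The main obstacle is controlling the tail $\sum_{j>k}(l_j-l'_j)\base^{-(j+1)}$, since its digit differences can be as large as $3$. A naive bound gives $3\cdot 4^{-(k+1)}/3=4^{-(k+1)}$, which could exactly cancel a leading difference of $1$. To get around this I will use the structure of blanks: a blank cannot be followed by a non-blank within the used portion, since the tape content is a contiguous block of symbols from $\{\symboleun,\symboledeux\}$ followed by $0^\omega$. I split on the leading pair.
\begin{itemize}
\item If $\{l_k,l'_k\}=\{\symboleun,\symboledeux\}$, the leading difference is $2\cdot 4^{-(k+1)}$. The tail of each word lies in $[0,\tfrac{\symboledeux}{3}4^{-(k+1)}]=[0,4^{-(k+1)}]$, so the tail difference is at most $4^{-(k+1)}$ in absolute value. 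The total difference is therefore at least $4^{-(k+1)}$.
\item If one of $l_k,l'_k$ is $0$, then the blank-structure property makes that word identically $0$ from index $k$ on. The other word has a nonzero digit at index $k$ and nonnegative digits afterwards, so the difference is at least $\symboleun\cdot 4^{-(k+1)}$.
\end{itemize}
In both cases the difference is at least $4^{-(k+1)}\geq 4^{-(S+1)}$, using $k\le S$.

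The only remaining care is bookkeeping. I need to justify that within space $S$ the differing index satisfies $k\le S$, counting the possible head-position cell. If the convention introduces an extra cell, the constant simply shifts, which is consistent with the $4^{-(S+c)}$ used in the proof of Proposition \ref{propRRLDLhard}.
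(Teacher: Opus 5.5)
Your proof is correct and follows the paper's argument: take the first differing digit index $k\le S$, then split into the case where the two digits are $1$ and $3$ (a leading gap of $2\cdot 4^{-(k+1)}$ against a tail difference of at most $4^{-(k+1)}$) and the case where one digit is blank (so that word vanishes from index $k$ on). You are slightly more complete than the paper, which treats only $r\neq r'$ and leaves implicit both the state case $q\neq q'$ and the reduction, via the sup-norm, to a single component.
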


	\begin{remark}\label{explicationGap}
	Lemma \ref{lemGap} guarentees that dictinct configurations, for a machine in space $S$, lie on the grid $4^{-S}\Z$ and are always separated in this grid.
\end{remark}

	\begin{proof}[Proof of Lemma \ref{lemGap}]
	Let $C = (q, l, r)$ and $C' = (q', l', r')$ be two distinct configurations.  Assume $r \neq r' $ and, without loss of generality, we assume the first index where they differ is position $i\leq S$. 
	
	We have
	$|\encodagemot(r)-\encodagemot(r')| = |r_i-r'_i|\cdot4^{-(i+1)}  + \sum_{j>i} (r_j - r'_j) 4^{-(j+1)}$. 
	
	If $r_i$ and $r'_i$ are both not $B$, the blank symbol, then, by definition of the encoding, they are in $\{1, 3\}$, so $|r_i - r'_i| = 2$. Since the numbers are all at most $3$, we have that $|\sum_{j>i} (r_j - r'_j) 4^{-(j+1)}| \leq 3 \sum_{j>i} 4^{-(j+1)} = 4^{-(i+1)} $. Hence, $|\encodagemot(r) - \encodagemot(r')| \geq 2\cdot4^{-(i+1)} - 4^{-(i+1)} = 4^{-(i+1)} \geq 4^{-(S+1)}$.
	
	If $r_i$ (or $r'_i$) is the blank symbol, then, for all $j\geq i$, $r_j=0$, thus, $|\encodagemot(r) - \encodagemot(r')| \geq r'_i\cdot 4^{-(i+1)} \geq 4^{-(i+1)}$. The proof is the same for $r'_i=0$.
\end{proof}

\begin{remark}
	The proof only uses that the blank occurs at the end of words and that the two other digits are $1$ and $3$. It would fail if all four digits were allowed, since $0.0333\ldots = 0.1$ in base $4$. Other pairs of digits would work as well, with a smaller constant: the choice of $1$ and $3$ is dictated by the width of the plateaus of $\sigma$ in Subsection \ref{subsecEnco}, not by this lemma.
\end{remark}

\begin{lemma}[Auto-correction]\label{lemAutoCor}
	Let $M$ be a Turing machine in space $S$, halting in time $T$, $Q$ its states' set, and $X$ the configurations' space. There exists a constant $c$, independent from the input, such that, for all valid configuration $C$, for all $z>0$ such that $\left|z - \encodageconfiguration(C)\right| \leq 4^{-c}$, we have:  
	\begin{enumerate}
	\item (one step) $\Next$ selects at $z$ the same branch as $M$ at $C$ (same state, scanned symbol, written symbol and move), and $\left|\Next(z) - \encodageconfiguration(\delta_M(C))\right| \leq 4\cdot\left|z - \encodageconfiguration(C)\right| + 2^{-m}$;
	\item (along the run and after halting) For $t\in\N$. If $t\leq T$, and if $(T+1) \cdot 4^S \cdot (\epsilon+2^{-m)} \leq 4^{-(S+c)}$, for all $\epsilon$-perturbed trajectory starting from $\encodageconfiguration(C_0)$, $\left| \bar{\Next}(C_t) - \encodageconfiguration(C_t)\right| \leq (t+1) \cdot 4^S\cdot (\epsilon+2^{-m})$. If $t>T$, $\left| \bar{\Next}(C_t) - \encodageconfiguration(C_t)\right| \leq \max\left(\bar{\Next}(C_t) - \encodageconfiguration(C_t), \frac{4}{3}(\epsilon+2^{-m})\right)$.
	\end{enumerate}
\end{lemma}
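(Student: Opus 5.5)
The plan is to prove part 1 from the local structure of $\bar{\Next}$, and then obtain part 2 by an induction on $t$ that uses part 1 together with the grid separation of Lemma~\ref{lemGap}.

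\emph{Part 1.} The construction of $\bar{\Next}$ (Appendix~\ref{AppNext}) reads the state and the scanned symbols through smooth selectors $\TTtanh$ and $\sigtanh$. These are applied to quantities such as $4\bar r$, which for a valid configuration lies either in $\{0\}$, in $[\symboleun,2]$ or in $[\symboledeux,4]$. The state component is an integer. Choose $c$ so that a perturbation of size $4^{-c}$ of $z$ keeps:
\begin{itemize}
\item $4\bar r$ and $4\bar l$ inside the plateau around the correct interval, where $\sigtanh$ is $2^{-m}$-close to its ideal value;
\item the state component within the plateau of $\TTtanh$ around the correct $q$.
\end{itemize}
Since $c$ depends only on the plateau widths, it is independent of the input. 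In that neighbourhood the selected branch is the one of $M$ at $C$. On that branch the map is affine, of the form $\bar r\mapsto 4\bar r - d$ or $\bar r \mapsto (\bar r + d)/4$, and similarly for $\bar l$, so its Lipschitz constant is at most $4$. The smooth selection adds an error of at most $2^{-m}$, as in Lemma~\ref{lemNext}. This gives $|\Next(z)-\encodageconfiguration(\delta_M(C))|\le 4|z-\encodageconfiguration(C)|+2^{-m}$. On halting states the map $(q,l,r)\mapsto(q,l/4,r/4)$ is contracting, so the same bound holds with room to spare.

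\emph{Part 2, $t\le T$.} Proceed by induction. Let $e_t$ be the distance between the $t$-th point $z_t$ of an $\epsilon$-perturbed trajectory and $\encodageconfiguration(C_t)$. Part 1 gives $e_{t+1}\le 4e_t+2^{-m}+\epsilon$, which is valid as long as $e_t\le 4^{-c}$. Iterating naively yields $4^{t}$, which is too weak.

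The remedy is that the factor $4$ only multiplies the error in the digits near the head. Because the tapes have length $S$, the head position is confined to $S$ cells. The accumulated amplification is therefore bounded by $4^S$ rather than $4^t$: a shift left by $k$ multiplies by $4^k$ only after earlier shifts right divided by $4$.

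So I would track the error of $\bar l$ and $\bar r$ weighted by the current head offset, and show that the weighted error grows additively, by $\epsilon+2^{-m}$ per step. Unweighting then costs at most $4^S$, which gives $e_t\le (t+1)4^S(\epsilon+2^{-m})$.

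The hypothesis $(T+1)4^S(\epsilon+2^{-m})\le 4^{-(S+c)}\le 4^{-c}$ keeps every $e_t$ inside the basin required by part 1, which closes the induction. By Lemma~\ref{lemGap} the same hypothesis also makes $C_t$ the unique configuration within the basin, so the trajectory never approaches an invalid point.

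\emph{Part 2, $t>T$.} Here $C_t$ is a halting configuration and the dynamics is $(q,l,r)\mapsto(q,l/4,r/4)$. The error then satisfies $e_{t+1}\le e_t/4+\epsilon+2^{-m}$. The map $x\mapsto x/4+(\epsilon+2^{-m})$ has fixed point $\tfrac43(\epsilon+2^{-m})$. Hence $e_t$ stays below the maximum of its value at time $T$ and $\tfrac43(\epsilon+2^{-m})$, which is the stated bound.

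The main obstacle is the $4^S$ versus $4^t$ amplification in part 2. I would first try the head-offset weighted potential described above. If that fails, I would instead redo part 1 coordinatewise, showing that digits beyond position $S$ never influence selection and that only errors at positions at most $S$ are ever multiplied by $4$.
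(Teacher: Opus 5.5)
Your proposal is correct and is essentially the paper's argument: part~1 uses the plateaus of $\sigma$, $\sendtanh$ and $\TTtanh$ together with the affine branches of coefficient $4$ or $1/4$. Part~2 uses head-position bookkeeping, which the paper writes explicitly as $e^r_t \le \sum_{s\le t}(\epsilon+2^{-m})\,4^{h(t)-h(s)}$ (symmetrically for $e^l$) with $|h(t)-h(s)|\le S$, followed by the same fixed-point argument at $\frac43(\epsilon+2^{-m})$ after halting. Your weighted potential is exactly this sum in normalised form; note that it relies on the componentwise version of part~1 rather than on the sup-norm bound $e_{t+1}\le 4e_t+\epsilon+2^{-m}$ alone (a right move multiplies the $\bar r$-error by $4$ and divides the $\bar l$-error by $4$, a left move does the reverse, and the cross term $\sigma(4\bar l)$ is plateau-protected), which you can read off directly from the branch forms you already wrote down (Equation~(\ref{textaremplacer})).
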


\begin{proof}[Proof of Lemma \ref{lemAutoCor}]
	
	\begin{enumerate}
		
		\item Write $\encodageconfiguration(C) = (q,l,r)$ and $z = (\tilde q, \tilde l, \tilde r)$. As $\left| z - \encodageconfiguration(C)\right| \leq 4^{-c} \leq 1/16$, we have $|4\tilde r - 4r| \leq 1/4$, so $4\tilde r$ lies  in the same plateau of $\sigma$ as $4r$, and $\tilde q$ is within $1/4$ of the integer $q$. Every $\sigma$, $\sendtanh$	and $\TTtanh $ in the expression of $\Next$ therefore takes the same branch at $z$ as at $\encodageconfiguration(C)$, up	to $2^{-m}$ (with the internal precision $2^{m+c}$ of Lemma \ref{lemGammaWordPtime}). On that branch, by Equation (\ref{textaremplacer}), $l'$ and $r'$ are affine in $(l,r)$ with coefficients $1/4$ or $4$, plus terms $\sigma(\cdot)$ which are exact digits up to $2^{-m}$. Hence $\left| \Next(z) - 	\Next(\encodageconfiguration(C))\right| \le 4\left| z - \encodageconfiguration(C)\right| + 2^{-m}$, and Lemma \ref{lemGammaWordPtime} concludes.
		
		\item 
		Write $e^l_t$ and $e^r_t$ for the errors on the two tape components, and $h(t)$ for the position of the head at time $t$, so that $|h(t) - h(s)| \le S$. By Equations (\ref{textaremplacer}), a right move divides $e^l$ by $4$ and multiplies $e^r$ by $4$; a left move does the opposite; each step adds at most $\epsilon + 2^{-m}$ to both. By induction on $t$, as long as the bound stays below $4^{-c}$ so that item~1 applies,
		\[
		e^r_t \le \sum_{s \le t} (\epsilon + 2^{-m})\,4^{h(t)-h(s)}
		\qquad\text{and}\qquad
		e^l_t \le \sum_{s \le t} (\epsilon + 2^{-m})\,4^{h(s)-h(t)} ,
		\]
		both at most $(t+1)\,4^{S}\,(\epsilon + 2^{-m})$. The state component is reset within $2^{-m}$ of an integer at every step by $\sendtanh$.
		
		For $t \ge T$, the convention gives $e_{t+1} \le e_t/4 + \epsilon + 2^{-m}$, and $E \mapsto E/4 + \epsilon + 2^{-m}$ maps $[0,E]$ into itself as soon as $E \ge \tfrac{4}{3}(\epsilon + 2^{-m})$.
		
	\end{enumerate}		
\end{proof}

Note that,  if $M$ is an oracle machine, the step at $C$ is a query on a word $w$, and $\left|z - \encodageconfiguration(C)\right| \leq 4^{-(S+2)}$, then the oracle component of $\bar{\Next}(z)$ is exactly $\encodagemot(\varphi(w))$.
\begin{wrapfigure}{r}{0.4\textwidth}
	\vspace{-.4cm}
	\begin{tikzpicture}[scale=.6]
		\draw[->] (0,0) -- (10,0);
		\draw (10,0) node [anchor=north]  {$\R$};
		\draw[-, dashed, color=lightgray] (2,0) -- (2,-1.3);
		\draw[-, dashed, color=lightgray] (8,0) -- (8,-1.3);
		\draw[-, dashed, color=lightgray] (7,0) -- (7,1.5);
		\draw[-, dashed, color=lightgray] (9,0) -- (9,1.5);
		\draw[-, dashed, color=lightgray] (1.7,0) -- (1.7,1);
		\draw[-, dashed, color=lightgray] (2.3,0) -- (2.3,1);
		\draw[-] (2,-.1) -- (2,.1) node [anchor=north][inner sep=0.3cm]   {$\gamma(C_i)$};
		\draw[-] (8,-.1) -- (8,.1) node [anchor=north][inner sep=0.3cm]   {$\gamma(C_{i+1})$};
		\draw[<->] (2,-1.3) -- (8,-1.3) node[midway] [anchor=south][inner sep=0.1cm] {$\geq 4^{-(S+1)}$} ;
		\draw (1,0) node   {\huge{$[$}};
		\draw (3,0) node   {\huge{$]$}};	
		\draw[<->] (7,1.5) -- (9,1.5) node[midway] [anchor=south][inner sep=0.1cm] {$\leq 4^{-(S+c)}$} ;	
		\draw[<->] (1.7,1) -- (2.3,1) node[midway] [anchor=south][inner sep=0.2cm] {\tiny $\leq 2^{-q(n)}$} ;	
		\draw (1.7,0) node   {\small{$[$}};
		\draw (2.3,0) node   {\small{$]$}};	
		\draw[thick,->,MidnightBlue] (2,0) to [bend left=45] node[midway, above] {$\delta_M(C)$} (8, 0);
		\draw (7,0) node   {\huge{$[$}};
		\draw (9,0) node   {\huge{$]$}};
		\draw (7.7,0) node   {\small{$[$}};
		\draw (8.3,0) node   {\small{$]$}};			
	\end{tikzpicture}
	\vspace{-.9cm}
\end{wrapfigure}

Hence, the perturbed trajectory, which remains close to the sequence of configurations of $M$, never approaches invalid points.
Intuitively, the error at each step is absorbed by the gap between the encodings, making the system robust:

\begin{corollary}[Polynomial robustness of $H$]\label{corPolyRob}
	Take $q(n) = 9\,S(n) + 2c + \lceil \log_2 |Q| \rceil + 3$ and $m = q(n)$. For $\epsilon = 2^{-q(n)}$, $(T_{\max}+1)\,4^{S}\,(\epsilon + 2^{-m}) \le 4^{-(S+c)}$, so, by Lemma \ref{lemAutoCor}, every $\epsilon$-perturbed trajectory from $\encodageconfiguration(C_0)$ stays, at all times, within $4^{-(S+c)}$ of the encoding of the current configuration of $M$. Since distinct configurations are at distance at least $4^{-(S+1)}$ (Lemma \ref{lemGap}), and $c \ge 2$, a perturbed trajectory enters $Y = B(\encodageconfiguration(C_{\text{accept}}), 4^{-(S+c)})$ iff $M$ accepts. Hence, $\reach^\mP_{2^{-q(n)}}(x,Y) = \reach^\mP(x,Y)$: $\mP$ is polynomially robust.
\end{corollary}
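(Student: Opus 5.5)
The statement to prove is Corollary \ref{corPolyRob}. The proof is a parameter check followed by two applications of earlier lemmas. I would organise it in three steps.

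\textbf{Step 1: the arithmetic inequality.} With $\epsilon = 2^{-m} = 2^{-q(n)}$ we need
\[
(T_{\max}+1)\,4^{S}\,2^{1-q(n)} \le 4^{-(S+c)} = 2^{-2S-2c}.
\]
Bound $T_{\max}+1 = |Q|\, S\, 3^{2S}+1 \le 2|Q|\,4^{S}\cdot 4^{S} = 2^{\log_2|Q| + 1 + 4S}$, using $S \le 4^S$ and $3^{2S}\le 4^{2S}/\ldots$ crudely, i.e. $S\cdot 9^{S} \le 16^{S}\cdot 4^{S}$ up to constants. The left side is then at most $2^{\lceil\log_2|Q|\rceil + 2 + 4S + 2S - q(n)}$, here for instance $\le 2^{\lceil \log_2|Q|\rceil+2+6S-q(n)}$. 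With $q(n) = 9S+2c+\lceil\log_2|Q|\rceil+3$ the exponent is at most $-3S-2c-1 \le -2S-2c$, so the inequality holds. This step is routine bookkeeping; the only care needed is to bound $S\,3^{2S}$ by a power of $4$ with enough slack, and $9S$ leaves plenty.

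\textbf{Step 2: tracking.} Take any $\epsilon$-perturbed trajectory $(z_t)$ from $\encodageconfiguration(C_0)$. Item 2 of Lemma \ref{lemAutoCor} applies because its hypothesis is exactly Step 1. By induction on $t\le T$ it gives $|z_t - \encodageconfiguration(C_t)| \le (t+1)4^S(\epsilon+2^{-m}) \le 4^{-(S+c)} \le 4^{-c}$. Item 1 then keeps applying at every step, so the branches chosen coincide with those of $M$. For $t>T$, the halting convention $\Next(q,l,r)=(q,l/4,r/4)$ gives the contraction $e_{t+1}\le e_t/4+\epsilon+2^{-m}$. The error therefore stays below $\max(4^{-(S+c)},\tfrac43(\epsilon+2^{-m})) = 4^{-(S+c)}$, and the trajectory stays near the fixed halting configuration.

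\textbf{Step 3: the equivalence.}
\begin{itemize}
\item If $M$ accepts, the unperturbed trajectory reaches $\encodageconfiguration(C_{\text{accept}})\in Y$. Hence both $\reach^\mP(x,Y)$ and $\reach^\mP_{2^{-q(n)}}(x,Y)$ hold, since unperturbed trajectories are perturbed ones.
\item Suppose instead some perturbed trajectory enters $Y$ at time $t$. Then $z_t$ is within $4^{-(S+c)}$ of both $\encodageconfiguration(C_t)$ and $\encodageconfiguration(C_{\text{accept}})$. Their distance is therefore at most $2\cdot 4^{-(S+c)} < 4^{-(S+1)}$, which uses $c\ge 2$. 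Lemma \ref{lemGap} then forces $C_t = C_{\text{accept}}$, so $M$ accepts.
\end{itemize}
Thus $\reach^\mP_{2^{-q(n)}}(x,Y)\iff M$ accepts $\iff \reach^\mP(x,Y)$. This is polynomial robustness in the sense of Definition \ref{defPolyRob}, with the point-to-set target $Y$.

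The main obstacle I expect is making the induction in Step 2 honest. Item 2 of Lemma \ref{lemAutoCor} is stated as a bound on the whole trajectory, but its proof needs item 1 at each step, which requires the accumulated error to be below $4^{-c}$. The induction must therefore carry the invariant ``error $\le 4^{-(S+c)}$'', which Step 1 ensures for every $t\le T_{\max}$, and hence for every $t \le T$.
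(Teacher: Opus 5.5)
Your proposal is correct and follows the paper's proof: the same parameter check, tracking along the run (including the post-halting contraction) via Lemma \ref{lemAutoCor}, and separation via Lemma \ref{lemGap} with $c \ge 2$; your triangle-inequality argument for ``enters $Y$ implies $M$ accepts'' is just the contrapositive of the paper's. Two small repairs: the bound $T_{\max}+1 \le 2|Q|\,16^{S}$ is true but follows from $S \le (16/9)^{S}$, not from $S \le 4^{S}$ (that route only gives $2|Q|\,2^{6S}$, hence exponent $-S-2c-1$, which fails for $S \ge 2$; the paper uses $S\,9^{S} \le 2^{5S}$, making the exponent exactly $-2S-2c$), and since $g=\bar{\Next}$ is only $2^{-m}$-accurate, the unperturbed accepting run does not hit $\encodageconfiguration(C_{\text{accept}})$ exactly but, by your Step 2, lands within $(T+1)\,4^{S}\,2^{-m} < 4^{-(S+c)}$ of it, which is what puts it in $Y$.
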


\begin{proof}[Proof of Corollary \ref{corPolyRob}]
	Since $T_{\max} + 1 \le 2|Q|\,S\,3^{2S} \le 2^{5S + \lceil\log_2|Q|\rceil + 2}$ and $\epsilon + 2^{-m} = 2^{1-q(n)}$, the choice of $q$ gives $(T_{\max}+1)\,4^{S}\,(\epsilon + 2^{-m}) \le 4^{-(S+c)}$. Lemma \ref{lemAutoCor} bounds the distance to the encoding of the current configuration by $4^{-(S+c)}$ at all times. If $M$ accepts, the unperturbed trajectory ends within $4^{-(S+c)}$ of $\encodageconfiguration(C_{\text{accept}})$; if not, every perturbed trajectory stays within $4^{-(S+c)}$ of configurations distinct from $C_{\text{accept}}$, hence at distance more than $4^{-(S+1)} - 4^{-(S+c)} > 4^{-(S+c)}$ from it since $c \ge 2$.
\end{proof}

The consequence of these lemmas is that, in unbounded space $S$, $4^{-(S+c)}$ converges to $0$: undecidability does not survive under robustness hypothesis, while the lower complexity bound does. It marks the border between the main complexity results of this paper and classical undecidability theory.
As this problem is in $\Pspace$ (\cite{BlancBournezJoC}):

\begin{corollary}\label{RRLightComp}
	$\operatorname{\bf{Reach}}$ $\operatorname{\bf{Robust}}$ $\manonclasslighttanh_\textit{light}$ is $\Pspace$-complete. 
\end{corollary}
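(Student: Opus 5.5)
The plan is to combine the two halves of completeness: membership, taken from the literature, and hardness, from Proposition \ref{propRRLDLhard}.

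For hardness, Proposition \ref{propRRLDLhard} gives $\Pspace$-hardness under the usual polynomial-time many-one reduction. The instance it produces has $X=[0,|Q|]\times[0,1]^2$, dynamic $g=\bar{\Next}$ in $\mathbb{LDL}^\circ_\text{light}$, the rational box $Y$, and $p=q(n)$ in unary. Each component has size polynomial in $n=|u|$, so the reduction is polynomial-time, as already checked in that proof.

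For membership, I would invoke Theorem \ref{thdirectionunpbis} from \cite{CSL24, BlancBournezJoC}. This requires three checks on any input instance $(\mP,\tu x,Y,p)$ with $g\in\mathbb{LDL}^\circ_\text{light}$.
\begin{enumerate}
\item $g$ is in $\FPtime$. The algebra $\mathbb{LDL}^\circ_\text{light}$ is a subalgebra of $\manonclasslighttanhlim$, so Theorem \ref{thmCharacPTIME} applies. Moreover, a term given as part of the input can be evaluated at precision $2^{-k}$ in time polynomial in $k$ and the term size, by induction on compositions of $+,-,\tanh,\cdot/2,\cdot/3$.
\item $g$ is locally Lipschitz. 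Every basic function is $C^1$, and composition preserves $C^1$, hence local Lipschitzness on the compact box $X$.
\item $X$ and $Y$ are closed rational boxes, which holds by hypothesis.
\end{enumerate}
With these, $\reach^\mP_{2^{-p}}(\tu x,Y)$ is decidable in polynomial space, given that $p$ is in unary. The point-to-set version, with $Y$ a rational box, is handled as in \cite{BlancBournezJoC} by checking entry into a slightly enlarged box.

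The main obstacle is uniformity. Theorem \ref{thdirectionunpbis} is stated for a fixed dynamics $\tu f$, whereas here $g$ is part of the input. I would address this by noting that the $\Pspace$ algorithm of \cite{CSL24} only uses a polynomial-time approximation oracle for $g$ and a Lipschitz bound. Both are computable from the syntax of the term: the Lipschitz constant is at most a product of the per-node constants, which is exponential in the term size, so its logarithm is polynomial. The resulting space bound therefore stays polynomial in the full input size.

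Combining the membership argument with Proposition \ref{propRRLDLhard} yields $\Pspace$-completeness.
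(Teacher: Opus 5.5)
Your proposal is correct and follows the paper's route: hardness is Proposition~\ref{propRRLDLhard}, and membership is the cited $\Pspace$ upper bound (Theorem~\ref{thdirectionunpbis}), which the paper invokes in a single line. Your check of that theorem's hypotheses, and your uniformity remark, make explicit what the paper leaves implicit. The remark is that $g$ is part of the input, so an approximation procedure for $g$ and a Lipschitz bound with polynomially bounded logarithm must be read off the syntax of $g$; this works when $g$ is given as a formula or straight-line circuit over the basic functions.
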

We have a first $\Pspace$-completeness result, using the classical notion of reduction. 
	If we had considered $\manonclasslighttanh_\textit{light}\backslash \{\tanh\}$, we would be in the situation where the dynamic function is a PAM. The proof of Proposition \ref{propRRLDLhard} also covers this case.
We now prove this property but with a stronger notion of reduction and for more general domains, using oracle TMs.

\section{$\Pspace^2$-$\leq_m^2$-completeness for the robust $\textbf{Reach}$ with oracle TMs}\label{secPSComplete}

We impose our dynamic functions to be Lipschitz. We are going to use the notion of reduction given in Definition \ref{defTopoRed}. 
\begin{definition}[Oracle as a real function]
	Let $\varphi \in \bf{Reg}$ and $S \in \N$. The function $\Phi_S : [0,1] \to [0,1]$ is the piecewise affine function which is constant, equal to $\encodagemot(\varphi(w))$, on the interval $\left[\encodagemot(w) - 4^{-(S+2)},\encodagemot(w) + 4^{-(S+2)}\right]$ for every word $w$ of length at most $S$, and affine between two consecutive such intervals.
\end{definition}

Thus, one oracle step is also auto-correcting (see Lemma \ref{lem63}).
We write $\textbf{PAM}$ for the algebra of PAMs and the operators under which it is closed. We denote by $\mathcal{A} + \Phi$ the algebra where $\mathcal{A}$ is an algebra and $\Phi$ is added as a base function.
It gives us a strong stability result, avoiding too \emph{chaotic} systems.
We define the problem \textbf{DRRL}:

\begin{problems}
	{Discrete Reach Robust Lipschitz (DRRL)}{A discrete-time polynomially robust dynamical system $\mathcal{H} = (X,g)$, $X$ a closed rational box, $g : X \rightarrow X\in\manonclasslighttanh+\Phi$, $g$ Lipschitz, $p\in\N$ given in unary, $\tu x, \tu y \in X $}{$\reach^\mathcal{H}_{2^{-p}}(\tu x, \tu y)$}
\end{problems}

The main difference with the problems considered in \cite{KawCook12} is that we do not bound the time: there is no fixed number of iterations of $g$ in $\mathcal{H}$ in the input.

\subsection{Transition functions for oracle machines}

Many authors have embedded TMs in various classes of dynamical systems to get undecidability or several hardness results.  
We present in this section how this is usually done, to point out where (intuitively) (non-)robustness issues appear.

\begin{theorem}[\cite{BlancBournezMCU22vantardise,BlancBournezMFCS2023vantardise}]\label{thEmbeddingTM}
	Let $M$ be a TM. Then, we can construct a dynamical system $(\N\times\Sigma^\omega\times\Sigma^\omega, \tu f)$, $\tu f$ a PAM with rational coefficients, simulating one transition of $M$.
\end{theorem}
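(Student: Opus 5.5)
We will build $\tu f$ explicitly on the encoded configuration space, following the radix-$\base$ encoding $\encodageconfiguration$ of Subsection \ref{subsecEnco}. A configuration $C=(q,l,r)$ is sent to $(q,\encodagemot(l),\encodagemot(r))\in\N\times[0,1]^2$. The scanned symbol is the first letter of $r$, and it is recovered from $\lfloor \base\,\encodagemot(r)\rfloor\in\{0,\symboleun,\symboledeux\}$. Because only the digits $\symboleun,\symboledeux$ (and trailing blanks) are used, $\base\,\encodagemot(r)$ always lies in $\{0\}\cup[\symboleun,2]\cup[\symboledeux,4]$. On this set the floor can be replaced by a piecewise affine function: it is constant ($0$, $1$, $3$) on each of these intervals and affine on the gaps $(0,1)$ and $(2,3)$, which contain no encodings. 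The resulting map is continuous, piecewise affine with rational breakpoints, and agrees with $\lfloor\base\,\cdot\rfloor$ on every valid encoding.

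Next we write the transition as a finite case distinction. For each pair $(q,a)\in Q\times\Sigma$, $\delta_M(q,a)=(q',a',D)$ fixes the new state, the written symbol and the move. Reading $\bar l=\encodagemot(l)$, $\bar r=\encodagemot(r)$, and writing $b$ for the first letter of $l$ (extracted in the same way), the updates are affine in $(\bar l,\bar r)$:
\begin{itemize}
\item for a right move, $\bar l' = \bar l/\base + a'/\base$ and $\bar r' = \base\bar r - a$;
\item for a left move, $\bar l' = \base\bar l - b$ and $\bar r' = \bar r/\base^2 + a'/\base^2 + b/\base$.
\end{itemize}
These are the formulas used in Lemma \ref{lemNext}. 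We then define $\tu f$ on the polyhedral pieces $P_{q,a,b}=\{q\}\times I_b\times I_a$, where $I_a$ is the interval on which the extracted digit equals $a$, and set $\tu f$ there to the corresponding rational affine map.

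To make $\tu f$ a PAM on a convex-polyhedral partition of the whole domain, including non-integer first coordinates and the gap regions, we interpolate affinely between pieces. We can either triangulate the finitely many gaps, or simply declare the gaps to be further rational polyhedra on which an arbitrary affine map is used, since they contain no valid encodings. States are handled as a finite disjoint union indexed by $q\in Q\subset\N$. We then check that $\tu f(\encodageconfiguration(C))=\encodageconfiguration(C')$ exactly for every configuration $C$ with successor $C'$. Halting states are fixed by the convention $(q,l,r)\mapsto(q,l/4,r/4)$.

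The main obstacle is the boundary behaviour of the blank symbol: when $r$ becomes all blanks, $\bar r=0$ sits at the edge of the "0" plateau. We also need to ensure the written digits stay in $\{\symboleun,\symboledeux\}$ so that the invariant of the encoding is preserved. We handle this by mapping the blank to digit $0$ only at the tail, and by checking that every update formula keeps images inside the image of $\encodagemot$. Everything else is routine bookkeeping of rational coefficients.
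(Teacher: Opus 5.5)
Your overall route is the paper's: encode by $\encodageconfiguration$, read the scanned symbol from $\lfloor 4\bar r\rfloor$, split into finitely many cases indexed by state and scanned symbol, and note that each branch is rational affine. However, the left-move update you write down is wrong, so the concluding check $\tu f(\encodageconfiguration(C))=\encodageconfiguration(C')$ fails. On a left move the written symbol $a'$ \emph{overwrites} the scanned symbol $a$. With $r=a\,r^\bullet$, the new right word is $b\,a'\,r^\bullet$, hence
\[
\bar r' \;=\; \frac{b}{4}+\frac{a'}{4^2}+\frac{\encodagemot(r^\bullet)}{4^2} \;=\; \frac{b}{4}+\frac{a'}{4^2}+\frac{4\bar r-a}{4^2}.
\]
This is the paper's $4^{-2}\{4\bar r\}+4^{-2}x+4^{-1}\lfloor 4\bar l\rfloor$ in \eqref{textaremplacerD}, and it is what Lemma~\ref{lemNext} actually uses, contrary to your claim. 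Your $\bar r/4^2+a'/4^2+b/4$ equals $\encodagemot(b\,a'\,a\,r^\bullet)$: it keeps $a$ and inserts $a'$ in front of it. For instance, from $(q,\,b\,0^\omega,\,a\,0^\omega)$ with $a\neq 0$ you get the encoding of the right word $b\,a'\,a\,0^\omega$ instead of $b\,a'\,0^\omega$. The coefficient of $\bar r$ on a left move must be $1/4$, not $1/16$; this matches the fact, used in Lemma~\ref{lemAutoCor}, that a left move divides the right-tape error by $4$. Replacing $\bar r$ by $4\bar r-a$ keeps the branch rational affine, so this is repairable, but as written the simulation is false on left moves.

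A second, smaller issue comes from replacing the floor by a continuous plateau function and putting arbitrary affine maps on the gaps. This is unnecessary, because the PAM definition used here imposes no continuity. The paper simply uses the genuine $\lfloor\cdot\rfloor$ and $\{\cdot\}$, which recover $r_0$ and $\encodagemot(r^\bullet)$ exactly for \emph{every} finitely supported word over $\{0,1,3\}$. Your version is only correct under the trailing-blank invariant, and that invariant is not preserved. A left move from $(q,\,0^\omega,\,r)$ writing $a'\neq 0$ produces the right word $0\,a'\,r^\bullet$, for which $4\bar r'\in(0,1)$ lies in one of your gaps, where $\tu f$ is arbitrary. Your remark about mapping the blank to $0$ only at the tail does not prevent this. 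You should either use the exact floor and fractional part as the paper does, or impose and justify a normal form on $M$ that excludes such moves. Similarly, your halting convention $(q,\bar l/4,\bar r/4)$ prepends blanks unless the tape has been erased; the paper's appendix simply makes halting configurations fixed points.
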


A proof is in Appendix \ref{appPreuveTH}.
The transition function of a TM can be seen as a PAM (Theorem \ref{thEmbeddingTM}). 
Note that, the transition function in a(-n oracle) TM is Lipschitz, a feature the proof relies on. It comes from the fact that, from the encoding we consider, applying one transition implies that the encoded real is at most multiplied by 4 and at least divided by $\frac14$.
We extend Theorem \ref{thEmbeddingTM} to deal with real numbers. We prove we can simulate an oracle since we are dealing with an oracle TM.

\begin{proposition}
	Let $M = \left(Q, \Sigma, \Delta_M, q_{\init}, Q_{\accept}, Q_{\reject}\right)$ be an oracle TM. We denote by $\tu f$ the transition function of $M$ with oracle $\phi$. We have that $\tu f$ is in $\textbf{PAM} + \Phi$.
\end{proposition}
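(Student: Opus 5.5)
The plan is to split the transition of the oracle machine $M$ into two kinds of steps and treat each separately. The first kind are ordinary (non-query) transitions; the second kind are query transitions, where $M$ enters its query state with the query word $w$ on the query tape.

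For the ordinary transitions I would invoke Theorem \ref{thEmbeddingTM} directly. On the encoding $\encodageconfiguration$ it already gives a PAM with rational coefficients. Concretely, on each region determined by the state $q$ and the scanned symbol, the new tape words are affine images $\bar l' = \bar l/4 + \text{digit}/4$ or $4\bar l - \text{digit}$ (and similarly for $\bar r$). The scanned symbol is read through the plateau structure of $\lfloor 4\bar r\rfloor$, which the $\{1,3\}$ digit convention of Subsection \ref{subsecEnco} makes into finitely many rational convex pieces. The state is an integer coordinate, so branching on it is a finite case split over rational polyhedra. All of this is in \textbf{PAM}.

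For the query step I would add one coordinate $\bar o = \encodagemot(o)$ for the oracle tape (the query tape carrying $w$, written left-aligned so that $\encodagemot(w)$ is the real read off that coordinate). On the polyhedron where the state is $q_{\text{query}}$, the transition is defined as
\[
(q_{\text{query}}, \bar l, \bar r, \bar o) \mapsto (q_{\text{answer}}, \bar l, \bar r, \Phi_S(\bar o)).
\]
By the definition of $\Phi_S$, on the plateau $[\encodagemot(w) - 4^{-(S+2)}, \encodagemot(w) + 4^{-(S+2)}]$ this outputs exactly $\encodagemot(\varphi(w))$, which is the encoding of the oracle answer written on the answer tape. Lemma \ref{lemGap} guarantees that distinct words of length at most $S$ are $4^{-(S+1)}$ apart. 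The plateaus therefore do not overlap, and $\Phi_S$ is well defined and affine between them. The whole transition $\tu f$ is then obtained by gluing: a case distinction on the finitely many polyhedral pieces (state $\times$ scanned-symbol region), each of which is affine or a composition of an affine map with $\Phi_S$. Closure of \textbf{PAM} under composition and piecewise definition over rational polyhedra then gives $\tu f \in \textbf{PAM} + \Phi$.

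The step I expect to be the main obstacle is making the gluing legitimate as a composition in the algebra rather than an ad hoc case split. Since $\Phi_S$ is applied only on the query region, I would write the $\bar o$-component as a selector: a PAM-definable indicator of the region $\{q = q_{\text{query}}\}$ combined with $\Phi_S(\bar o)$ and the ordinary update. Products are not available, so I would implement the ``if'' as a clipped affine combination, $\min(\max(\cdot))$-style, using that all quantities are bounded in $[0,1]$ and that $q$ is integer-separated. This exploits the gaps between plateaus in the same way the $\{1,3\}$ digits do.

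A secondary point is to check that $\tu f$ remains Lipschitz. The ordinary branches have slopes at most $4$. The slopes of $\Phi_S$ between plateaus are bounded by roughly $4^{S+2}$, which is finite for fixed $S$ but depends on $S$; I would note this as exactly the quantity later controlled in the robustness argument.
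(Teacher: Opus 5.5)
Your proposal is correct and is essentially the argument the paper has in mind (the paper states the proposition without a written proof): ordinary transitions are the PAM of Theorem \ref{thEmbeddingTM}, and the query transition applies $\Phi_S$ to the encoding of the query word, which by Lemma \ref{lem63}(iii) returns exactly $\encodagemot(\varphi(w))$. Your explicit $\min/\max$ selector for gluing the $\Phi_S$ branch into the algebra, and your implicit use of the space bound $S$ to ensure $|w|\le S$, make precise two details the paper leaves unstated.
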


Since $g\in \manonclasslighttanh + \Phi$, it is computable in second-order polynomial time.
Hence, with Theorem \ref{thdirectionunpbis} and Corollary \ref{coroTransPtime},  we have that 
\textbf{DRRL} $\in\Pspace^2$.

\subsection{Discrete-time systems}

As in \cite{CSL24}, we restrain our dynamic functions to be Lipschitz. 

\begin{theorem}[\cite{CSL24}, {\cite[Theorem~45]{BlancBournezJoC}}]\label{thmPSmember}
	$\operatorname{\bf{DRRL}}\in\Pspace^2$. 
\end{theorem}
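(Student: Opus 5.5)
The plan is to give an oracle machine that decides \textbf{DRRL} in second-order polynomial space, by adapting the $\Pspace$ algorithm of Theorem \ref{thdirectionunpbis} so that the dynamic $g$ is accessed through oracle queries. The oracle, together with its size function $|\phi|$, supplies the description of $g$ and of the oracle $\Phi$. First, I would note that, since $g\in\manonclasslighttanh+\Phi$, Corollary \ref{coroTransPtime} and Theorem \ref{thmCharacPTIME} give an oracle machine that, given a rational point $\tilde{\tu x}$ and a precision $m$, outputs a dyadic $2^{-m}$-approximation of $g(\tilde{\tu x})$. This machine runs in time $P(|\phi|)(|\tilde{\tu x}|+m)$ for some second-order polynomial $P$, so its space is also bounded by that quantity.

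Second, I would discretise the perturbed system. Take $\epsilon=2^{-p}$, where $p$ is given in unary. Replace $X$ by the grid $G_\epsilon=X\cap 2^{-(p+2)}\Z^d$ and define a graph on $G_\epsilon$: $\tilde{\tu x}\to\tilde{\tu x}'$ whenever $\|\tilde{\tu x}'-\widetilde{g}(\tilde{\tu x})\|<\epsilon/2$, where $\widetilde g$ is the approximation at precision $p+2$. Using the Lipschitz constant $L$ of $g$, this graph is sandwiched between $\reach^\mP_{\epsilon'}$ and $\reach^\mP_{\epsilon''}$ for $\epsilon'$ and $\epsilon''$ polynomially related to $\epsilon$. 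Polynomial robustness (Definition \ref{defPolyRob}) makes both of these coincide with $\reach^\mP$ once $p\ge q(n)$. The sizes of $L$ and of the box $X$ are part of the input, so grid points have encodings of polynomial length in $p$, $n$ and $\log L$.

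Third, I would decide reachability in this graph. Since $G_\epsilon$ has at most exponentially many vertices, reachability is in $\mathbf{NPSPACE}$: guess the trajectory point by point, storing only the current point and a counter bounded by $|G_\epsilon|$, and calling the $\FPtime^2$ subroutine for $g$ at each step. Savitch's theorem then relativises to oracle machines and keeps the space a second-order polynomial. The answer is accepted when the current point lies in $B(\tu y,\epsilon)$.

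The main obstacle is the sandwiching step. It has to be shown that rounding to the grid and evaluating $g$ approximately neither create spurious reachability nor destroy genuine reachability, and the loss must stay polynomial so that robustness applies uniformly, even though the number of iterations is unbounded. I would handle this as in \cite{CSL24, BlancBournezJoC}. The key point is that each step is re-rounded independently, so the errors are absorbed into the $\epsilon$-perturbation at every step and do not accumulate, because every discretised step is itself a legitimate $\epsilon$-perturbed step. Lipschitzness is needed only to control the effect of rounding within a single step.
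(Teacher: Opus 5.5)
Your proposal follows the same route as the paper, which likewise combines the second-order polynomial-time evaluability of $g\in\manonclasslighttanh+\Phi$ (Corollary \ref{coroTransPtime}) with the Savitch-style reachability search on a discretisation of $X$ from \cite{CSL24, BlancBournezJoC}. One fix is needed: the grid mesh must shrink with the Lipschitz constant, e.g.\ $2^{-(p+3)}/(L+1)$ instead of $2^{-(p+2)}$, because with your mesh the rounded exact orbit is only guaranteed to lie within $(1+L)\epsilon/8+\epsilon/4$ of $\widetilde g$, which can exceed your edge threshold $\epsilon/2$ as soon as $L\ge 1$ (your remark that grid encodings depend on $\log L$ suggests this is what you intended).
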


The proof of Theorem \ref{thmPSmember} uses the same idea as the proof of Savitch's theorem \cite{Sip97}, where we reason on the reachability in a graph. We properly discretise the domain of the dynamical system, using the error we allow, and see the ``cubes'' as the vertices of a graph. 

\paragraph*{Technical Lemma: Completeness result}

\begin{lemma}\label{DRRLHard}
	$\operatorname{\bf{DRRL}}$ is $\Pspace^2-\leq_m^2$-hard.
\end{lemma}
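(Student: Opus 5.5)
We reduce from \textbf{Basic $\Pspace^2$}, which is $\Pspace^2$-$\leq_m^2$-complete by \cite{KawCook12}; it then suffices to show \textbf{Basic $\Pspace^2$} $\leq_m^2$ \textbf{DRRL}. Fix an instance $\langle M,\bar\mu,\phi\rangle$ and a string $u$ with $n=|u|$, and set $S=\mu(n)$, which can be read off $\bar\mu(u)=0^{\mu(n)}$. We first normalise $M$ as in Subsection \ref{subsecEnco}. The machine is clocked so that it halts within $T_{\max}=|Q|\cdot S\cdot 3^{2S}$ steps. It erases its tapes before halting, so the halting configurations are exactly $C_{\text{accept}}$ and $C_{\text{reject}}$. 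On halting states we use the contracting convention $\Next(q,l,r)=(q,l/4,r/4)$. All of this costs only a constant factor in space.

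The functional part $\mathfrak{s}(\phi)$ will produce the instance $\mathcal{H}=(X,g)$ with the following data:
\begin{itemize}
\item $X=[0,|Q|]\times[0,1]^3$, where one extra coordinate holds the encoded query/answer tape;
\item $g=\bar{\Next}$ from Lemma \ref{lemNext}, run at internal precision $m=q(n)$, with each oracle step implemented by the base function $\Phi_S$ applied to the encoded query word;
\item $q$ the polynomial of Corollary \ref{corPolyRob};
\item $\tu y=\encodageconfiguration(C_{\text{accept}})=(q_{\text{accept}},0,0,0)$ and $p=q(n)$ written in unary.
\end{itemize}
The map $\mathfrak{t}(\phi)$ sends $u$ to $\tu x=\encodageconfiguration(C_0(u))$, which is computable in polynomial time by Lemma \ref{lemGammaWordPtime}, together with the other rational parameters. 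Both $\mathfrak{s}$ and $\mathfrak{t}$ are in $\FPtime^2$: the only work is writing $M$'s transition table into a fixed term of $\manonclasslighttanh+\Phi$ and writing $S$, $q(n)$ in unary. The term $g$ is Lipschitz. Each branch is affine with coefficients $4$ or $1/4$, the smooth selectors $\sendtanh$ and $\TTtanh$ built from $\tanh$ are Lipschitz, and $\Phi_S$ is piecewise affine with slopes bounded by $4^{O(S)}$. Since $\tu y$ is a point rather than a box, the contraction at halting is what makes the point target work. After halting, the trajectory contracts toward $(q_{\text{accept}},0,0,0)$ and eventually enters $B(\tu y,2^{-p})$, by the bound $\frac43(\epsilon+2^{-m})$ in item 2 of Lemma \ref{lemAutoCor}.

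Correctness goes through the gap and auto-correction lemmas. Lemma \ref{lemGap} gives separation $4^{-(S+1)}$ between distinct configurations. Lemma \ref{lemAutoCor} shows that every $2^{-q(n)}$-perturbed trajectory stays within $4^{-(S+c)}$ of the true run. For oracle steps we use the remark after Lemma \ref{lemAutoCor}: if the point is within $4^{-(S+2)}$ of a valid query configuration, $\Phi_S$ returns exactly $\encodagemot(\phi(w))$, because $\Phi_S$ is constant on those plateaus. So the oracle step adds no error and even resets the answer coordinate. If $M^\phi$ accepts $u$, the unperturbed trajectory reaches $\tu y$. If $M^\phi$ rejects, every perturbed trajectory stays near $C_{\text{reject}}$ or near non-accepting configurations. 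It therefore remains at distance greater than $4^{-(S+1)}-4^{-(S+c)}>2^{-p}$ from $\tu y$, which also shows that $\mathcal{H}$ is polynomially robust, as Corollary \ref{corPolyRob} requires. Hence $\reach^{\mathcal{H}}_{2^{-p}}(\tu x,\tu y)=1$ iff $M^\phi$ accepts $u$, which is exactly the condition $\theta\circ\mathfrak{t}(\phi)$ of Definition \ref{defTopoRed}.

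The main obstacle is the oracle step. First, we must check that the query coordinate's perturbation does not push it off a plateau of $\Phi_S$. The plateau width $4^{-(S+2)}$ exceeds the invariant $4^{-(S+c)}$ once $c\ge 3$, so we increase $c$ in $q$ if needed. Second, the Lipschitz constant of $\Phi_S$ between plateaus may amplify errors, but only on points that are never visited. Third, the map $\phi\mapsto\Phi_S$ must count as a type-2 polynomial-time transformation. I would handle the third point by noting that $\Phi_S$ is evaluated via a single oracle query of length at most $S$ after rounding, so $|\phi|$ enters only as $|\phi|(S)$, a second-order polynomial.
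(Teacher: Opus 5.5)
Your reduction is the paper's. You reduce \textbf{Basic $\Pspace^2$} to \textbf{DRRL} with $g=\bar{\Next}$ (oracle steps handled through the plateaus of $\Phi_S$), $\mathfrak{t}$ sending $u$ to $\encodageconfiguration(C_0)$, $p=q(n)$ from Corollary~\ref{corPolyRob}, and correctness from Lemmas~\ref{lemGap} and~\ref{lemAutoCor}. Your treatment of the oracle step and of $\mathfrak{s},\mathfrak{t}\in\FPtime^2$ is more explicit than the paper's.

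The one inference that does not go through as written is in the accepting direction. With your choice $m=q(n)=p$, you argue that after halting the orbit enters $B(\tu y,2^{-p})$ ``by the bound $\frac{4}{3}(\epsilon+2^{-m})$''. That is only an upper bound. Even for the unperturbed run ($\epsilon=0$) it equals $\frac{4}{3}2^{-p}>2^{-p}$, so it does not place any point of the orbit in the ball.

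This is exactly why the paper takes $m=2q(|u|)$. The accumulated drift $(T_{\max}+1)\,4^{S}\,2^{-m}$ is then already below $2^{-p}$ at the halting time, and the accepting direction needs no post-halting argument. Alternatively, keep $m=q(n)$ and use the freedom of the final perturbation. Once the orbit is within $4\cdot 2^{-p}$ of $\tu y$, which the contraction eventually ensures, $\bar{\Next}$ of the current point is within $2\cdot 2^{-p}$ of $\tu y$. Some admissible next point therefore lies in $B(\tu y,2^{-p})$.

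Relatedly, your sentence ``the unperturbed trajectory reaches $\tu y$'' holds only up to this residual. Since $\bar{\Next}$ is merely a $2^{-m}$-approximation of $\Next$, the exact orbit approaches $\tu y$ (to within about $\frac{4}{3}2^{-m}$) but need not ever equal it. So the exact side $\reach^{\mathcal{H}}(\tu x,\tu y)$ of the polynomial-robustness identity is not actually established for a point target. The paper's proof has the same looseness: it invokes Corollary~\ref{corPolyRob}, which is proved for the ball $Y$, not for a point. This is therefore not a departure from the paper, but you should not present that step as proved.
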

	
The proof consists in reducing (Definition \ref{defTopoRed}) \textbf{Basic $\Pspace^2$} to \textbf{DRRL}:
The function $\mathfrak{t}$ maps the initial configuration of $M^\phi$ to the starting point of the trajectory. And the function $\mathfrak{s}$ encodes in the dynamical system the execution of the machine.
More formally, we first state and use the following lemma:

\begin{lemma}\label{lem63}
	Let $\varphi \in \mathbf{Reg}$ and $S \in \N$. Then: 
	\begin{itemize}
		\item (i) $\Phi_S$ is Lipschitz, with constant at most $8\cdot 4^{S}$; 
		\item (ii) $\Phi_S$ is computable in second-order polynomial time; 
		\item (iii) $\Phi_S(\tilde z) = \encodagemot(\varphi(w))$ whenever $|w| \le S$ and $|\tilde z - \encodagemot(w)| \le 4^{-(S+2)}$.
	\end{itemize}
\end{lemma}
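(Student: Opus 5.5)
The plan is to exploit the geometry of the encoding: the plateau centres $\encodagemot(w)$, for $|w|\le S$, are well separated by Lemma \ref{lemGap}. First, list the words $w$ with $|w|\le S$ and sort their encodings increasingly as $a_1<a_2<\dots<a_N$, where $N\le 3^{S+1}$. By Lemma \ref{lemGap}, applied to the tape component with head and state fixed, consecutive centres satisfy $a_{j+1}-a_j\ge 4^{-(S+1)}$. Hence the closed plateaus $I_j=[a_j-4^{-(S+2)},a_j+4^{-(S+2)}]$ are pairwise disjoint. Indeed, the gap between $I_j$ and $I_{j+1}$ is at least $4^{-(S+1)}-2\cdot4^{-(S+2)}=\tfrac12 4^{-(S+1)}=2\cdot 4^{-(S+2)}$. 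It follows that $\Phi_S$ is well defined as a continuous piecewise affine map: constant on each $I_j$, affine on each gap, and extended by a constant outside $[a_1-4^{-(S+2)},a_N+4^{-(S+2)}]$. Item (iii) is then immediate from the definition, since $\tilde z\in I_j$ for the $j$ with $a_j=\encodagemot(w)$.

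For (i), the slope on a gap between $I_j$ and $I_{j+1}$ is
\[
\frac{|\encodagemot(\varphi(w_{j+1}))-\encodagemot(\varphi(w_j))|}{\text{gap length}}.
\]
The numerator is at most $1$ because $\encodagemot$ takes values in $[0,1]$; more precisely, it is at most $\sum_n 3\cdot4^{-(n+1)}=1$. The denominator is at least $2\cdot4^{-(S+2)}=4^{-S}/8$. So each slope is at most $8\cdot4^S$. A continuous piecewise affine function with all slopes bounded by $L$ is $L$-Lipschitz, which gives (i). I will double-check that the separation constant $4^{-(S+1)}$ from Lemma \ref{lemGap} indeed applies to words of unequal lengths, where a blank appears in one word. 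The proof of Lemma \ref{lemGap} covers this case, giving the difference $\ge 4^{-(i+1)}$ with $i\le S$.

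For (ii), the approach is as follows. Given an oracle $\varphi$ and a Cauchy name of $z$, with output precision $2^{-n}$, I first query the name of $z$ at precision $2^{-(n+2S+c)}$ with $c$ a small constant. The Lipschitz bound $8\cdot4^S$ makes this enough. Next, I decode the base-4 expansion of the approximation $d$ digit by digit, up to $S+2$ digits, to find the at most two centres $a_j,a_{j+1}$ surrounding $d$. The candidate words are read off greedily: digits in $\{1,3\}$, rounding to the nearest admissible digit, with the blank case handled. Then I query $\varphi(w_j)$ and $\varphi(w_{j+1})$ and compute $\encodagemot$ of the answers, truncated to $n+O(1)$ digits via Lemma \ref{lemGammaWordPtime}. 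Finally, I output either the constant value or the affine interpolation, computed with dyadic arithmetic. All quantities have size polynomial in $n$, $S$, and $|\varphi|(S)$, so this is second-order polynomial time, with $S$ treated as part of the input given in unary.

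The main obstacle is the locating step in (ii). We must find the neighbouring centres $a_j,a_{j+1}$ of an arbitrary point without enumerating the exponentially many words. The sketch would compute the largest encoding $\le d$ and the smallest encoding $\ge d$ by scanning the digits of $d$. At the first digit not in $\{1,3\}$, we branch: for example, a digit $2$ yields neighbours ending in $1\,3\,3\cdots$ and in $3$ (truncated at length $S$), and a digit $0$ yields the prefix itself and the previous prefix-extension. Boundary precision issues, when $d$ lies within the approximation error of a plateau endpoint, are harmless because $\Phi_S$ is continuous. Either interpolation or the constant then gives an answer within $2^{-n}$.
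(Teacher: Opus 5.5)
Your proposal is correct and follows the paper's proof step for step: (iii) holds by definition; (i) follows from the $2\cdot 4^{-(S+2)}$ gap between consecutive plateaus given by Lemma~\ref{lemGap}, across which $\Phi_S$ varies by at most $1$; and (ii) comes from reading $z$ at precision about $2^{-n}/(8\cdot 4^{S})$, locating the one or two surrounding plateau words, querying $\varphi$ on them and interpolating. Your digit-scanning treatment of the locating step is more detailed than the paper's one-line assertion, with two small corrections: in the digit-$0$ branch the upper neighbour is the prefix extended by the digit $1$, and when all $S$ scanned digits lie in $\{1,3\}$ the upper neighbour is obtained by turning the last $1$ into a $3$ and truncating there (if there is no $1$, there is no upper neighbour, and $\Phi_S$ is constant to the right).
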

\begin{proof}[Proof of Lemma \ref{lem63}]
	(iii) is the definition. For (i), by Lemma \ref{lemGap} two plateaus are separated by an interval of length at least $2\cdot 4^{-(S+2)}$, on which $\Phi_S$ varies by at most $1$. For (ii), knowing $z$ at precision $2^{-n}/(8\cdot 4^{S})$ determines the one or two words whose plateaus surround $z$; query $\varphi$ on them and interpolate.
\end{proof}

\begin{proof}[Proof of Lemma \ref{DRRLHard}]
	We reduce \textbf{Basic $\Pspace^2$} to \textbf{DRRL}.
	Let $M, \mu, \phi$ and $u$ be the inputs of \textbf{Basic $\Pspace^2$}. Without loss of generality, we assume $M^\phi = \left( Q, \Sigma, \tu f, q_{\init}, Q_{\accept}, Q_\reject\right)$ has a unique accepting state $q_{\accept}$ with  $0^{\mu(|u|)}$ being the content of the tape. 
	We construct the inputs of $\textbf{DRRL}$ from the inputs of \textbf{Basic $\Pspace^2$} and define the functions $\mathfrak{s}$ and $\mathfrak{t}$ from Definition \ref{defTopoRed}: 
	\begin{itemize}
		\item For $C_0$ the initial configuration of $M^\phi$ on $u$, we define $\tu x = \encodageconfiguration(C_0)$. 
		\item For $C_{f_a}$ the (unique) final accepting configuration of $M$,  $\tu y = \encodageconfiguration(C_{f_a})$.
		\item $\mathfrak{s}: <M, \mu, \phi> \rightarrow X,g, p,\tu y$, where  $X$ is the set of all the real numbers encoded by possible configurations of $M^\phi$, 
		$S = \mu(|u|)$ is the space bound, $p = q(|u|)$ with $q$ the polynomial of Corollary \ref{corPolyRob}, and $g$ is $\bar{\Next}$ with parameter $m = 2\cdot q(|u|)$, so that the drift of $\bar{\Next}$ itself along the run, at most $(T_{\max}+1)\,4^{S}\,2^{-m}$, stays below the radius $2^{-p}$ of the target ball.
		\item $\mathfrak{t}$ sends the input $u$ to $\encodageconfiguration(C_0)$.
	\end{itemize}
	The function $\mathfrak{s}$  builds the \textbf{DRRL} instance; its dynamic $g$ is Lipschitz and computable in second-order polynomial time by Lemma \ref{lemNext} and Corollary \ref{coroTransPtime} and it is Lipschitz 
	The function $\mathfrak{t}$ maps the input $u$ to the starting point of the trajectory in $X$. 
	It is computable in polynomial time by Lemma \ref{lemGammaWordPtime}.
	The system  $(X,g)$ is polynomially robust by Corollary \ref{corPolyRob}, and we have $(\frak{s}(\phi)) \circ \frak{t}(\phi) (u)$ corresponds to ``$M^\phi$ accept $u$''.
	Thus, we have $\textbf{DRRL}$ is $\Pspace^2$-$\leq_m^2$-hard.
\end{proof}

\begin{corollary}\label{DRRLComp}
	$\operatorname{\bf{DRRL}}$ is $\Pspace^2$-$\leq_m^2$-complete.
\end{corollary}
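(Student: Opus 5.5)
The corollary is the conjunction of membership and hardness, both of which are already available in the paper. The plan is to invoke them in turn and then check that they concern the same problem under the same notion of reduction.

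For membership, I appeal to Theorem \ref{thmPSmember} (from \cite{CSL24} and \cite[Theorem~45]{BlancBournezJoC}), which gives $\operatorname{\bf{DRRL}}\in\Pspace^2$. I want to recall briefly why its hypotheses are met by the instances of \textbf{DRRL} as defined:
\begin{itemize}
\item the domain $X$ is a closed rational box;
\item the dynamic $g\in\manonclasslighttanh+\Phi$ is Lipschitz and, by Corollary \ref{coroTransPtime} and Lemma \ref{lem63}(ii), computable in second-order polynomial time;
\item the perturbation $2^{-p}$ has $p$ given in unary.
\end{itemize}
So the Savitch-style discretisation argument behind Theorem \ref{thdirectionunpbis} runs with the oracle answers counted in the second-order size. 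The grid has mesh polynomially related to $p$ and the Lipschitz constant. Reachability in the induced graph is then decided in space polynomial in $p$ and $|\phi|$.

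For hardness, I use Lemma \ref{DRRLHard}. It reduces \textbf{Basic $\Pspace^2$}, which is $\Pspace^2$-$\leq_m^2$-complete by the proposition of \cite{KawCook12}, to \textbf{DRRL} via functions $\mathfrak{s},\mathfrak{t}\in\FPtime^2$. By transitivity of $\leq_m^2$, every $P'\in\Pspace^2$ satisfies $P'\leq_m^2 \operatorname{\bf{DRRL}}$. Transitivity holds because second-order polynomial-time functionals compose, and second-order polynomials are closed under composition.

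Combining the two parts with Definition \ref{defPspaceCompKC} yields completeness. The only delicate point is consistency between the two halves: the instances produced by $\mathfrak{s}$ in Lemma \ref{DRRLHard} must satisfy exactly the hypotheses assumed by the membership theorem. These are Lipschitzness (Lemma \ref{lem63}(i) and Lemma \ref{lemNext}), polynomial robustness (Corollary \ref{corPolyRob}) and $p$ given in unary. I expect this check to be the main, if mild, obstacle, particularly that the Lipschitz constant $8\cdot 4^S$ keeps the membership algorithm within polynomial space. Its logarithm is $O(S)$, so that bound does hold.
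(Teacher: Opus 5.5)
Your proposal is correct and follows the paper, which obtains this corollary directly by combining the membership result (Theorem \ref{thmPSmember}) with the hardness result (Lemma \ref{DRRLHard}) under Definition \ref{defPspaceCompKC}. The ``consistency'' check you single out is not an extra step: membership holds for every \textbf{DRRL} instance, and the fact that $\mathfrak{s}$ produces valid \textbf{DRRL} instances is already part of what the reduction in Lemma \ref{DRRLHard} must establish.
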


	\subsection{Continuous-time systems}
	
	The goal of this subsection is to extend Lemma \ref{DRRLHard}. It relies on previous results from \cite{ICALP24}. The main new results here are Lemma \ref{lemAmeSin} and Theorem \ref{CRRLhard}.

	\begin{definition}[Continuous-time Dynamical Systems]\label{defCTDS}
		A continuous-time dynamical system is a tuple $(X, g_h)$, with $X$ a set and $g_h : \R\times X \rightarrow X$ such that $g_h$ is the solution of
			$g_h(0,\tu x) = y(\tu x)$ (some initial condition depending only on $\tu x$) and $\frac{\partial g_h(t,\tu x)}{\partial t} = h\left( t, g_h(t, \tu x) \right)$.    
	\end{definition}

 Kawamura has shown in \cite{kawamura2009lipschitz} that solving Lipschitz ODEs is PSPACE-complete. Our results differ in three respects. First, we decide a reachability relation rather than compute a solution, with no time bound given as part of the input. Second, we do not restrict ourselves to compact domains. Third, our systems are assumed to be robust, an assumption under which decidability is already non trivial \cite{asarin01perturbed, CSL24}, and our lower bound holds for the second-order reductions of \cite{KawCook12}. 

We can extend our previous result to \emph{continuous}-time dynamical systems:

	\begin{problems}
	{Continuous Reach Robust Lipschitz (CRRL)}{A continuous-time polynomially robust dynamical system $\mP_C = (X_C,g_C)$, with $X_C\subseteq\R$ a closed rational box, $g_C : \R \times X_C \rightarrow X_C\in\contClass+\Phi$, $g_C$ Lipschitz, $p\in\N$ in unary, $\tu x, \tu y \in X_C $}{$\reach^{\mP_C}_{2^{-p}}(\tu x, \tu y)$}
\end{problems}

	\paragraph*{The Simulation Trick: a Polynomial Improvement}

To prove the $\Pspace^2$-hardness of this problem, we reduce from \textbf{DRRL}. 
We first prove that we can simulate a discrete ODE with a continuous ODE. 
We rely on a simulation of a discrete dynamic by a continuous ODE attributed to Branicky (\cite{Bra95}). 
The core idea of this method is, starting from a discrete dynamic,
to construct a continuous ODE 
whose (smooth) dynamic goes through all the points of the discrete dynamic.
We improve this method for our context (Lemma \ref{lemAmeSin}).

	\begin{definition}[Ideal iteration trick \cite{Bra95}] \label{defBranicky}
		Consider the following discrete ODE, given by functions $\tu g$ and $\tu u$: $\tu f(0,\tu x) = \tu g(\tu x); \frac{\delta \tu f}{\delta t}(t,\tu x) = \tu u \left(\tu f(t,\tu x),t,\tu x\right).$		
		Then, let $\tu G(\tu v, t, \tu x) = \tu u(\tu v,t, \tu x) + \tu v$, 
		and consider the (continuous) ODE: 
		\vspace{-.2cm}
		\begin{equation} \label{code}
				\tu y_1' = \mathfrak{c} \cdot \left(\tu G(r(\tu y_2),r(t),\tu x) - \tu y_1 \right)^3 \theta\left(\sin(2\pi t)\right) \quad\text{and}\quad
				\tu y_2' = \mathfrak{c} \cdot \left(r(\tu y_1) - \tu y_2 \right)^3 \theta\left(- \sin(2\pi t)\right)
		\end{equation}
		where $\tu y_1(0,\tu x) = \tu y_2(0,\tu x) = \tu g(\tu x)$,  $\mathfrak{c}$ is a constant, $\theta(x)=0$ if $x\leq 0$ and $\theta(x) >0$ if $x>0$.  Here, $r$ is a rounding function: we mean, by construction, $G$ preserves the integers, and $r$ is a function that maps a real value close to some integer to this integer: assume, say,  that for $z \in \left[n-\frac14,n+\frac12\right]$, $r(z)=n$, for any integer $n \in \Z$. 
	\end{definition}

	We write $r(\tu v)$ for applying the function  $r: \R \to \R$ componentwise on the vector $\tu v$.
We do not need to specify what $r(z)$ values for a $z$ outside such interval: the following reasoning remains correct, whatever it is.
The solution of the continuous ODE in Definition \ref{defBranicky} emulates continuously the associated discrete ODE.
Intuitively, for $f$ the solution of the discrete ODE and $g$ the solution of the constructed continuous ODE, for any $\tu x, \tu z \in\R$, if $f(\tu x) = \tu z$ then $g(\tu x) = \tu z$. 
More details about this proposition and an improvement (from \cite{ICALP24}) are in Appendix \ref{detailBra}.
We actually improve the latter result, by expressing the system (\ref{code}) without using a sine. The point is to replace $t\mapsto \sin(2\pi t) $ by function $\tu y$ such that: 

\begin{equation}\label{eqSinus}
	\frac{\partial \tu y}{\partial t} (t) = 2 \pi \cdot \tu z(t) \quad\quad\text{ and }\quad\quad \frac{\partial \tu z}{\partial t} (t) = - 2 \pi \cdot \tu y(t)
\end{equation}
where $\tu y(0) = 0$ and $\tu z(0) = 1$.
This is a \emph{polynomial} system of ODEs, their solutions are 1-periodic. It enables us to avoid the (direct) use of sine and cosine functions in the simulation, and still have an expression for periodic functions. It also simplifies the algebra $\contClasslim$: the cosine as a base function is no longer necessary.
The main point is to make sure the errors remain additive, so the simulation stays in polynomial time.

\begin{lemma}\label{lemAmeSin}
	Suppose that, in \eqref{code}, we replace function $\theta(z)$ and function $r(z)$ by some suitable approximations: we take $\theta(x)=\relu(x)$, $\theta_{\epsilon'}(x)$, $r_{\epsilon'(z)}$ such that $\theta(z)=_{\epsilon'}\theta_{\epsilon'}(z)$ and $r(x)=_{\epsilon'}r_{\epsilon'}(x)$, take constant $\mathfrak{c}$ big enough, and replacing $t \mapsto \sin(2\pi t)$ by Equation \ref{eqSinus}, the solution of the obtained ODE continuously simulates the associated discrete ODE, with error at most $\epsilon$ if $\epsilon'$ is taken sufficiently small.
\end{lemma}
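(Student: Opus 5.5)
The plan is to reduce to the known robust version of Branicky's trick (the improvement from \cite{ICALP24} recalled in Appendix \ref{detailBra}). That version already shows that \eqref{code}, with $\theta$ and $r$ replaced by $\epsilon'$-approximations and $\mathfrak{c}$ large, simulates the discrete ODE within error $\epsilon$. The only new ingredient is to replace $\sin(2\pi t)$ by the component $\tu y$ of the polynomial system \eqref{eqSinus}. So I will argue that this substitution changes nothing, or at most adds a perturbation the earlier argument already absorbs.

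First, I solve \eqref{eqSinus} explicitly. The system is linear with constant coefficients, and $\tu y(0)=0$, $\tu z(0)=1$. By uniqueness (Picard--Lindel\"of, globally Lipschitz right-hand side), the solution is $\tu y(t)=\sin(2\pi t)$, $\tu z(t)=\cos(2\pi t)$, which is 1-periodic. Hence, in exact arithmetic, the coupled system obtained by appending \eqref{eqSinus} to \eqref{code} and feeding $\theta(\pm\tu y)$ has exactly the same $(\tu y_1,\tu y_2)$ trajectory as the original. The content of the lemma is therefore its numerical, robust side.

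Second, I treat stability, which I expect to be the main obstacle. Definition \ref{defRobODE} requires the perturbed system to stay $\epsilon(n)$-close; here $\tilde{\tu y},\tilde{\tu z}$ are perturbed by at most $\epsilon'$ in the derivative at each time. The rotation system is an isometry, so $(\tilde{\tu y}-\tu y,\tilde{\tu z}-\tu z)$ has norm at most the integrated perturbation, $\le 2\epsilon' t$. The growth is thus linear in $t$, not exponential: the errors remain additive.

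Third, I propagate this drift through the phases. On each half-period $[k,k+\frac12]$, $\theta(\tilde{\tu y})=\relu(\tilde{\tu y})$ differs from $\relu(\sin 2\pi t)$ by at most $2\epsilon' t$, since $\relu$ is 1-Lipschitz. On the ``inactive'' half-period, the frozen variable therefore moves by at most $\mathfrak{c}\cdot K^3\cdot 2\epsilon' t$, where $K$ bounds $|\tu G-\tu y_1|$. On the active half-period, $\int \relu(\sin 2\pi s)\,ds=1/\pi$ up to $O(\epsilon' t)$. The cubic contraction then brings $\tu y_1$ within $(\pi/(2\mathfrak{c}\cdot 1/\pi))^{1/2}$-type error of $r(\tu y_2)$'s image, just as in the original analysis. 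Because $r$ rounds to integers (plateau width $\ge 1/4$), every error below $1/4$ is erased at the next step, so errors do not accumulate across steps.

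It then suffices to choose $\mathfrak{c}$ so that the contraction error is below $\epsilon/2$, and $\epsilon'$ polynomially small in the number of simulated steps $T$, say $\epsilon'\le \epsilon/(8\mathfrak{c}K^3 T)$, so the linear drift stays below $\epsilon/2$ up to time $T$. If one wants validity for unbounded $t$, I would instead reset the phase using the third clause of Definition \ref{defRobODE} (doubling from $t/2$), or note that the rounding $r$ absorbs the drift at each step anyway. I expect this to be the delicate point to write carefully.
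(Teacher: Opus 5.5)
Your proposal is correct, and it reaches the statement by a shorter route than the paper.

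\textbf{Your route.} By uniqueness, the clock of \eqref{eqSinus} is exactly $\tu y(t)=\sin(2\pi t)$. The coupled system is triangular, so it has literally the same $(\tu y_1,\tu y_2)$-trajectory as \eqref{code} with $\theta_{\epsilon'},r_{\epsilon'}$. This makes the lemma as stated an immediate corollary of Proposition~\ref{the:manon:trick}.

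\textbf{The paper's route.} It re-runs the half-period induction behind that proposition: Lemma~\ref{lem:quatre:cinq} on the active half, near-freezing on the inactive half, reinitialisation by $r_{\epsilon'}$ through Lemma~\ref{lemInterStep}, and a per-step budget of $\frac72\epsilon'<\epsilon/2$. It uses $\theta_{\epsilon'}(\pm\tu y(t))$ exactly as it would use $\theta_{\epsilon'}(\pm\sin 2\pi t)$. It mentions an additional error coming from \eqref{eqSinus} but never quantifies it; for the exact solution, as you note, that error is zero.

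\textbf{What each buys.} The paper's route gives an explicit per-step error budget. Yours isolates the only genuinely new issue, the perturbed clock, and handles it more carefully than the paper does on two points. First, on the inactive half the frozen variable moves by $\mathfrak{c}K^3$ times the error on $\theta_{\epsilon'}(\mp\tilde{\tu y})$, not by $\epsilon'$ as the paper writes. So $\epsilon'$ must beat $\mathfrak{c}$, which your choice $\epsilon'\le\epsilon/(8\mathfrak{c}K^3T)$ ensures. Second, the rotation is an isometry with no error correction, so the clock error grows linearly with the horizon. That growth is additive and harmless for polynomial precision, but the paper's remark that the number of steps does not influence the analysis glosses over it.

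\textbf{One incorrect aside.} The rounding $r$ never acts on $(\tilde{\tu y},\tilde{\tu z})$, so it cannot absorb the clock drift. For long horizons you should rely instead on one of two mechanisms. One is your horizon-dependent $\epsilon'$, which costs only $\log_2 T$ extra bits. The other is the $[0,\Delta]$-plus-doubling structure of Definition~\ref{defRobODE}; doubling does not reset the phase, but the segment errors of an isometric flow only add up.
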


The proof is an induction on the intervals of the domain. The main difficulty of the proof is to ensure that the computation errors remain additive (not multiplicative), to guarentee polynomial space. 

	As in Proposition \ref{the:manon:trick}, to guarantee an error of $\epsilon=2^{-n}$, it is sufficient to take $\epsilon'=2^{-q(n)}$  and $\theta_{\epsilon'}(x)=\relutanh(2^{q(n)},x)$ for some polynomial $q$. 
Note that Equation (\ref{eqSinus}) is a robust continuous ODE (see Definition \ref{defRobODE}).
To prove this lemma, we first state the following one:

\begin{lemma}[{Improved error analysis of ODE \eqref{eq:de:branicky},   \cite[Lemma 4.5]{TheseRiccardo} \cite[5.2]{BGDPRiccardo2022}}]  \label{lem:quatre:cinq}
	Consider a point $b \in \R$, some $\gamma>0$ some reals $t_0<t_1$, and a function $\phi: \R \rightarrow \R$ with the property that $\phi(t) \geq 0$ for all $t \geq t_0$ and $\int_{t_0}^{t_1} \phi(t) d t>0$. Let $\rho, \delta \geq 0$ and let $\bar{b}, E: \mathbb{R} \rightarrow \mathbb{R}$ be functions such that that $|\bar{b}(t)-b| \leq \rho$ and $|E(t)| \leq \delta$ for all $t \geq t_0$. Then the IVP defined by
	\begin{equation}\label{eqLemError}
		z^{\prime}=\mathfrak{c}\left(\bar{b}(t)-z\right)^3 \phi(t)+E(t)
	\end{equation}
	with the initial condition $z\left(t_0\right)=\bar{z}_0$, 
	where $\gamma>0$ and $\mathfrak{c} \ge \frac{1}{2 \gamma^2 \int_{t_0}^{t_1} \phi(t) dt}$ satisfies
	\begin{enumerate}
		\item  $\left|z\left(t_1\right)-b\right|<\rho+\gamma+\delta\left(t_1-t_0\right)$, independently of the initial condition $\bar{z}_0 \in \mathbb{R}$
		\item $\min \left(\bar{z}_0, b-\rho\right)-\delta\left(t_1-t_0\right) \leq z(t) \leq \max \left(\bar{z}_0, b+\rho\right)+\delta\left(t_1-t_0\right)$ for all $t \in\left[t_0, t_1\right]$.
	\end{enumerate}
\end{lemma}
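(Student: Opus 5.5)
We compare $z$ with the solution of the unperturbed problem, where $\bar b\equiv b$ and $E\equiv 0$. Set $e(t)=z(t)-b$ and $\Phi(t)=\int_{t_0}^{t}\phi(s)\,ds$, which is nondecreasing for $t\ge t_0$ because $\phi\ge 0$ there. The plan is a comparison (barrier) argument on $|e|$, done in two stages. First we remove the disturbance $\bar b$ by treating the band $[b-\rho,b+\rho]$ as the target. Then we absorb the additive error $E$ by shifting the barriers linearly in time by $\delta(t-t_0)$. Shifting linearly is what keeps the errors additive rather than multiplicative, which is what the rest of the paper needs.

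\textbf{Item 2 (invariance of the interval).} Let $U(t)=\max(\bar z_0,b+\rho)+\delta(t-t_0)$. Suppose $z(t)=U(t)$ at some time. Then $z\ge b+\rho\ge\bar b(t)$, so $\mathfrak{c}(\bar b(t)-z)^3\phi(t)\le 0$. Hence $z'(t)\le E(t)\le\delta=U'(t)$, and $z$ cannot cross $U$ upward. A standard strict-inequality argument (replace $\delta$ by $\delta+\eta$ and let $\eta\to 0$) makes this rigorous. The lower barrier $\min(\bar z_0,b-\rho)-\delta(t-t_0)$ is handled symmetrically. This gives item 2 for all $t\in[t_0,t_1]$.

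\textbf{Item 1 (contraction independent of $\bar z_0$).} Define $w(t)=z(t)-b-\rho-\delta(t-t_0)$. While $w>0$ we have $z-\bar b\ge w$, so $(\bar b-z)^3\le -w^3$ and therefore $w'\le-\mathfrak{c}\,w^3\phi(t)$. Integrating $\frac{d}{dt}\left(w^{-2}\right)\ge 2\mathfrak{c}\phi$ on any subinterval where $w>0$ gives
\[
w(t)^{-2}\ge w(t_0)^{-2}+2\mathfrak{c}\,\Phi(t)\ge 2\mathfrak{c}\,\Phi(t).
\]
At $t_1$ this yields $w(t_1)\le (2\mathfrak{c}\Phi(t_1))^{-1/2}\le\gamma$ by the choice of $\mathfrak{c}$, whatever $\bar z_0$ is. If instead $w$ vanishes somewhere in $[t_0,t_1]$, the barrier argument of item 2 (applied from that time on) keeps $w\le 0$ afterwards. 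In both cases $z(t_1)\le b+\rho+\gamma+\delta(t_1-t_0)$. The symmetric function $b-\rho-\delta(t-t_0)-z$ gives the lower bound. Strictness comes from $w(t_0)^{-2}>0$ when $w(t_0)>0$ is finite, and from the barrier case otherwise.

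\textbf{Main obstacle.} The delicate step is the case analysis on the sign of $w$. The differential inequality for $w^{-2}$ only holds on intervals where $w>0$, and $\phi$ may vanish on subintervals, so a naive argument could lose monotonicity. I would first establish the barrier claim ``once $w\le 0$, $w$ stays $\le 0$'' carefully, handling non-Lipschitz touching points via the $\delta+\eta$ perturbation. I would then apply the $w^{-2}$ integration on the maximal initial interval $[t_0,\tau)$ on which $w>0$, using that $\Phi$ is nondecreasing.
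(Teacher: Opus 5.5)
\textbf{Your proof is correct.} The paper does not prove this lemma; it imports it from Gozzi's thesis and Bournez--Gozzi--Gra\c{c}a--Pouly. So the useful comparison is with the usual route in that literature (this is my recollection of those sources, not something in the paper). That route has two stages. First, solve the unperturbed equation $z'=\mathfrak{c}(b-z)^3\phi$ exactly by separation of variables. This gives $|z(t)-b|=\bigl((\bar z_0-b)^{-2}+2\mathfrak{c}\Phi(t)\bigr)^{-1/2}$ and a monotone approach to $b$; it is the computation behind Lemma~\ref{bran:funda}. Second, pass to the perturbed equation by a comparison theorem: sandwich $z$ between the solutions of $z_\pm'=\mathfrak{c}(b\pm\rho-z_\pm)^3\phi\pm\delta$ and remove the linear drift.

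You fuse the two stages. The shifted variable $w=z-b-\rho-\delta(t-t_0)$ satisfies $w'\le-\mathfrak{c}w^3\phi$ on $\{w>0\}$. The exact integration thus becomes an inequality for $w^{-2}$, and the slope-$\delta$ barrier replaces the general comparison theorem. Your approach buys three things:
\begin{itemize}
\item it is self-contained, with no auxiliary IVP and no comparison theorem for a right-hand side that is not globally Lipschitz;
\item it gives a slightly sharper item~2, with $\delta(t-t_0)$ in place of $\delta(t_1-t_0)$;
\item it shows plainly why $\rho$, $\gamma$ and $\delta(t_1-t_0)$ simply add, which is the feature the paper relies on.
\end{itemize}
The explicit route gives in exchange the exact contraction rate of the unperturbed flow.

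One minor caveat concerns regularity. Your touching-point argument assumes $z$ is differentiable everywhere, with $z'$ given by the right-hand side. The statement puts no regularity on $\bar b$, $E$ or $\phi$. For Carath\'eodory solutions you should argue instead on any excursion interval where $z$ exceeds the barrier $U$: there $z'\le\delta=U'$ almost everywhere. So $z-U$, which vanishes at the left endpoint, is nonincreasing on that interval and cannot become positive. This version also makes your $\eta$-perturbation unnecessary. In the paper's application all the data are smooth, so the caveat is harmless there.
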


We notice that $\theta_{\epsilon'}$ is never exactly $0$, because $\tanh$ is never exactly $0$: it is precisely what the function $E$ in Equation \ref{eqLemError} implies. The function $r_{\epsilon'}$ corresponds to a continuous approximation (at precision $\epsilon'$) of a simple function. An explicit construction can be found in \cite{BGDPRiccardo2022}.

\begin{lemma}[Step-by-step reinitialisation]\label{lemInterStep} 
	As $r_\epsilon'$ is almost constant on $[k \cdot \eta - \frac{\eta}{4}, k \cdot\eta + \frac{\eta}{2}]$, for all $k\in\N$, every error smaller than $\frac{\eta}{4}$ is adjusted back to $\epsilon'$ at the next step.
\end{lemma}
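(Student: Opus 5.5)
Lemma \ref{lemInterStep} follows from two facts: $r_{\epsilon'}$ is nearly constant on the capture interval $I_k=[k\eta-\frac{\eta}{4},k\eta+\frac{\eta}{2}]$, and Lemma \ref{lem:quatre:cinq} shows that an integrator fed a nearly constant target forgets its initial condition. So the plan is to show that whenever the current value lies within $\frac{\eta}{4}$ of a grid point $k\eta$, the next half-period resets it to within $O(\epsilon')$ of the exact value, whatever the previous error was.

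First, I would fix the property of $r_{\epsilon'}$ that is used. By construction (as in \cite{BGDPRiccardo2022}, rescaled by $\eta$), $|r_{\epsilon'}(z)-k\eta|\le\epsilon'$ for every $z\in I_k$. Consequently, if a component $\tu y_1$ (resp. $\tu y_2$) satisfies $|\tu y_1-k\eta|<\frac{\eta}{4}$ at the start of a half-period, then $r_{\epsilon'}(\tu y_1)$ stays within $\epsilon'$ of the exact grid value $k\eta$ for as long as $\tu y_1$ remains in $I_k$.

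Second, I would handle the half-period in which $\tu y_1$ is frozen and $\tu y_2$ moves. On this half-period, $\theta_{\epsilon'}$ applied to the sine-replacement of Equation \eqref{eqSinus} is at most $\epsilon'$. Hence $\tu y_1'$ is an error term bounded by $\delta=\mathfrak{c}\,K^3\epsilon'$, where $K$ bounds $|\tu G-\tu y_1|$ on the bounded region given by item 2 of Lemma \ref{lem:quatre:cinq}. Over a time $\frac12$, $\tu y_1$ therefore drifts by at most $\delta/2$. If $\epsilon'$ is small with respect to $\eta$, it stays inside $I_k$, so the target $\bar b(t)=r_{\epsilon'}(\tu y_1(t))$ seen by $\tu y_2$ satisfies $|\bar b-k\eta|\le\rho=\epsilon'$.

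Third, I would apply Lemma \ref{lem:quatre:cinq} to $\tu y_2$ with $b=k\eta$, $\rho=\epsilon'$, $\phi=\relu(-y)$ on the active half-period, $E$ absorbing the $\epsilon'$-discrepancies, and $\gamma=\epsilon'$, taking $\mathfrak{c}\ge 1/(2\gamma^2\int\phi)$. Item 1 gives $|\tu y_2-k\eta|<\epsilon'+\epsilon'+\delta/2=O(\epsilon')$, independently of the initial value $\bar z_0$, and this independence is exactly the reinitialisation. The symmetric half-period is treated the same way: $\tu y_1$ targets $\tu G(r_{\epsilon'}(\tu y_2),\dots)$, which lies within $\epsilon'$ times the Lipschitz constant of $\tu G$ of the exact next grid value.

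The main obstacle is circularity in the constants. The constant $\mathfrak{c}$ depends on $\gamma=\epsilon'$, while the leak $\delta=\mathfrak{c}K^3\epsilon'$ depends on $\mathfrak{c}$, so a leak of order $\epsilon'^{-1}$ would break additivity. I would first try to decouple them by using the exponential smallness of $\relutanh(2^{q(n)},\cdot)$ off its support: take $\theta_{\epsilon'}$ at a precision $\epsilon''=2^{-2q(n)}$, so that $\mathfrak{c}K^3\epsilon''\le\epsilon'$. With this choice each step's error is $O(\epsilon')$, which is below $\frac{\eta}{4}$, so the hypothesis holds again at the next step and the induction closes with additive errors.
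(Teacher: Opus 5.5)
Your first step is already the whole of the paper's proof. The paper reads the lemma as a static property of the rounding function: $|z-k\eta|\le\frac{\eta}{4}$ implies $r_{\epsilon'}(z)=_{\epsilon'}k$. So whatever error below $\frac{\eta}{4}$ enters $r_{\epsilon'}$, the value handed to the next step carries only the error $\epsilon'$. (The paper writes this value as the index $k$, you as the grid point $k\eta$; this is only a normalisation.)

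Your second and third steps, the half-period analysis via Lemma~\ref{lem:quatre:cinq}, are not part of this lemma in the paper. They are the induction step of the proof of Lemma~\ref{lemAmeSin}, which invokes the present lemma. So you prove a stronger, dynamic version by folding that induction step in. What this buys is an honest treatment of the frozen half-period. The paper passes from $\theta_{\epsilon'}(-\tu y(t))=_{\epsilon'}0$ directly to $\tu y_2'=_{\epsilon'}0$, dropping the factor $\mathfrak{c}(\cdot)^3$, and asserts that a large $\mathfrak{c}$ only damps perturbations. You correctly see that the leak has size $\mathfrak{c}K^3\sup|\theta_{\epsilon'}|$. You also see that the precision of $\theta$ must therefore be chosen independently of $\gamma$, and polynomially finer.

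Your exponent for that repair is too small, though. With $\gamma=\epsilon'=2^{-q(n)}$ and $\int\phi=\frac{1}{\pi}$ over an active half-period, the hypothesis of Lemma~\ref{lem:quatre:cinq} forces $\mathfrak{c}\ge\frac{\pi}{2}2^{2q(n)}$. Precision $\epsilon''=2^{-2q(n)}$ then only yields $\mathfrak{c}K^3\epsilon''\approx\frac{\pi}{2}K^3$, which is not below $\epsilon'$. With $\mathfrak{c}=2^{2q(n)+1}$ (admissible since $\frac{\pi}{2}<2$), you need $\epsilon''\le 2^{-3q(n)-1}K^{-3}$. For instance, take $\theta=\relutanh(2^{3q(n)+c_0},\cdot)$ with $c_0\ge 1+3\log_2 K$. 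This is still polynomial, so the induction closes as you describe.

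A minor simplification: on the active half-period $-\tu y(t)\ge 0$, hence $\theta(-\tu y(t))\ge 0$. You can therefore take $\phi$ to be $\theta(-\tu y(t))$ itself rather than $\relu(-\tu y(t))$, which avoids a second $\mathfrak{c}$-amplified error term there.
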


\begin{proof}[Proof of Lemma \ref{lemInterStep}]
	By definition of $r_{\epsilon'}$ in Proposition \ref{the:manon:trick}, for $z\in X$, if $\left|z - k \cdot \eta\right| \leq \frac{\eta}{4}$, then $r_{\epsilon'}(z) =_{\epsilon'} k$. 
	
\end{proof}

\begin{remark}
	By Lemma \ref{lemGap}, Remark \ref{explicationGap}, the condition $\epsilon'\leq\frac{\eta}{4} $ becomes $\epsilon' < 4^{-(S+1)}$, which is exactly the condition needed for $q$ in the proof of Lemma \ref{lemAutoCor} and Corollary \ref{corPolyRob}.
\end{remark}

\begin{proof}[Proof of Lemma \ref{lemAmeSin}]
	The proof is similar to the proof of Proposition \ref{the:manon:trick}, except we have to take into account the additional error of the computation of Equation (\ref{eqSinus}). The key here is also to make sure the errors propagate additively (and not multiplicatively): rounding function corrects errors or order $\epsilon$ whenever its argument is at a distance less than $\frac14\eta$ of some $n\eta$ (Lemma \ref{lemInterStep}). We introduce some error $\epsilon'$ at every step; but the latter is corrected at the next step). 
	The involved constant $\mathfrak{c}$ is of order $2^{n}$,
	but it only multiplies $\phi(t)$ in Equation \ref{eqLemError} and enters the bound of Lemma \ref{lem:quatre:cinq} through $\int\phi$. A larger $\mathfrak{c}$ makes $z$ converge faster to $\bar b$ and leaves the error terms $\rho$, $\gamma$ and $\eta(t_1-t_0)$ unchanged. Since the cubic term is contracting towards $\bar b$, a large $\mathfrak{c}$ damps perturbations rather than amplifying them, so the numerical stability of Definition \ref{defRobODE} depends on $\epsilon'$ only, and $\mathfrak{c}$ is written with $n$ doublings, so Lemma \ref{lemNewBra} is unaffected.
	We use the notations of Definition \ref{defBranicky}.
	
	We prove, by induction, the following property: for all $n\in\N$, $\tu y_1\left(n,\tu x\right) =_\epsilon \tu f\left(n,\tu x\right)$ and $\tu y_2\left(n,\tu x\right) =_\epsilon \tu f\left(n,\tu x\right)$. Also, for all $t \in \left[n,n+\frac12\right]$, we have  $\tu y_1\left(t+\frac12,\tu x\right) =_{\frac{\eta}{4}}  \tu f\left(n,\tu x\right)$ and $\tu y_2\left(t,\tu x\right) =_{\frac{\eta}{4}} \tu f\left(n,\tu x\right)$.
	
	By Lemma \ref{lemInterStep}, the induction is \emph{not} on the sum of errors. The number of steps does not influence the analysis. Only the preservation of the error at each step does.
	
	\underline{Initialisation:}	
	
	For $n=0$, $\tu f(0,\tu x) = \tu y_{1}(0,\tu x) = \tu y_{2}(0,\tu x) = \tu g(\tu x)$.
	\begin{itemize}
		\item For $t \in [0,1/2]$, $r(\tu y_{2})$ is kept close to a constant value $\tu g(\tu x)$, when an error less than $\epsilon'$, if we choose $\epsilon' < \frac14 \eta$. Meanwhile, $r(t)$ is also at a value close to $n$ with an error lesser than $\epsilon'$. 
		We have $\theta_{\epsilon'}\left(-\tu{y(t)}\right) =_{\epsilon'} 0$, thus $\tu{y_2}(t,\tu x) =_{\epsilon'}  0 $, then $y_2(t, \tu x)$ remains close to $y_2(0, \tu x)$ on this interval.
		Here the hypothesis that $\tu G$ is almost constant around  $\N\eta$ means that its value is guaranteed to be at $\epsilon'$ from $\tu G\left(\tu g(\tu x),0,\tu x\right)$ on the $\left[0, \frac{1}{2}\right] $. 
		
		By Lemma \ref{lem:quatre:cinq}, $\tu y_1(t,\tu x)$ approach $\tu G\left(\tu g(\tu x),0,\tu x\right)=\tu f(1,\tu x)$ on this interval, with an error of order	$\epsilon' + \epsilon' + \epsilon' + \frac12 \epsilon'$. 
		Thus, $\tu y_1(\frac{1}{2},\tu x) =_{\epsilon/2}  \tu f(1,\tu x)$, if we choose $\frac72 \epsilon' < \epsilon/2$. 
		
		\item At $t=\frac{1}{2}$, $\tu y_1$ will hence have simulated one step of the associated discrete ODE, with an error less than $\epsilon/2$, 
		and $\tu y_{2}$ will be close to $ \tu g(\tu x)$ with an error less than $\epsilon' < \epsilon/2$.

		\item For $t \in \left[\frac{1}{2}, 1\right]$ the roles of $y_1$ and $y_2$ are inverted, and the analysis is analogous.
	\end{itemize}
	
	\underline{Induction:}
	\begin{itemize}
		\item For $t \in [n,n+1/2]$, we have $\theta(-\tu y (t))=_{\epsilon'} 0$, and hence $\tu y_2'(t) =_{\epsilon'} 0$, so $\tu y_2$ is kept almost fixed (with an error less than $\frac12 \epsilon'$). In the same time $\tu y_1$ approaches $G(r(\tu y_{2}(t), r(t), \tu x)=\tu f(n+1, \tu x)$ by Lemma \ref{lem:quatre:cinq}, with some new error of order less than $\frac72 \epsilon' < \epsilon/2$. 
		
		\item For $t \in \left[n+\frac{1}{2}, n+1\right]$ the roles of $\tu y_1$ and $\tu y_2$ are exchanged: $\tu y_1'(t, \tu x) =_{\epsilon'} 0$, so $\tu y_1$ is kept almost fixed, with a new error less than $\frac12 \epsilon'$.  In the same time $\tu y_2$ approaches $r(\tu y_{1}(t))=\tu f(n+1, \tu x)$ by Lemma \ref{lem:quatre:cinq}, with some new error	of order less than $\frac72 \epsilon' < \epsilon/2$. 
	\end{itemize}
	Consequently, we get the property at rank $n+1$. 
\end{proof}

Since polynomials are stable under composition, we have:

\begin{lemma}\label{lemNewBra}
	The simulation of Lemma \ref{lemAmeSin} is computable in polynomial time.
\end{lemma}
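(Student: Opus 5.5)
The plan is to show that every ingredient of the modified system of Lemma \ref{lemAmeSin} has a polynomial-time approximation, and then to combine these using the additive error propagation established there. Concretely, to approximate the solution $(\tu y_1,\tu y_2)$ at time $t$ to precision $2^{-n}$, I would first fix the internal precision $\epsilon' = 2^{-q(n)}$ of Proposition \ref{the:manon:trick}. The approximations $\theta_{\epsilon'}(x)=\relutanh(2^{q(n)},x)$ and $r_{\epsilon'}$ are obtained by composing base functions of $\manonclasslighttanh$ (namely $\tanh$, affine maps, and divisions by $2$ and $3$) a number of times that is polynomial in $q(n)$. By Theorem \ref{thmCharacPTIME}, each of them is therefore computable in polynomial time in $n$ and in the size of its argument. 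The function $\tu G = \tu u + \mathrm{id}$ is polynomial time by assumption; in our application it is $\bar{\Next}$, which is polynomial time by Corollary \ref{coroTransPtime}.

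Next I would handle the periodic clock \eqref{eqSinus}. Its solution is the linear system $(\tu y,\tu z)$ with matrix $\begin{pmatrix}0&2\pi\\-2\pi&0\end{pmatrix}$. It is a robust continuous ODE (Definition \ref{defRobODE}): it is linear with a bounded, norm-preserving flow, so perturbations of size $\epsilon'$ on an interval of length $\Delta$ grow by at most an additive $O(\Delta\epsilon')$. Because of $1$-periodicity, $t$ only needs to be reduced modulo $1$, so a fixed-precision numerical scheme (for instance a Taylor method of polynomial order) computes $\tu y(t)$ to precision $\epsilon'$ in time polynomial in $q(n)$ and in $\ell(\lceil t\rceil)$. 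The constant $2\pi$ is itself polynomial-time computable. Substituting this approximation for $\sin(2\pi t)$ only adds an $O(\epsilon')$ term to the error function $E$ of Lemma \ref{lem:quatre:cinq}.

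Finally I would assemble the pieces. On each half-interval $[k,k+\frac12]$ or $[k+\frac12,k+1]$, the system is a one-sided cubic contraction toward a target that is frozen up to $\epsilon'$. Integrating it numerically at precision $\epsilon'$ therefore costs polynomial time per unit of time, since $\mathfrak{c}$ has only $O(n)$ bits and the cubic term is contracting, so stiffness can be handled by a step size of $2^{-\mathrm{poly}(n)}$. Lemma \ref{lemAmeSin} guarantees that errors do not accumulate across steps, because $r_{\epsilon'}$ resets them at each step (Lemma \ref{lemInterStep}). This lets us compose polynomial-time per-step computations while keeping the working precision fixed at $\epsilon'$, so the resulting bound is a composition of polynomials, hence a polynomial.

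The main obstacle is the dependence on $t$. Naively simulating $t$ unit steps costs time linear in $t$, which is exponential in the length of $t$. I would address this with the third clause of Definition \ref{defRobODE} (the halving scheme $t\mapsto t/2$), arguing that the additive, self-correcting error lets the precision stay at $\epsilon(n)$ through all doubling levels. If this fails for genuinely long runs, I would restrict the claim, as the reduction actually needs, to polynomial-time computability of the one-step map and of the clock, which is what the hardness proof uses.
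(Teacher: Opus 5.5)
Your main line aims to compute the trajectory $(\tu y_1,\tu y_2)$ at an arbitrary time $t$ in time polynomial in $n$ and $\ell(\lceil t\rceil)$. This cannot be done, and two of your steps fail.

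\textbf{Within one unit of time.} A uniform step size of $2^{-\mathrm{poly}(n)}$ means $2^{\mathrm{poly}(n)}$ integration steps. That gives polynomial \emph{space} but exponential time. Recall that $\mathfrak{c}$ is of order $2^n$, so the vector field is exponentially stiff. Hence ``polynomial time per unit of time'' does not follow from your argument.

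\textbf{For large $t$.} The halving clause of Definition \ref{defRobODE} computes the state at $t$ from an approximation at $t/2$, followed by a second flow of duration $t/2$. That is two recursive calls per level. The recursion depth is $\ell(t)$, so space is polynomial, but time is still linear in $t$. This is exactly why robust ODEs capture $\FPspace$ rather than $\FPtime$.

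\textbf{No repair is possible.} In the application, $\tu G=\bar{\Next}$ simulates an arbitrary polynomial-space machine for up to $T_{\max}$ steps, which is exponential. Evaluating $\tu y_1(T_{\max},\cdot)$ to precision $1/4$ in polynomial time would therefore decide a $\Pspace$-complete language, giving $\Ptime=\Pspace$.

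\textbf{What the lemma actually needs.} The lemma asserts only that the \emph{constructed system} is polynomial-time, meaning the map producing it and its right-hand side. That is also all the reduction $\mathfrak{s}$ in Theorem \ref{CRRLhard} needs. The paper's proof is a one-liner. The right-hand side is a composition of polynomial-time functions: $\tu G$, $r_{\epsilon'}$, $\theta_{\epsilon'}=\relutanh(2^{q(n)},\cdot)$, the cube, multiplication by $\mathfrak{c}$ (which has $O(n)$ bits), and the linear field $(2\pi\tu z,-2\pi\tu y)$ of \eqref{eqSinus}. Then $\FPtime$ is closed under composition (Lemma \ref{lemCompoOK}). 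Because $\tu y,\tu z$ are now state variables, no sine has to be evaluated and no numerical integration of the clock is needed.

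Your first paragraph together with your fallback sentence is exactly this argument, and it should be the whole proof. The numerical-integration and halving material belongs to the $\Pspace$-membership side (Theorem \ref{thmPSmember}), not to this lemma.
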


Hence, by Theorem \ref{thmPSmember} and Lemma \ref{lemNewBra}:

\begin{corollary}
	$\operatorname{\bf{CRRL}}\in\Pspace^2$. 
\end{corollary}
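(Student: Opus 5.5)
The plan is to reduce \textbf{CRRL} membership to \textbf{DRRL} membership (Theorem \ref{thmPSmember}), i.e.\ to the Savitch-style discretisation argument, with Lemma \ref{lemAmeSin} and Lemma \ref{lemNewBra} bridging continuous and discrete time. Given a \textbf{CRRL} instance $\mP_C=(X_C,g_C)$ with $g_C\in\contClass+\Phi$ Lipschitz, $p$ in unary and $\tu x,\tu y$, I will consider the time-one map $F(\tu z)=g_C(1,\tu z)$, or more precisely $g_C(\Delta,\cdot)$ for a dyadic step $\Delta$ chosen polynomially small in terms of $p$ and the Lipschitz constant. Then $(X_C,F)$ is a discrete-time system. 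The first step is to show that $F$ is Lipschitz on the closed rational box $X_C$. This follows from Gr\"onwall's inequality, because the vector field is Lipschitz and the horizon is fixed. The same bound also gives $F$ and the whole flow on $[0,\Delta]$ a modulus polynomial in the input size.

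The second step is to argue that $F$ is computable in second-order polynomial time, working at precision $2^{-r(n)}$ for a polynomial $r$. Here I will use the numerical stability clause of Definition \ref{defRobODE}: on $[0,\Delta]$ the solution can be obtained by working at precision $\epsilon(n)=2^{-p(n+\ell(\lceil\tu x\rceil))}$. Since $\contClass$ functions are built by composition and robust ODEs with polynomial moduli, the right-hand side is evaluable in polynomial time. Oracle calls to $\Phi$ are handled by Lemma \ref{lem63}(ii). Lemma \ref{lemNewBra} covers the Branicky-type systems that actually arise. Because only a bounded time horizon is needed per step, this keeps us in polynomial time rather than polynomial space.

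The third step is to compare perturbations, which is where I expect the main obstacle. Continuous reachability $\reach^{\mP_C}_{2^{-p}}(\tu x,\tu y)$ asks whether some $2^{-p}$-perturbed trajectory, at some real time, enters $B(\tu y,2^{-p})$. I will sandwich it between discrete perturbed reachabilities of $(X_C,F)$:
\[
\reach^{(X_C,F)}_{2^{-p'}}(\tu x,\tu y)\ \Rightarrow\ \reach^{\mP_C}_{2^{-p}}(\tu x,\tu y)\ \Rightarrow\ \reach^{(X_C,F)}_{2^{-p''}}(\tu x,\tu y),
\]
for polynomially related $p',p''$. The forward implication holds because a $2^{-p'}$ jump at integer times can be realised as a small continuous perturbation of the vector field over one unit of time. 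The backward implication holds because continuous perturbation over a unit interval changes the endpoint by at most $e^{L}2^{-p}$ by Gr\"onwall. Hitting the target at a non-integer time is handled by also accepting whenever some point within the last step is near $\tu y$; since $\Delta$ is small, this moves the point by at most $L_g\Delta$, which is absorbed into the ball radius. Polynomial robustness of $\mP_C$ then makes all three relations equal to $\reach^{\mP_C}$ for $p$ large enough, so the middle term is decided correctly.

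The final step is to run the Savitch-style algorithm behind Theorem \ref{thmPSmember} on $(X_C,F)$ at precision $2^{-p''}$. The grid over $X_C$ has $2^{O(p''+\text{size})}$ cubes, so recursive midpoint reachability uses polynomial space, and each edge test is a second-order polynomial-time evaluation of $F$. This places \textbf{CRRL} in $\Pspace^2$. The delicate point remains the third step: the accumulated constant $e^{L}$ per unit time, and the behaviour when the domain boundary of $X_C$ is touched, must stay within polynomially many bits. This is why I would choose $\Delta$ explicitly as $2^{-\lceil\log_2 L\rceil}$.
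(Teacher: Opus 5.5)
Your route differs from the paper's. The paper gets the corollary in one line by combining Theorem~\ref{thmPSmember} (Savitch-style membership of \textbf{DRRL}) with Lemma~\ref{lemNewBra} (polynomial-time computability of the modified Branicky simulation). As written, that argument only moves information in the discrete-to-continuous direction, which is the one used for hardness in Theorem~\ref{CRRLhard}. It does not say how a general continuous instance is brought back to a discrete one. Your time-$\Delta$ sampling, the sandwich, and rerunning the Savitch argument with a modified target test supply exactly that missing direction. You are right to rely on the robustness of $\mP_C$ rather than of the sampled system: the sampled system answers a different question, since the flow may hit $\tu y$ between samples, so Theorem~\ref{thmPSmember} indeed cannot be quoted as a black box. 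The price is that you must fix a continuous-time perturbation model, which the paper never defines. State yours, and note that realising a discrete jump as a vector-field perturbation needs the vector field, not only $g_C$, to be Lipschitz.

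Two steps are wrong as written, though both can be repaired.

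\textbf{Step 2.} A bounded horizon does not give polynomial time. Kawamura~\cite{kawamura2009lipschitz} shows that solving polynomial-time Lipschitz ODEs on $[0,1]$ is already $\Pspace$-complete. Cutting $[0,1]$ into $1/\Delta$ pieces shows that a short horizon does not help unless $\Ptime=\Pspace$, because for a fixed instance and your $\Delta=2^{-\lceil\log_2 L\rceil}$ the number of pieces is a constant. The stability clause of Definition~\ref{defRobODE} only says that polynomial \emph{precision} suffices, and that yields polynomial space. This is precisely why $\contClasslim$ captures $\FPspace$ (Theorem~\ref{th:main:twopSpace}). The fix is to claim only that $F$ is computable in second-order polynomial space. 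The Savitch recursion has polynomial depth and each edge test reuses its workspace, so the final bound survives. You are then re-proving Theorem~\ref{thdirectionunpbis} rather than citing it, since it assumes $\FPtime$ dynamics.

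\textbf{The choice of $\Delta$.} Your explicit $\Delta$ contradicts Step~3. Between two samples the point moves by up to $L_g\Delta\approx L_g/L$, which is not absorbed into $2^{-p}$. Instead take $\Delta\approx 2^{-p}/\max(L,L_g)$, which still needs only $O(p+\log L)$ bits, and let the discrete per-step perturbation scale like $2^{-p}\Delta$. Then both directions of the sandwich lose only a constant factor:
\begin{itemize}
\item by Gr\"onwall, a $2^{-p}$ continuous perturbation moves the one-step endpoint by at most about $e^{L\Delta}\Delta 2^{-p}$;
\item a discrete jump $\delta$ is realised by a vector-field perturbation of about $\delta(1/\Delta+L)$.
\end{itemize}
No factor $e^{L}$ accumulates, since pseudo-orbits are compared one step at a time.

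\textbf{Range of $p$.} The sandwich pins down $\reach^{\mP_C}_{2^{-p}}(\tu x,\tu y)$ only when $p$, minus that constant loss, is above the robustness threshold $q(n)$. For smaller unary $p$ your argument does not decide the stated output, so make this restriction explicit.
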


A side-effect of Lemma \ref{lemAmeSin} is that the cosine function is no longer necessary in the basis of the algebra $\contClass$, as it can be expressed with the other components of the algebra:

\begin{theorem}[Main theorem I] \label{thNewRCD} 
	$\contClasslim \cap \R^{\N^{d} \times \R^{d'}}= \FPspace \cap \R^{\N^{d} \times \R^{d}}$, where $\contClasslim = [0, 1,\pi_i^k, +,$ $ -, \times, \tanh, \pi, \frac{x}{2}, \frac{x}{3};$ $\composition,$ $ \robODE, \MANONlim].$
\end{theorem}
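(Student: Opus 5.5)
The plan is to prove both inclusions by comparing the new algebra with the one of Theorem~\ref{th:main:twopSpace}, which differs only by dropping $\cos$ from the basis. Write $\contClasslim_{\cos}$ for the algebra of Theorem~\ref{th:main:twopSpace} and $\contClasslim$ for the new one. Then $\contClasslim\subseteq\contClasslim_{\cos}$ trivially, since every basic function and operator of the new algebra is present in the old one. Hence $\contClasslim\cap\R^{\N^d\times\R^{d'}}\subseteq\FPspace$ follows at once from Theorem~\ref{th:main:twopSpace}. The whole work is the reverse inclusion. For it, I would show that every function in $\contClasslim_{\cos}$ is already in $\contClasslim$. It suffices to show that $\cos$ (or at least every use of it in the $\FPspace$-completeness construction of \cite{ICALP24}) is definable in $\contClasslim$.

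First step: define $\cos$ directly as a robust continuous ODE inside $\contClasslim$. Take the polynomial system \eqref{eqSinus}, with $\tu y(0)=0$ and $\tu z(0)=1$, whose right-hand side uses only $\times$, $\pi$, constants and $+$/$-$. Its solution is $\tu z(t)=\cos(2\pi t)$, $\tu y(t)=\sin(2\pi t)$. A rescaling of time by the constant $\frac{1}{2\pi}$, built from $\pi$, $\frac{\cdot}{2}$ and composition, gives $\cos$ itself. To be safe I would avoid a division by $\pi$: I would instead use the system $\tu y'=\tu z$, $\tu z'=-\tu y$, which yields $\cos t$ directly. I must then check the three conditions of Definition~\ref{defRobODE}, with some $\Delta$ and polynomial $p$.
\begin{itemize}
\item On $[0,\Delta]$, the linear system is a rotation, so perturbations of size $\epsilon(n)$ in the initial data and the derivative give an error of at most $(1+\Delta)e^{\Delta}\epsilon(n)$. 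This is absorbed by a constant shift in $p$.
\item For $t\ge\Delta$, the flow $\Phi$ is an isometry for the Euclidean norm. The doubling scheme $\Phi(\cdot,t)=\Phi(\Phi(\cdot,t/2),t/2)$ therefore accumulates only additive errors. There are $O(\log t)=O(\ell(\lceil t\rceil))$ halvings, which stays polynomial.
\end{itemize}
Thus $\cos\in\contClasslim$, and the paper's remark after \eqref{eqSinus} ("Equation (\ref{eqSinus}) is a robust continuous ODE") is exactly this fact.

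Second step: conclude by a structural induction. Every function of $\contClasslim_{\cos}$ is obtained from basic functions by composition, robust ODE and $\MANONlim$. Replacing each occurrence of $\cos$ by its $\contClasslim$-definition gives the same function. The closure operators are the same in both algebras, and their side conditions (numerical stability, effective convergence) concern the defined function only, not how its ingredients were built. So the same derivation goes through in $\contClasslim$, and $\contClasslim_{\cos}\subseteq\contClasslim$.

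Alternatively, and closer to the paper's narrative, I would point out where $\cos$ (or $\sin$) is actually used in \cite{ICALP24}: the Branicky simulation of Definition~\ref{defBranicky}. Lemma~\ref{lemAmeSin} replaces $\sin(2\pi t)$ by the solution of \eqref{eqSinus} with additive errors, and Lemma~\ref{lemNewBra} shows that this simulation stays polynomial. The $\FPspace\subseteq\contClasslim$ direction of \cite{ICALP24} then carries over verbatim without $\cos$.

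The main obstacle is verifying the robustness clause for $t\ge\Delta$ uniformly in unbounded $t$. The argument feeds $\tu y,\tu z$ into $\theta$ inside \eqref{code}, so it needs precision $2^{-n}$ at times $t$ of size exponential in the input length. I would handle it via the isometry and error-additivity argument above. If that is not enough, I would fall back to reducing $t$ modulo the period using $r$ and the rounding of Lemma~\ref{lemInterStep}, so that only $t\in[0,1]$ is ever needed.
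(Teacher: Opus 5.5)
The easy inclusion is handled exactly as in the paper. For the reverse inclusion, the paper argues via Lemma~\ref{lemAmeSin}. It generates the clock inside the Branicky-type system itself (as in the proof of Theorem~\ref{CRRLhard}), coupling \eqref{eqSinus} to \eqref{code}. It then redoes the error analysis of Proposition~\ref{the:manon:trick} using Lemma~\ref{lem:quatre:cinq} and the reinitialisation of Lemma~\ref{lemInterStep}, so that the construction of \cite{ICALP24} goes through without $\cos$. Your alternative paragraph is precisely this route.

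Your primary route is genuinely different: you make the cosine itself robustly ODE-definable and conclude by minimality of the closure. This proves more, namely that the two algebras coincide and not just their $\FPspace$ parts. It also makes Lemma~\ref{lemAmeSin} superfluous: the substituted function is literally the same, so no side condition in the derivations of \cite{ICALP24} changes. The cost is that everything rests on the $t\ge\Delta$ clause of Definition~\ref{defRobODE} for a non-contracting flow, which the paper merely asserts. Using $y'=z$, $z'=-y$ to avoid dividing by $\pi$ is the right choice.

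Two corrections are needed.

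First, Definition~\ref{defRobODE} only controls solutions for $t\ge 0$; backward time is uncontrolled in general, e.g.\ for the cubic terms of \eqref{code}. So your time-variable construction yields $\cos$ on $[0,\infty)$ only, and equality of the algebras does not yet follow. You have two ways out. One is the speed-parametrised rotation $F(0,t)=(0,1)$, $\partial_s F(s,t)=t\cdot(F_2(s,t),-F_1(s,t))$, whose second component at $s=1$ is $\cos t$ for every real $t$. The other is your hedge: the \cite{ICALP24} construction needs the clock only at nonnegative times, which is enough for the theorem.

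Second, your error count is too optimistic. Isometry prevents amplification, but the local errors still add up over time: a derivative perturbation $\epsilon$ produces a phase error of order $t\,\epsilon$, not $\ell(\lceil t\rceil)\,\epsilon$. You therefore lose $O(\ell(\lceil t\rceil))$ bits of precision. This is still polynomial, so your conclusion survives.

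Your fallback of reducing $t$ modulo the period with $r$ would not work anyway, since $r$ is only controlled near the grid points.
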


\medskip
\noindent
\textbf{\sf \Large Completeness result.}
We can now state and prove the following theorem:

\begin{theorem}[Main theorem II]\label{CRRLhard}
	$\operatorname{\bf{CRRL}}$ is $\Pspace^2$-$\leq_m^2$-hard.
\end{theorem}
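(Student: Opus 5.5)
The plan is to reduce \textbf{DRRL} to \textbf{CRRL} via $\leq_m^2$. Since $\leq_m^2$ is transitive (second-order polynomial-time functions compose) and \textbf{DRRL} is $\Pspace^2$-$\leq_m^2$-hard by Lemma \ref{DRRLHard}, this gives the theorem. To avoid relying on arbitrary \textbf{DRRL} instances, I would run the construction only on the instances actually produced in the proof of Lemma \ref{DRRLHard}. Equivalently, I compose directly with the reduction from \textbf{Basic $\Pspace^2$}. Those instances have a specific form: $X$ is the configuration box, $g=\bar{\Next}$ (with the oracle realised by $\Phi_S$), the start point is $\encodageconfiguration(C_0)$, the target is $\encodageconfiguration(C_{f_a})$, and the perturbation parameter is $p=q(|u|)$.

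The map $\mathfrak{s}$ sends $(X,g,p,\tu y)$ to the continuous system of Definition \ref{defBranicky}, with $\tu G=g$ (the discrete ODE is $\tu f(t+1)=g(\tu f(t))$). I modify it as in Lemma \ref{lemAmeSin}. The rounding $r$ rounds to the grid $4^{-(S+1)}\Z$ of Remark \ref{explicationGap}, with $\eta=4^{-(S+1)}$ as in Lemma \ref{lemInterStep}. The term $\sin(2\pi t)$ is replaced by the polynomial system \eqref{eqSinus}. The function $\theta$ is $\relutanh(2^{q'(n)},\cdot)$, and the constant $\mathfrak{c}$ is chosen via Lemma \ref{lem:quatre:cinq}. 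The state space is $X_C = X\times X\times[-1,1]^2$, which is again a closed rational box. The dynamics lies in $\contClass+\Phi$, because $g$ is built from $\tanh$, affine operations and $\Phi$, while the extra ingredients are products and the polynomial system \eqref{eqSinus}. The map $\mathfrak{t}$ sends $u$ to $(\encodageconfiguration(C_0),\encodageconfiguration(C_0),0,1)$. The target becomes $(\encodageconfiguration(C_{f_a}),\encodageconfiguration(C_{f_a}))$, with the oscillator coordinates left free. Equivalently, I target a small box around $\tu y$ on the first two coordinates, as in the \textbf{Reach}$(\tu x,Y)$ formulation. The new perturbation exponent is $p'=q'(|u|)$ for a polynomial $q'$. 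By Lemma \ref{lemNewBra}, both $\mathfrak{s}$ and $\mathfrak{t}$ are in $\FPtime^2$. The Lipschitz property follows because every component is Lipschitz on the compact box: $\Phi_S$ by Lemma \ref{lem63}(i), $\bar{\Next}$ as stated after Lemma \ref{lemNext}, and the cubic and tanh terms on a bounded domain.

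Correctness rests on the additive-error argument of Lemma \ref{lemAmeSin}. An $\epsilon$-perturbed continuous trajectory, with $\epsilon=2^{-q'(n)}$, enters each half-period with $\tu y_1,\tu y_2$ within $\eta/4$ of the encoding of the current configuration. Lemma \ref{lemInterStep} then resets this error to $\epsilon'$, and Lemma \ref{lem:quatre:cinq} shows the integrated perturbation adds only $\delta\cdot\frac12$ per half-period. Hence at integer times the pair $(\tu y_1,\tu y_2)$ tracks $\encodageconfiguration(C_t)$ within $4^{-(S+c)}$. From there, Lemma \ref{lemGap} and the argument of Corollary \ref{corPolyRob} transfer both directions. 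If $M^\phi$ accepts, the unperturbed flow reaches the target ball. If it does not, every perturbed trajectory stays at distance more than $4^{-(S+c)}$ from it. This also holds at non-integer times, by the bound in item 2 of Lemma \ref{lem:quatre:cinq}: the moving coordinate stays between its old and new grid values, and those values both differ from $C_{f_a}$ unless $M^\phi$ accepts. So the continuous system is polynomially robust and its answer coincides with acceptance.

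I expect the main obstacle to be robustness of the oscillator \eqref{eqSinus} under perturbation. Over unbounded time, perturbations of $(\tu y,\tu z)$ accumulate, causing phase drift and amplitude change. If this drift is not controlled, the regions where $\theta(\pm\tu y)$ vanish eventually stop alternating correctly. My first attempt would be to make the oscillator self-correcting. I would add a radial term $\kappa(1-\tu y^2-\tu z^2)(\tu y,\tu z)$, which makes the unit circle attracting, so the amplitude error stays $O(\epsilon)$. For the phase, the key observation is that the simulation only needs the sign pattern of $\tu y$ to alternate with half-periods bounded below. It does not need exact period $1$. The rounding $r(t)$ in $\tu G$ is irrelevant here, since $g$ is autonomous, so I would drop the time argument entirely. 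Phase drift then only changes when steps occur, not whether they are correct. The remaining task is to verify that each half-period still has length at least $1/4$, so that $\int\phi$ stays bounded below in Lemma \ref{lem:quatre:cinq} and $\mathfrak{c}$ can be fixed in advance.
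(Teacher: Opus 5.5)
Your construction is the paper's: the Branicky-type simulation of Definition~\ref{defBranicky} applied to $\bar{\Next}$, with $\sin(2\pi t)$ replaced by the polynomial clock \eqref{eqSinus}. It uses rounding onto the configuration grid (Lemma~\ref{lemInterStep}), per-half-period error control from Lemma~\ref{lem:quatre:cinq}, and the gap of Lemma~\ref{lemGap} to transfer the answer. It differs from the paper's proof in two ways, both to your advantage.

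First, the paper phrases the reduction for arbitrary \textbf{DRRL} instances. Yet its construction (initial value $\encodageconfiguration(C_0)$, grid spacing tied to $4^{-(S+1)}$) only makes sense for the instances built in Lemma~\ref{DRRLHard}. For a general Lipschitz $g$ there is no grid for $r$ to snap to. Composing directly with the reduction from \textbf{Basic $\Pspace^2$}, as you do, is the coherent version. Your $\mathfrak{t}$ and your two-copy target are also more sensible than the paper's ``successor'' map.

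Second, the paper treats the clock only as a numerically approximated function (Lemma~\ref{lemAmeSin}). It then cites Lemma~\ref{lemAutoCor} and Corollary~\ref{corPolyRob}, which concern the discrete map, and never considers perturbations of $(\tu y,\tu z)$. Your worry is justified. With \eqref{eqSinus} unchanged, a radially inward perturbation of size $\epsilon$ brings the clock to the origin in time about $1/\epsilon$. There $\theta_{\epsilon'}(\pm\tu y)=0$, so both copies can be steered at speed $\epsilon$ onto $\encodageconfiguration(C_{f_a})$, a fixed point of the rounded dynamics, and the amplitude then restored. Thus $\reach_\epsilon$ would hold for every $\epsilon$, whatever $M^\phi$ does.

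Your radial term removes this, and the step you leave open is immediate in polar coordinates $\tu y=\rho\sin\psi$, $\tu z=\rho\cos\psi$. There $\rho'=\kappa(1-\rho^2)\rho+O(\epsilon)$ and $\psi'=2\pi+O(\epsilon/\rho)$. Hence $|\rho-1|=O(\epsilon/\kappa)$ and every half-period has length $\frac12+O(\epsilon)$, so $\int\theta_{\epsilon'}(\pm\tu y)$ stays bounded below and $\mathfrak{c}$ can be fixed in advance. Dropping the $r(t)$ argument is legitimate since $\bar{\Next}$ is autonomous.

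Three small repairs are needed.

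\textbf{Point target.} \textbf{CRRL} has a point target, so the clock coordinates cannot be left free. Use $(\encodageconfiguration(C_{f_a}),\encodageconfiguration(C_{f_a}),0,1)$. The unperturbed flow, whose clock is exactly $(\sin 2\pi t,\cos 2\pi t)$, comes within the required precision of it at every integer time after acceptance.

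\textbf{Non-integer times.} Your argument here is not valid as stated. A coordinate moving between two grid values different from $\encodageconfiguration(C_{f_a})$ can pass through it. For example, the state component can sweep across $q_{\text{accept}}$ between two blank-tape configurations. What works is that at every time the \emph{frozen} copy lies within $\eta/4$ of the encoding of a configuration actually visited by $M^\phi$. By Lemma~\ref{lemGap}, the pair therefore stays far from the target unless $M^\phi$ accepts.

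\textbf{Boxes.} $[-1,1]^2$ and $X$ must be slightly enlarged, since perturbed trajectories leave them. Note that $X$ has its accepting corner $(q_{\text{accept}},0,0)$ on its boundary.
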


We reduce \textbf{DRRL} to \textbf{CRRL}, where $\mathfrak{s}$ encodes the simulation of Lemma \ref{lemAmeSin} and $\mathfrak{t}$ the successor function (for the discrete ODE): 

\begin{proof}[Proof of Theorem \ref{CRRLhard}]
	We reduce from \textbf{DRRL}:
	\begin{itemize}
		\item $\frak{s}: \mP=(X,g) \rightarrow \mP_C=(X_C, g_C)$ doing the adaptation of the Branicky trick described by Proposition \ref{the:manon:trick}: with $X=X_C$ and $g_C$ such that:
		\begin{equation*}
			\left\{
			\begin{aligned}
				& g_C(0, \vx) = g_C^m(0, \vx) = g(0, \vx) = \encodageconfiguration(C_0); \tu y(0) = 0; \tu z(0) = 1\\
				& \frac{\partial g_C}{\partial t}(t, \vx) = \mathfrak{c} \cdot \left( G\left( r_{\epsilon'}\left(g_C^m(t, \vx)\right), r_{\epsilon'}(t), \vx \right) - g_C(t,\vx) \right)^3 \cdot \theta_{\epsilon'}\left(\tu y(t)\right) \\
				& \frac{\partial g_C^m}{\partial t}(t, \vx) = \mathfrak{c} \cdot \left( r_{\epsilon'}\left(g_C(t, \vx)\right) - g_C^m(t, \vx)\right)^3 \cdot \theta_{\epsilon'}\left( -\tu y(t) \right)\\
				&\frac{\partial \tu y}{\partial t} (t) = 2 \pi \cdot \tu z(t) \quad \quad \text{and}\quad\quad \frac{\partial \tu z}{\partial t} (t) = - 2 \pi \cdot \tu y(t)
			\end{aligned}
			\right.
		\end{equation*}	
		where $c$ is a constant; $G: (v, t, \vx) \rightarrow \tu u(v, t, \vx) + v $ and $\tu u$ such that $\frac{\delta g}{\delta t}(t,\vx) = \tu u\left( g(t, \vx), t, \vx \right)$; $r_{\epsilon'}$ and $\theta_{\epsilon'}$ are defined in Proposition \ref{the:manon:trick}. The function $g_C^m$ has the value of a memory equation, as $y_2$ in the Definition \ref{defBranicky}.
		It is computable in polynomial time by Lemma \ref{lemNewBra}.
		\item $\frak{t}:  \gamma(w) \rightarrow \gamma(w')$ where $w'$ is the successor of $w$ in the discrete ODE.
	\end{itemize}
	$\mP_C$ is robust by Lemma \ref{lemAutoCor} and Corollary \ref{corPolyRob}.
	The equivalence is direct from Lemma \ref{lemAmeSin}.
\end{proof}

\begin{corollary}\label{CRRLComp}
	$\operatorname{\bf{CRRL}}$ is $\Pspace^2$-$\leq_m^2$-complete.
\end{corollary}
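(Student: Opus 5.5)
The plan is to obtain Corollary \ref{CRRLComp} by combining two facts already established in the paper. Following Definition \ref{defPspaceCompKC}, completeness means membership plus hardness. For membership, I will cite the corollary stated just after Lemma \ref{lemNewBra}: $\operatorname{\bf{CRRL}}\in\Pspace^2$. For hardness, I will cite Theorem \ref{CRRLhard}: $\operatorname{\bf{CRRL}}$ is $\Pspace^2$-$\leq_m^2$-hard. Together these give the corollary directly.

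Both facts rest on earlier results, and I would recall them only briefly.

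The membership argument goes as follows.
\begin{itemize}
\item Theorem \ref{thmPSmember} decides robust reachability for Lipschitz dynamics computable in second-order polynomial time. It does so by a Savitch-style search over a discretisation of the domain at scale $2^{-p}$.
\item Lemma \ref{lemNewBra} provides the required second-order polynomial-time approximations of the continuous flow $g_C$ at any rational time.
\end{itemize}

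The hardness argument goes as follows.
\begin{itemize}
\item Theorem \ref{CRRLhard} reduces \textbf{DRRL} to \textbf{CRRL} through the Branicky-type simulation of Lemma \ref{lemAmeSin}.
\item Lemma \ref{DRRLHard} shows that every $\Pspace^2$ problem reduces to \textbf{DRRL}.
\end{itemize}

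The only genuine step is to check that $\leq_m^2$ is transitive, so that the two reductions compose into \textbf{Basic $\Pspace^2$} $\leq_m^2$ \textbf{CRRL}. Suppose we have pairs $(\mathfrak{s}_1,\mathfrak{t}_1)$ and $(\mathfrak{s}_2,\mathfrak{t}_2)$ in $\FPtime^2$. I would take $\mathfrak{s}=\mathfrak{s}_2\circ\mathfrak{s}_1$ and $\mathfrak{t}(\phi)=\mathfrak{t}_2(\mathfrak{s}_1(\phi))\circ\mathfrak{t}_1(\phi)$. Then I would use the closure of $\FPtime^2$ under composition, which holds because second-order polynomials are closed under composition, as in \cite{KawCook12}.

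The main obstacle I expect is not logical but a matter of consistency. The composed reduction must still output a \emph{polynomially robust} instance with $g_C$ Lipschitz and in $\contClass+\Phi$, with the precision parameter $p$ given in unary. To handle this, I would choose the precision $\epsilon'$ of Lemma \ref{lemAmeSin} below $4^{-(S+c)}$, using the polynomial $q$ of Corollary \ref{corPolyRob}. This keeps the continuous trajectory inside the auto-correcting neighbourhoods of Lemma \ref{lemAutoCor}, so robustness is inherited from the discrete instance.
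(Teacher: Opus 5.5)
Your proposal is correct and matches the paper's argument. The corollary is obtained exactly by combining membership ($\textbf{CRRL}\in\Pspace^2$, from Theorem~\ref{thmPSmember} and Lemma~\ref{lemNewBra}) with hardness (Theorem~\ref{CRRLhard}, which reduces from \textbf{DRRL}, itself hard by Lemma~\ref{DRRLHard}). Your explicit transitivity check, with $\mathfrak{s}=\mathfrak{s}_2\circ\mathfrak{s}_1$ and $\mathfrak{t}(\phi)=\mathfrak{t}_2(\mathfrak{s}_1(\phi))\circ\mathfrak{t}_1(\phi)$, is correct and spells out a step the paper leaves implicit; it also means the \textbf{DRRL}-to-\textbf{CRRL} reduction is only ever applied to the Turing-machine-simulation instances produced in Lemma~\ref{DRRLHard}, which is exactly where the robustness argument of Corollary~\ref{corPolyRob} applies.
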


\section{Conclusion}\label{secConclu}

We proved $\Pspace$-completeness for $\textbf{Reach}$, under the framework of existing algebraic characterisations of $\Pspace$, defined in \cite{BlancBournezMFCS2023vantardise}. The proof only required the classic notion of reduction and is about classical complexity. 

We then proposed a stronger property: we managed to prove that the reachability relation is $\Pspace^2$-$\leq_m^2$-complete for robust discrete-time dynamical systems in second-order complexity. We provide those completeness results in a more general framework than in \cite{KawCook12}, as we use a stronger notion of reduction on more general domains. 
So now, we have:
\begin{center}
\begin{tabular}{|c|c|c|c|}
	\hline
	& $\textbf{Reach}$ Robust $\manonclasslighttanh_\textit{light}$ & \textbf{DRRL} & \textbf{CRRL} \\
	\hline
	-Completeness & $\Pspace$ (\textcolor{Violet}{Cor. \ref{RRLightComp}}) & $\Pspace^2$ (\textcolor{Violet}{Cor. \ref{DRRLComp}}) & $\Pspace^2$ (\textcolor{Violet}{Cor. \ref{CRRLComp}}) \\
	\hline
\end{tabular}
\end{center}

Having such completeness results allows us to have a better understanding of those complexity classes over the reals.

An interesting side-effect is also that we managed to remove the cosine function from the algebra $\contClasslim$, characterising polynomial space for functions over the reals (Theorem \ref{thNewRCD}), getting us closer to a minimal algebra to characterise polynomial space over the reals.

Many further works are possible. For example, reasoning on broader definitions of perturbations and \emph{noisy} dynamical systems to include real-world systems would be interesting. More precisely, having completeness results for classical properties (reachability, existence of an attractor, etc.) for more general types of systems would allow a deeper understanding of the upper and lower complexity bounds. 
For the reduction of the base of $\contClass$, the same construction suggests that $\pi$ could also be removed from the basis by rescaling time, we also leave this for future work.

	\newpage
	\bibliographystyle{plainurl}
	\bibliography{bibli, bib2doi}

@inproceedings{gao2012delta,
	author = {Gao, Sicun and Avigad, Jeremy and Clarke, Edmund M},
	booktitle = {Logic in Computer Science (LICS), 2012 27th Annual IEEE Symposium on},
	organization = {IEEE},
	pages = {305--314},
	title = {Delta-decidability over the reals},
	year = {2012}}

@article{BloTsi00,
	author = {Blondel, Vincent and Tsitsiklis, John},
	journal = {Automatica},
	number = {9},
	pages = {1249--1274},
	title = {A survey of computational complexity results in systems and control},
	volume = {36},
	year = {2000}}

@article{kawamura2015computational,
	author = {Kawamura, Akitoshi and M{\"u}ller, Norbert and R{\"o}snick, Carsten and Ziegler, Martin},
	journal = {Journal of Complexity},
	number = {5},
	pages = {689--714},
	publisher = {Elsevier},
	title = {Computational benefit of smoothness: Parameterized bit-complexity of numerical operators on analytic functions and Gevrey's hierarchy},
	volume = {31},
	year = {2015}}

@inproceedings{kawamura2018parameterized,
	author = {Kawamura, Akitoshi and Steinberg, Florian and Thies, Holger},
	booktitle = {International Workshop on Logic, Language, Information, and Computation},
	organization = {Springer},
	pages = {223--236},
	title = {Parameterized complexity for uniform operators on multidimensional analytic functions and ODE solving},
	year = {2018}}

@article{HKPV95,
	author = {Henzinger, Thomas A. and Kopke, Peter W. and Puri, Anuj and Varaiya, Pravin},
	journal = {Journal of Computer and System Sciences},
	month = aug,
	number = {1},
	pages = {94--124},
	preliminary = {STOC::HenzingerKPV1995},
	references = {\cite{TCS::AlurCHHHNOSY1995} \cite{TCS::AlurD1994} \cite{IEEETSE::AlurHH1996} \cite{STOC::AlurHV1993} \cite{LICS::BouajjaniES1993} \cite{IEETSE::Corbett1996} \cite{ICALP::HenzingerMP1992} \cite{IC::HenzingerNSY1994} \cite{IC::LynchV1996} \cite{LICS::VardiW1986}},
	title = {What's Decidable about Hybrid Automata?},
	volume = {57},
	year = {1998}}

@article{AMP95,
	acknowledgement = ack-nhfb,
	author = {Asarin, Eugene and Maler, Oded and Pnueli, Amir},
	bibdate = {Sat Nov 22 13:15:15 MST 1997},
	classification = {C1160 (Combinatorial mathematics); C1310 (Control system analysis and synthesis methods); C4290 (Other computer theory); C6110 (Systems analysis and programming)},
	coden = {TCSCDI},
	corpsource = {Inst. of Inf. Transmission Problems, Acad. of Sci., Moscow, Russia},
	day = {6},
	issn = {0304-3975},
	journal = {Theoretical Computer Science},
	month = feb,
	number = {1},
	pages = {35--65},
	pubcountry = {Netherlands},
	title = {Reachability analysis of dynamical systems having piecewise-constant derivatives},
	treatment = {T Theoretical or Mathematical},
	volume = {138},
	year = {1995}}

@article{puri2000dynamical,
	author = {Puri, Anuj},
	journal = {Discrete Event Dynamic Systems},
	pages = {87--113},
	publisher = {Springer},
	title = {Dynamical properties of timed automata},
	volume = {10},
	year = {2000}}

@inproceedings{asarin01perturbed,
	address = {Los Alamitos, CA},
	author = {Asarin, Eugene and Bouajjani, Ahmed},
	booktitle = {Proceedings of the 16th Annual {IEEE} Symposium on Logic in Computer Science ({LICS}-01)},
	month = jun # {~16--19},
	pages = {269--278},
	publisher = {IEEE Computer Society Press},
	title = {Perturbed {Turing} Machines and Hybrid Systems},
	year = {2001}}

@article{KCG94,
	author = {Koiran, Pascal and Cosnard, Michel and Garzon, Max},
	bibdate = {Sat Nov 22 13:15:15 MST 1997},
	classification = {C1230D (Neural nets); C4210 (Formal logic); C4220 (Automata theory)},
	coden = {TCSCDI},
	corpsource = {Lab. de l'Inf. du Parallelisme, Ecole Normale Sup{\'e}rieure de Lyon, France},
	day = {26},
	issn = {0304-3975},
	journal = {Theoretical Computer Science},
	month = sep,
	number = {1-2},
	pages = {113--128},
	pubcountry = {Netherlands},
	title = {Computability with low-dimensional dynamical systems},
	treatment = {T Theoretical or Mathematical},
	volume = {132},
	year = {1994}}

@article{Moo91,
	author = {Moore, Cristopher},
	journal = {Nonlinearity},
	number = 3,
	pages = {199-230},
	title = {Generalized shifts: unpredictability and undecidability in dynamical systems},
	volume = 4,
	year = 1991}

@article{BlancBournezJoC,
	title = {Quantifying the robustness of dynamical systems. Relating time and space to length and precision},
	journal = {Journal of Complexity},
	volume = {96},
	pages = {102055},
	year = {2026},
	issn = {0885-064X},
	doi = {https://doi.org/10.1016/j.jco.2026.102055},
	url = {https://www.sciencedirect.com/science/article/pii/S0885064X26000403},
	author = {Manon Blanc and Olivier Bournez}
}

@article{etessami,
 URL = {https://link.springer.com/article/10.1007/s00182-007-0095-0#citeas},
 author = {Etessami, K. and Lochbihler, A.},
 journal = {International Journal of Game Theory},
 pages = {93--113},
 editor = {Springer},
 title = {The computational complexity of evolutionarily stable strategies},
 volume = {37},
 year = {2007}
}

@InProceedings{CSR21,
author="Blanc, Manon
and Hansen, Kristoffer Arnsfelt",
editor="Santhanam, Rahul
and Musatov, Daniil",
title="Computational Complexity of Multi-player Evolutionarily Stable Strategies",
booktitle="Computer Science -- Theory and Applications",
year="2021",
publisher="Springer International Publishing",
address="Cham",
pages="1--17",
isbn="978-3-030-79416-3"
}

@InProceedings{takeuti,
author="Takeuti, Izumi",
editor="Blanck, Jens
and Brattka, Vasco
and Hertling, Peter",
title="Effective Fixed Point Theorem over a Non-Computably Separable Metric Space",
booktitle="Computability and Complexity in Analysis",
year="2001",
publisher="Springer Berlin Heidelberg",
address="Berlin, Heidelberg",
pages="310--322",
isbn="978-3-540-45335-2"
}

@article{KawCook12,
  author       = {Akitoshi Kawamura and
                  Stephen A. Cook},
  title        = {Complexity Theory for Operators in Analysis},
  journal      = {{ACM} Trans. Comput. Theory},
  volume       = {4},
  number       = {2},
  pages        = {5:1--5:24},
  year         = {2012},
  url          = {https://doi.org/10.1145/2189778.2189780},
  doi          = {10.1145/2189778.2189780},
  bibsource    = {dblp computer science bibliography, https://dblp.org}
}

@phdthesis{pouly,
  TITLE = {{Continuous models of computation: from computability to complexity}},
  AUTHOR = {Pouly, Amaury},
  URL = {https://pastel.hal.science/tel-01223284},
  SCHOOL = {{Ecole Doctorale Polytechnique ; Universidad do Algarve}},
  YEAR = {2015},
  MONTH = Jul,
  TYPE = {Theses},
  HAL_ID = {tel-01223284},
  HAL_VERSION = {v2},
}

@book{BCSS98,
  author = {Lenore Blum and Cucker, Felipe and Shub, Michael and Smale, Steve},
  date-added = {2007-09-03 14:37:44 +0200},
  date-modified = {2019-07-29 15:00:40 +0200},
  publisher = {Springer},
  title = {Complexity and real computation},
  year = {1998},
  timestamp = {Fri, 17 Jul 2020 16:12:44 +0200},
  biburl = {https://dblp.org/rec/books/lib/Blum98.bib},
  bibsource = {dblp computer science bibliography, https://dblp.org},
  isbn = {0387982817},
  url = {https://www.worldcat.org/oclc/37004484},
  _bib2doi_selected = {dblp:/rec/books/lib/Blum98.bib},
  _bib2doi_confirmed = {true},
  _bib2doi_finished = {true},
}

@article{BSS89,
  author = {Blum, Lenore and Shub, Mike and Smale, Steve},
  date-added = {2007-09-03 14:37:44 +0200},
  date-modified = {2019-07-29 15:05:10 +0200},
  file = {BSSComplexite.bib},
  journal = {Bulletin of the American Mathematical Society},
  month = {jul},
  number = {1},
  pages = {1--46},
  title = {On a theory of computation and complexity over the real numbers; {NP} completeness, recursive functions and universal machines},
  volume = {21},
  year = {1989},
  _bib2doi_selected = {dblp:/rec/conf/focs/BlumSS88.bib},
  _bib2doi_finished = {true},
}

@article{Bra95,
  author = {Branicky, M. S.},
  date-added = {2008-01-08 22:02:29 +0100},
  date-modified = {2008-01-08 22:02:29 +0100},
  file = {systemesdynamiques.bib},
  journal = {Theoretical Computer Science},
  local-url = {file://localhost/Users/bournez/BASE-DE-DONNEES/base-devon-pro.dtBase/Files/P-2218-29812798.pdf},
  month = {6~feb},
  number = {1},
  pages = {67--100},
  title = {Universal computation and other capabilities of hybrid and continuous dynamical systems},
  volume = {138},
  year = {1995},
  timestamp = {Wed, 17 Feb 2021 21:56:52 +0100},
  biburl = {https://dblp.org/rec/journals/tcs/Branicky95.bib},
  bibsource = {dblp computer science bibliography, https://dblp.org},
  doi = {10.1016/0304-3975(94)00147-B},
  _bib2doi_selected = {dblp:/rec/journals/tcs/Branicky95.bib},
  _bib2doi_confirmed = {true},
}

@incollection{brattka2008tutorial,
  author = {Brattka, Vasco and Hertling, Peter and Weihrauch, Klaus},
  booktitle = {New computational paradigms},
  date-added = {2018-01-27 12:11:02 +0000},
  date-modified = {2018-01-27 12:11:02 +0000},
  pages = {425--491},
  publisher = {Springer},
  title = {A tutorial on computable analysis},
  year = {2008},
  _bib2doi_finished = {true},
}

@inproceedings{chen2018neural,
  author = {Chen, Tian Qi and Rubanova, Yulia and Bettencourt, Jesse and Duvenaud, David K},
  booktitle = {Advances in Neural Information Processing Systems},
  date-added = {2019-05-05 22:23:33 +0200},
  date-modified = {2019-05-05 22:23:33 +0200},
  pages = {6571--6583},
  title = {Neural ordinary differential equations},
  year = {2018},
  timestamp = {Mon, 16 May 2022 15:41:51 +0200},
  biburl = {https://dblp.org/rec/conf/nips/ChenRBD18.bib},
  bibsource = {dblp computer science bibliography, https://dblp.org},
  url = {https://proceedings.neurips.cc/paper/2018/hash/69386f6bb1dfed68692a24c8686939b9-Abstract.html},
  _bib2doi_selected = {dblp:/rec/conf/nips/ChenRBD18.bib},
  _bib2doi_confirmed = {true},
}

@inproceedings{CMSB17vantardise,
  author = {Fran{\c{c}}ois Fages and Guillaume Le Guludec and Olivier Bournez and Amaury Pouly},
  booktitle = {Computational Methods in Systems Biology - 15th International Conference, {CMSB} 2017, Darmstadt, Germany, September 27-29, 2017, Proceedings},
  date-added = {2017-06-29 08:56:26 +0000},
  date-modified = {2022-09-24 09:55:56 +0200},
  note = {{CMSB'2017 Best Paper Award}},
  title = {Strong Turing Completeness of Continuous Chemical Reaction Networks and Compilation of Mixed Analog-Digital Programs},
  year = {2017},
  timestamp = {Tue, 14 May 2019 10:00:49 +0200},
  biburl = {https://dblp.org/rec/conf/cmsb/FagesGBP17.bib},
  bibsource = {dblp computer science bibliography, https://dblp.org},
  doi = {10.1007/978-3-319-67471-1_7},
  publisher = {Springer},
  volume = {10545},
  pages = {108--127},
  url = {https://doi.org/10.1007/978-3-319-67471-1\_7},
  editor = {J{\'{e}}r{\^{o}}me Feret and Heinz Koeppl},
  series = {Lecture Notes in Computer Science},
  _bib2doi_selected = {dblp:/rec/conf/cmsb/FagesGBP17.bib},
  _bib2doi_confirmed = {true},
  _bib2doi_finished = {true},
}

@phdthesis{TheseRiccardo,
  author = {Riccardo Gozzi},
  date-added = {2022-02-11 16:16:51 +0100},
  date-modified = {2022-02-11 16:31:46 +0100},
  school = {Instituto Superior T{\'e}cnico, Lisbon, Portugal and University of Algarve, Faro, Portugal},
  title = {Analog Characterization of Complexity Classes},
  year = {2022},
  _bib2doi_finished = {true},
}

@article{GraCos03,
  abstract = {In this paper we show that Shannon's General Purpose Analog Computer (GPAC) is equivalent to a particular class of recursive functions over the reals with the flavour of Kleene's classical recursive function theory. We first consider the GPAC and several of its extensions to show that all these models have drawbacks and we introduce an alternative continuous-time model of computation that solves these problems. We also show that this new model preserves all the significant relations involving the previous models (namely, the equivalence with the differentially algebraic functions). We then continue with the topic of recursive functions over the reals, and we show full connections between functions generated by the model introduced so far and a particular class of recursive functions over the reals.},
  author = {Gra{\c c}a, Daniel S. and Costa, Jos{\'e} F{\'e}lix},
  date-added = {2008-01-24 09:24:30 +0100},
  date-modified = {2008-01-24 09:24:31 +0100},
  journal = {jcomp},
  local-url = {file://localhost/Users/bournez/BASE-DE-DONNEES/base-devon-pro.dtBase/Files/GraCos03.pdf},
  number = {5},
  pages = {644--664},
  title = {Analog computers and recursive functions over the reals},
  volume = {19},
  year = {2003},
  timestamp = {Tue, 16 Feb 2021 08:55:31 +0100},
  biburl = {https://dblp.org/rec/journals/jc/GracaC03.bib},
  bibsource = {dblp computer science bibliography, https://dblp.org},
  doi = {10.1016/S0885-064X(03)00034-7},
  _bib2doi_selected = {dblp:/rec/journals/jc/GracaC03.bib},
  _bib2doi_confirmed = {true},
}

@inbook{GracaZhongHandbook,
  author = {Gra{\c c}a, Daniel S. and Zhong, Ning},
  chapter = {Computability of Differential Equations},
  date-added = {2019-07-17 13:53:14 +0200},
  date-modified = {2019-07-29 14:58:40 +0200},
  journal = {To appear.},
  publisher = {Springer.},
  title = {Handbook of Computability and Complexity in Analysis},
  topics = {Analog,Dynamical,GPAC},
  year = {2018},
  _bib2doi_finished = {true},
}

@book{HSD03,
  author = {Hirsch, Morris W. and Smale, Stephen and Devaney, Robert},
  date-added = {2007-09-03 14:37:44 +0200},
  date-modified = {2007-09-03 14:37:44 +0200},
  publisher = {Elsevier Academic Press},
  title = {Differential Equations, Dynamical Systems, and an Introduction to Chaos},
  year = {2003},
  _bib2doi_finished = {true},
}

@conference{kawamura2009lipschitz,
  author = {Kawamura, Akitoshi},
  booktitle = {2009 24th Annual IEEE Conference on Computational Complexity},
  date-added = {2011-04-13 21:09:23 +0200},
  date-modified = {2019-07-29 14:51:37 +0200},
  organization = {IEEE},
  pages = {149--160},
  title = {{Lipschitz continuous ordinary differential equations are polynomial-space complete}},
  year = {2009},
  timestamp = {Fri, 24 Mar 2023 00:04:21 +0100},
  biburl = {https://dblp.org/rec/conf/coco/Kawamura09.bib},
  bibsource = {dblp computer science bibliography, https://dblp.org},
  doi = {10.1109/CCC.2009.34},
  _bib2doi_selected = {dblp:/rec/conf/coco/Kawamura09.bib},
  _bib2doi_confirmed = {true},
}

@article{kawamura2014computational,
  author = {Kawamura, Akitoshi and Ota, Hiroyuki and R{\"o}snick, Carsten and Ziegler, Martin},
  date-added = {2024-02-11 15:41:08 +0100},
  date-modified = {2024-02-11 15:41:08 +0100},
  journal = {Logical Methods in Computer Science},
  publisher = {Episciences. org},
  title = {Computational complexity of smooth differential equations},
  volume = {10},
  year = {2014},
  timestamp = {Thu, 25 Jun 2020 21:29:08 +0200},
  biburl = {https://dblp.org/rec/journals/corr/KawamuraORZ13.bib},
  bibsource = {dblp computer science bibliography, https://dblp.org},
  doi = {10.2168/LMCS-10(1:6)2014},
  _bib2doi_selected = {dblp:/rec/journals/corr/KawamuraORZ13.bib},
  _bib2doi_confirmed = {true},
}

@article{kidger2022neural,
  author = {Patrick Kidger},
  date-added = {2022-02-16 14:52:39 +0100},
  date-modified = {2022-02-16 14:52:39 +0100},
  journal = {CoRR},
  title = {On Neural Differential Equations},
  year = {2022},
  timestamp = {Wed, 09 Feb 2022 15:43:35 +0100},
  biburl = {https://dblp.org/rec/journals/corr/abs-2202-02435.bib},
  bibsource = {dblp computer science bibliography, https://dblp.org},
  doi = {10.48550/arXiv.2202.02435},
  volume = {abs/2202.02435},
  url = {https://arxiv.org/abs/2202.02435},
  eprinttype = {arXiv},
  eprint = {2202.02435},
  _bib2doi_selected = {dblp:/rec/journals/corr/abs-2202-02435.bib},
  _bib2doi_confirmed = {true},
  _bib2doi_finished = {true},
}

@book{Ko91,
  address = {Boston},
  author = {Ker{-}I Ko},
  date-added = {2007-09-03 14:37:44 +0200},
  date-modified = {2008-01-08 22:04:51 +0100},
  file = {analog.bib},
  publisher = {Birkh{\"{a}}user},
  series = {Progress in theoretical computer science},
  title = {Complexity theory of real functions},
  year = {1991},
  timestamp = {Wed, 17 Jul 2019 17:20:52 +0200},
  biburl = {https://dblp.org/rec/books/daglib/0067010.bib},
  bibsource = {dblp computer science bibliography, https://dblp.org},
  isbn = {978-0-8176-3586-2},
  volume = {3},
  _bib2doi_selected = {dblp:/rec/books/daglib/0067010.bib},
  _bib2doi_confirmed = {true},
  _bib2doi_finished = {true},
}

@article{Sha41,
  author = {Shannon, Claude E.},
  date-added = {2007-09-03 14:37:44 +0200},
  date-modified = {2019-07-29 14:47:19 +0200},
  file = {analog.bib},
  journal = {Journal of Mathematics and Physics MIT},
  pages = {337-354},
  title = {Mathematical Theory of the Differential Analyser},
  volume = {20},
  year = {1941},
  _bib2doi_finished = {true},
}

@book{Sip97,
  author = {Sipser, Michael},
  date-added = {2007-09-03 14:37:44 +0200},
  date-modified = {2019-07-29 14:46:54 +0200},
  publisher = {PWS Publishing Company},
  title = {Introduction to the Theory of Computation},
  year = {1997},
  timestamp = {Thu, 21 Apr 2011 19:59:45 +0200},
  biburl = {https://dblp.org/rec/books/daglib/0086373.bib},
  bibsource = {dblp computer science bibliography, https://dblp.org},
  isbn = {978-0-534-94728-6},
  _bib2doi_selected = {dblp:/rec/books/daglib/0086373.bib},
  _bib2doi_confirmed = {true},
}

@article{Tur36,
  author = {Alan M. Turing},
  date-added = {2007-09-03 14:37:44 +0200},
  date-modified = {2009-11-20 22:13:12 +0100},
  file = {analog.bib},
  journal = {Proceedings of the London Mathematical Society},
  note = {Reprinted in Martin Davis. The Undecidable: Basic Papers on Undecidable Propositions, Unsolvable Problems and Computable Functions. Raven Press, 1965.},
  number = {1},
  pages = {230--265},
  title = {On computable numbers, with an application to the Entscheidungsproblem},
  volume = {s2-42},
  year = {1937},
  timestamp = {Tue, 16 Aug 2022 23:09:52 +0200},
  biburl = {https://dblp.org/rec/journals/x/Turing37.bib},
  bibsource = {dblp computer science bibliography, https://dblp.org},
  doi = {10.1112/plms/s2-42.1.230},
  url = {https://doi.org/10.1112/plms/s2-42.1.230},
  urn = {urn:nbn:de:hbz:6:1-85171},
  _bib2doi_selected = {dblp:/rec/journals/x/Turing37.bib},
  _bib2doi_confirmed = {true},
  _bib2doi_finished = {true},
}

@book{LivreAnalogcomputing,
  author = {Ulmann, Bernd},
  date-added = {2016-02-11 08:16:30 +0000},
  date-modified = {2016-02-11 08:16:42 +0000},
  publisher = {Walter de Gruyter},
  title = {Analog computing},
  year = {2013},
  _bib2doi_finished = {true},
}

@book{ulmann2020analog,
  author = {Ulmann, Bernd},
  date-added = {2022-02-03 12:41:38 +0100},
  date-modified = {2022-02-03 12:41:38 +0100},
  publisher = {De Gruyter Oldenbourg},
  title = {Analog and hybrid computer programming},
  year = {2020},
  _bib2doi_finished = {true},
}

@book{Wei00,
  author = {Klaus Weihrauch},
  date-added = {2007-09-03 14:37:44 +0200},
  date-modified = {2019-07-29 14:46:02 +0200},
  publisher = {Springer},
  title = {Computable Analysis - An Introduction},
  year = {2000},
  timestamp = {Tue, 16 May 2017 14:24:38 +0200},
  biburl = {https://dblp.org/rec/series/txtcs/Weihrauch00.bib},
  bibsource = {dblp computer science bibliography, https://dblp.org},
  doi = {10.1007/978-3-642-56999-9},
  url = {https://doi.org/10.1007/978-3-642-56999-9},
  series = {Texts in Theoretical Computer Science. An {EATCS} Series},
  isbn = {978-3-540-66817-6},
  _bib2doi_selected = {dblp:/rec/series/txtcs/Weihrauch00.bib},
  _bib2doi_confirmed = {true},
  _bib2doi_finished = {true},
}

@inproceedings{BlancBournezMCU22vantardise,
  author = {Manon Blanc and Olivier Bournez},
  bibsource = {dblp computer science bibliography, https://dblp.org},
  biburl = {https://dblp.org/rec/conf/mcu/BlancB22.bib},
  booktitle = {Machines, Computations, and Universality - 9th International Conference, {MCU} 2022, Debrecen, Hungary, August 31 - September 2, 2022, Proceedings},
  date-added = {2022-09-16 13:53:36 +0200},
  date-modified = {2024-02-13 21:49:29 +0100},
  doi = {10.1007/978-3-031-13502-6_4},
  editor = {J{\'{e}}r{\^{o}}me Durand{-}Lose and Gy{\"{o}}rgy Vaszil},
  pages = {58--74},
  publisher = {Springer},
  series = {Lecture Notes in Computer Science},
  timestamp = {Tue, 09 Aug 2022 17:12:09 +0200},
  title = {A Characterization of Polynomial Time Computable Functions from the Integers to the Reals Using Discrete Ordinary Differential Equations},
  url = {https://doi.org/10.1007/978-3-031-13502-6\_4},
  volume = {13419},
  year = {2022},
  bdsk-url-1 = {https://doi.org/10.1007/978-3-031-13502-6%5C_4},
  _bib2doi_old_doi = {10.1007/978-3-031-13502-6\_4},
  _bib2doi_selected = {dblp:/rec/conf/mcu/BlancB22.bib},
  _bib2doi_confirmed = {true},
  _bib2doi_finished = {true},
}

@inproceedings{BlancBournezMFCS2023vantardise,
  author = {Manon Blanc and Olivier Bournez},
  booktitle = {48th International Symposium on Mathematical Foundations of Computer Science, {MFCS} 2023, August 28 to September 1, 2023, Bordeaux, France},
  date-added = {2024-02-13 21:51:15 +0100},
  date-modified = {2024-02-13 21:58:15 +0100},
  organization = {Schloss Dagstuhl-Leibniz-Zentrum f{\"u}r Informatik},
  title = {A Characterisation of Functions Computable in Polynomial Time and Space over the Reals with Discrete Ordinary Differential Equations: Simulation of Turing Machines with Analytic Discrete ODEs},
  volume = {272},
  year = {2023},
  timestamp = {Mon, 28 Aug 2023 17:23:09 +0200},
  biburl = {https://dblp.org/rec/conf/mfcs/BlancB23.bib},
  bibsource = {dblp computer science bibliography, https://dblp.org},
  doi = {10.4230/LIPIcs.MFCS.2023.21},
  publisher = {Schloss Dagstuhl - Leibniz-Zentrum f{\"{u}}r Informatik},
  pages = {21:1--21:15},
  url = {https://doi.org/10.4230/LIPIcs.MFCS.2023.21},
  editor = {J{\'{e}}r{\^{o}}me Leroux and Sylvain Lombardy and David Peleg},
  series = {LIPIcs},
  _bib2doi_selected = {dblp:/rec/conf/mfcs/BlancB23.bib},
  _bib2doi_confirmed = {true},
  _bib2doi_finished = {true},
}

@inproceedings{MFCS2019,
  author = {Olivier Bournez and Arnaud Durand},
  booktitle = {44th Int Symposium on Mathematical Foundations of Computer Science, {MFCS}},
  date-added = {2019-06-14 20:18:06 +0200},
  date-modified = {2019-06-14 20:18:44 +0200},
  editor = {Rossmanith, Peter and Heggernes, Pinar and Katoen, Joost{-}Pieter},
  pages = {23:1--23:14},
  publisher = {Schloss Dagstuhl - Leibniz-Zentrum f{\"{u}}r Informatik},
  series = {LIPIcs},
  title = {Recursion schemes, discrete differential equations and characterization of polynomial time computation},
  volume = {138},
  year = {2019},
  timestamp = {Thu, 05 Sep 2019 19:42:15 +0200},
  biburl = {https://dblp.org/rec/conf/mfcs/BournezD19.bib},
  bibsource = {dblp computer science bibliography, https://dblp.org},
  doi = {10.4230/LIPIcs.MFCS.2019.23},
  _bib2doi_selected = {dblp:/rec/conf/mfcs/BournezD19.bib},
  _bib2doi_confirmed = {true},
}

@article{MFCSJournal,
  author = {Bournez, Olivier and Durand, Arnaud},
  date-added = {2023-07-18 16:15:15 +0200},
  date-modified = {2023-07-18 16:19:55 +0200},
  journal = {Computational Complexity},
  number = {2},
  pages = {7},
  publisher = {Springer},
  title = {A Characterization of Functions over the Integers Computable in Polynomial Time Using Discrete Ordinary Differential Equations},
  volume = {32},
  year = {2023},
  timestamp = {Fri, 21 Jul 2023 17:55:54 +0200},
  biburl = {https://dblp.org/rec/journals/cc/BournezD23.bib},
  bibsource = {dblp computer science bibliography, https://dblp.org},
  doi = {10.1007/s00037-023-00240-1},
  _bib2doi_selected = {dblp:/rec/journals/cc/BournezD23.bib},
  _bib2doi_confirmed = {true},
}

@article{BGDPRiccardo2022,
  author = {Bournez, Olivier and Gozzi, Riccardo and Gra{\c{c}}a, Daniel S and Pouly, Amaury},
  date-modified = {2023-11-30 08:30:11 +0100},
  journal = {Journal of Complexity},
  month = {august},
  pages = {101755},
  publisher = {Elsevier},
  title = {A continuous characterization of {PSPACE} using polynomial ordinary differential equations},
  url = {https://www.sciencedirect.com/science/article/pii/S0885064X23000249?dgcid)=author},
  volume = {77},
  year = {2023},
  bdsk-url-1 = {https://www.sciencedirect.com/science/article/pii/S0885064X23000249?dgcid)=author},
  timestamp = {Tue, 12 Sep 2023 07:58:30 +0200},
  biburl = {https://dblp.org/rec/journals/jc/BournezGGP23.bib},
  bibsource = {dblp computer science bibliography, https://dblp.org},
  doi = {10.1016/j.jco.2023.101755},
  _bib2doi_selected = {dblp:/rec/journals/jc/BournezGGP23.bib},
  _bib2doi_confirmed = {true},
}

@inproceedings{ICALP2016vantardise,
  author = {Olivier Bournez and Daniel Silva Gra{\c{c}}a and Amaury Pouly},
  booktitle = {43rd International Colloquium on Automata, Languages, and Programming, {ICALP} 2016, July 11-15, 2016, Rome, Italy},
  date-added = {2024-02-13 22:02:09 +0100},
  date-modified = {2024-02-13 22:03:36 +0100},
  pages = {109:1--109:15},
  pdf = {http://www.lix.polytechnique.fr/~bournez/uploads/Main/icalp2016.pdf},
  publisher = {Schloss Dagstuhl - Leibniz-Zentrum f{\"{u}}r Informatik},
  series = {LIPIcs},
  title = {Polynomial Time Corresponds to Solutions of Polynomial Ordinary Differential Equations of Polynomial Length: The General Purpose Analog Computer and Computable Analysis Are Two Efficiently Equivalent Models of Computations},
  topics = {Analog,Dynamical,GPAC},
  volume = {55},
  year = {2016},
  bdsk-url-1 = {http://drops.dagstuhl.de/opus/frontdoor.php?source_opus=6244},
  timestamp = {Wed, 29 Jul 2020 09:29:28 +0200},
  biburl = {https://dblp.org/rec/conf/icalp/BournezGP16.bib},
  bibsource = {dblp computer science bibliography, https://dblp.org},
  doi = {10.4230/LIPIcs.ICALP.2016.109},
  url = {https://doi.org/10.4230/LIPIcs.ICALP.2016.109},
  editor = {Ioannis Chatzigiannakis and Michael Mitzenmacher and Yuval Rabani and Davide Sangiorgi},
  _bib2doi_selected = {dblp:/rec/conf/icalp/BournezGP16.bib},
  _bib2doi_confirmed = {true},
  _bib2doi_finished = {true},
}

@inproceedings{ICALP2017,
  author = {Olivier Bournez and Amaury Pouly},
  booktitle = {44th International Colloquium on Automata, Languages, and Programming, {ICALP} 2017, July 10-14, 2017, Warsaw, Poland},
  date-added = {2017-04-14 08:10:15 +0000},
  date-modified = {2017-04-14 08:11:12 +0000},
  title = {A Universal Ordinary Differential Equation},
  topics = {Analog,Dynamical,GPAC},
  year = {2017},
  timestamp = {Tue, 11 Feb 2020 15:52:14 +0100},
  biburl = {https://dblp.org/rec/conf/icalp/BournezP17.bib},
  bibsource = {dblp computer science bibliography, https://dblp.org},
  doi = {10.4230/LIPIcs.ICALP.2017.116},
  publisher = {Schloss Dagstuhl - Leibniz-Zentrum f{\"{u}}r Informatik},
  volume = {80},
  pages = {116:1--116:14},
  url = {https://doi.org/10.4230/LIPIcs.ICALP.2017.116},
  editor = {Ioannis Chatzigiannakis and Piotr Indyk and Fabian Kuhn and Anca Muscholl},
  series = {LIPIcs},
  _bib2doi_selected = {dblp:/rec/conf/icalp/BournezP17.bib},
  _bib2doi_confirmed = {true},
  _bib2doi_finished = {true},
}

@incollection{bournez2021survey,
  author = {Olivier Bournez and Amaury Pouly},
  booktitle = {Handbook of Computability and Complexity in Analysis},
  date-added = {2022-02-11 21:50:02 +0100},
  date-modified = {2022-02-11 21:50:02 +0100},
  pages = {173--226},
  publisher = {Springer},
  title = {A survey on analog models of computation},
  year = {2021},
  _bib2doi_finished = {true},
}

@article{Schaefer24,
  author       = {Marcus Schaefer and
                  Jean Cardinal and
                  Tillmann Miltzow},
  title        = {The Existential Theory of the Reals as a Complexity Class: {A} Compendium},
  journal      = {CoRR},
  volume       = {abs/2407.18006},
  year         = {2024},
  url          = {https://doi.org/10.48550/arXiv.2407.18006},
  doi          = {10.48550/ARXIV.2407.18006},
  eprinttype    = {arXiv},
  eprint       = {2407.18006},
  bibsource    = {dblp computer science bibliography, https://dblp.org}
}

@article{collins05,
title = {Continuity and computability of reachable sets},
journal = {Theoretical Computer Science},
volume = {341},
number = {1},
pages = {162-195},
year = {2005},
issn = {0304-3975},
doi = {https://doi.org/10.1016/j.tcs.2005.05.001},
url = {https://www.sciencedirect.com/science/article/pii/S0304397505003038},
author = {Pieter Collins}
}

@book{conley1978,
  title={Isolated Invariant Sets and the Morse Index},
  author={Conley, C.C. and Conference Board of the Mathematical Sciences},
  isbn={9781470423988},
  lccn={78015772},
  series={Regional conference series in mathematics},
  url={https://books.google.fr/books?id=7u9KAQAACAAJ},
  year={1978},
  publisher={Conference Board of the Mathematical Sciences}
}

@InProceedings{CSL24,
  author =	{Blanc, Manon and Bournez, Olivier},
  title =	{{Quantifiying the Robustness of Dynamical Systems. Relating Time and Space to Length and Precision}},
  booktitle =	{32nd EACSL Annual Conference on Computer Science Logic (CSL 2024)},
  pages =	{17:1--17:20},
  series =	{Leibniz International Proceedings in Informatics (LIPIcs)},
  ISBN =	{978-3-95977-310-2},
  ISSN =	{1868-8969},
  year =	{2024},
  volume =	{288},
  editor =	{Murano, Aniello and Silva, Alexandra},
  publisher =	{Schloss Dagstuhl -- Leibniz-Zentrum f{\"u}r Informatik},
  address =	{Dagstuhl, Germany},
  URL =		{https://drops.dagstuhl.de/entities/document/10.4230/LIPIcs.CSL.2024.17},
  URN =		{urn:nbn:de:0030-drops-196609},
  doi =		{10.4230/LIPIcs.CSL.2024.17}
}

@InProceedings{ICALP24,
  author =	{Blanc, Manon and Bournez, Olivier},
  title =	{{The Complexity of Computing in Continuous Time: Space Complexity Is Precision}},
  booktitle =	{51st International Colloquium on Automata, Languages, and Programming (ICALP 2024)},
  pages =	{129:1--129:22},
  series =	{Leibniz International Proceedings in Informatics (LIPIcs)},
  ISBN =	{978-3-95977-322-5},
  ISSN =	{1868-8969},
  year =	{2024},
  volume =	{297},
  editor =	{Bringmann, Karl and Grohe, Martin and Puppis, Gabriele and Svensson, Ola},
  publisher =	{Schloss Dagstuhl -- Leibniz-Zentrum f{\"u}r Informatik},
  address =	{Dagstuhl, Germany},
  URL =		{https://drops.dagstuhl.de/entities/document/10.4230/LIPIcs.ICALP.2024.129},
  URN =		{urn:nbn:de:0030-drops-202722},
  doi =		{10.4230/LIPIcs.ICALP.2024.129}
}
	\appendix
	
	\section{Little course of computable analysis}\label{secCA}
	\subsection{Notions of computable analysis} 
	
	In classical models of computation, we usually consider an input of finite size, e.g. integers or graphs, and work on some representation on a finite alphabet. Such objects are said to be \emph{computable} when there exists some Turing machine producing their representations.
	
	The main idea behind computable analysis is to be able to talk about objects which usually do not have a finite representation, such as real numbers, functions over the reals, compact sets, etc. A reason why we cannot use classical models is that the set of words on a finite alphabet is countable, while $\R$ is not. 
	
	However, they can be represented by some infinite words over a finite alphabet and the idea is to fix such representations for these various objects, called \emph{names}, with suitable computable properties. In particular, in all the following proposed representations, it was proved that an object is computable iff it has some computable representation.	
We denote by $\Sigma$ the machine alphabet, $\Sigma^*$ denotes the finite words on this alphabet (they can be arbitrarly long) and $\Sigma^\omega$ the infinite words on $\Sigma$.
	
For discrete-time models of computations over the reals, the most famous models are computable analysis, based on the Turing machine model (see \cite{Tur36}, \cite{Wei00}), and algebraic models such as the Blum Shub Smale (BSS) model of computation \cite{BSS89}, \cite{BCSS98}. 
The class $\exists \R$ corresponds to the (constant-free, equivalently uniform) non-deterministic time of the BSS model of computation. Numerous decision problems have been proved recently in this class \cite{Schaefer24}.
Both models were tailored for different applications, and it is well known that we cannot unify existing models with the equivalent of a Church-Turing thesis. 
For example, computable functions in a computable analysis model must be continuous, whereas the BSS model aims to consider functions and problems over polynomials that are not. The former model has not been introduced to be associated with actual physical machines.	
Among models of computation over the reals, we can highlight continuous-time models. This includes models of old, first-ever-built computers, such as the Differential Analysers \cite{ulmann2020analog}. A famous mathematical model of such machines is the General Purpose Analog Computer model of Claude Shannon \cite{Sha41}. It covers many historical machines and today's analogue devices \cite{LivreAnalogcomputing} too. It also includes various recent approaches and models from deep learning, such as Neural ODEs \cite{chen2018neural,kidger2022neural} with many variants. 

In the context of continuous-time, the situation is clearer than with discrete-time models, as there is a unifying way to describe these models, provided by Ordinary Differential Equations (ODEs). Each model corresponds to a particular class of  ODEs.  For example, the GPAC corresponds to polynomial ODEs \cite{GraCos03}, and Neural ODEs are made by selecting
the best solution among a parameterised class of ODEs: see, e.g. \cite{kidger2022neural}. 

When we say that a function $f: S_{1}  \times \cdots \times S_{d} \to \R^{d'}$, $S_i \subseteq\R$ for all $i\in\textlbrackdbl 1, d\textrbrackdbl $, is (respectively: polynomial-time) computable, it will always be in the sense of computable analysis. We recall here the basic concepts and definitions, mostly following the book \cite{Ko91} on complexity theory in computable analysis. Alternative presentations include \cite{brattka2008tutorial,Wei00}. 
	
	\begin{remark}
	The monograph \cite{Ko91} formulates explicitly most of the statements only for functions over the reals. 
	Since we talk about functions in $\R^\N$, we need to mix complexity issues dealing with integer and real arguments. Hence, we extend some of the statements to these more general settings (in the spirit of \cite{Ko91,kawamura2009lipschitz,kawamura2014computational,kawamura2015computational,kawamura2018parameterized}).
\end{remark}

We present here a formalisation equivalent to \cite{takeuti} but with our own words. This will be used to fix the framework and the notations.

\begin{definition}[Computable String Function $f$ \cite{Wei00}]
	The initial tape configuration for input $(y_1, \cdots, y_k) \in Y_1\times\dots\times Y_k$ is as follows: for each input tape $i$, the (finite or infinite) sequence $y_i\in Y_i$ is placed on the tape immediately to the right of the head, all other tape cells contain the symbol $\mathtt{B}$. For all $y_0\in Y_0, y_1\in Y_1, \cdots, y_k\in Y_k$ we define:
	\begin{enumerate}
		\item \underline{\textbf{Case $Y_0 = \Sigma^*$}:}
		$f(y_1,\cdots, y_k) := y_0 \in \Sigma^*$ iff the associated Type-2 Turing machine $M$ halts on input $(y_1,\cdots,y_k)$ with $y_0$ written on the output tape.
		
		\item \underline{\textbf{Case $Y_0 = \Sigma^\omega$}:}
		$f(y_1,\cdots, y_k) := y_0 \in \Sigma^\omega$ iff $M$ computes forever on input $(y_1,\cdots,y_k)$ and writes $y_0$ on the output tape.
	\end{enumerate} 	 	
\end{definition} 

We can now define a proper notion of computable enumerability in that context:

\begin{definition}[C.e. sets \cite{Wei00}]
	A closed set $A\in\R^n$ is \emph{c.e.} if there exists a Turing machine that enumerates all the open rational cubes intersecting $A$.
\end{definition}

	We can also consider oracle Turing machines to represent functions in $\R^\R$, as we will see in Definition \ref{MBCompFun}.

We approximate real numbers with dyadics. A dyadic number $d$ is a rational number with a finite binary expansion. That is to say $d=m / 2^{n}$ for some integers $m \in \Z$, $n\in \N$, $n \geq 0$. Let $\dyadic$ be the set of all dyadic rational numbers.

We denote by $\dyadic_{n}$ the set of all dyadic rationals $d$ with a representation $s$ of precision $\operatorname{prec}(s)=n$: that is, $\dyadic_{n}=\left\{m ~\cdot~ 2^{-n} \mid m \in \Z\right\}$.
	
	\begin{remark}
		$\dyadic$ is dense in $\R$.
	\end{remark}
	
		Thus, we can approximate any real number $x$ with a sequence of dyadics converging to $x$. 
		
		$C F_{x}$ is a set of Cauchy sequences converging to $x$:
		\begin{definition}[Cauchy representation of a real number]
			Define the Cauchy representation $\rho_C : \Sigma^\omega \rightarrow \R$ by $\rho_C(p) = x$ iff there are words $w_0, w_1, \cdots \in \text{Dom}(\nu_\Q)$ such that $ p = \iota(w_0)\iota(w_1) \cdots$, $\left| \bar{w_i} - \bar{w_k} \right| \leq 2^{-i}$ for $i<k$ and $x = \lim_{i \rightarrow \infty} \bar{w_i}$.
		\end{definition}

		It leads us to the definition of computable real numbers:

\begin{definition}[{\cite{Ko91}}]  \label{def:cinq} 
	For each real number $x$, a function $\phi: \N \rightarrow \dyadic$ is said to binary converge to $x$ if  for all $n \in \N, \operatorname{prec}(\phi(n))=n$ and $|\phi(n)-x| \leq 2^{-n}$. 
\end{definition}

Thus, we deduce:

\begin{definition}[Computable real number \cite{Ko91}]
	A real number $x$ is computable iff $CF_x$ contains a computable function: there exists a TM such that, on input $n\in\N$, computes the $n$ first digits of $x$.
\end{definition}

There exist many more non-computable than computable real numbers. One intuition is to notice that, according to the previous definition, there is an injection between the set of Turing machines and computable reals: the set of computable real numbers is countable, whereas $\R$ is not. However, the set of computable reals being dense in $\R$, any non-computable real can be approximated by a sequence of computable reals. 
A \emph{name} of $x \in \R^{d}$ is a sequence $\left(I_{n}\right)_{n\in \N}$ of nested open rational balls  with $I_{n+1} \subseteq I_n$ for all $n \in \N$ and $\{x\}=\bigcap_{n \in \N} I_n$  
Such a name can be encoded as an infinite sequence of symbols.

In \cite{BlancBournezMFCS2023vantardise} and \cite{ICALP24}, the authors are interested in having characterisations of functions computable in $\FPtime$ and $\FPspace$. So, we have to formally define what a computable function is in the framework of computable analysis:

\begin{definition}[\cite{Ko91}] \label{MBCompFun}
	A real function $f: \R \rightarrow \R$ is computable if there is a function-oracle {TM} $M$ such that for each $x \in \R$ and each $\phi \in C F_{x}$, the function $\psi$ computed by $M$ with oracle $\phi$  (i.e. $\psi(n)=M^{\phi}(n))$ is in $C F_{f(x)}$. 
	We say the function $f$ is computable on the interval $[a, b]$ if the above condition holds for all $x \in[a, b]$.
\end{definition}

We call a real function $f: \subseteq \mathbb{R} \rightarrow \mathbb{R}$ computable, iff some Type-2 Turing machine maps any name of any $x \in \dom(f)$ to a name of $f(x)$. For real functions $\tu f: \subseteq \mathbb{R}^k \rightarrow \mathbb{R}$ we consider machines reading $k$ names in parallel.  A computable function is necessarily continuous: see \cite{Wei00} for all details.

	The intuition is that an oracle Turing machine $M$ computes a function $f$ over the real, if we have, for any real $x$ in the domain of $f$, for any
	$\phi \in C F_{x}$, $M$ with oracle $\phi$, the following: given the output precision $2^{-n}$, given in the form of integer $n$, $M$ eventually output $d \in \dyadic$ with $\|d-f(x)\| \leq$ $2^{-n}$. During this computation, $M$  asks necessarily finitely many questions to its oracle $\phi$. From the considered oracles, any such question can be seen as follows: for some queried precision $2^{-m}$,  $M$ asks the oracle $\phi$  to provide some $\phi(m)$, such that $\|\phi(m)-x\| \leq 2^{-m}$.
	
	We notice that, in computable analysis, we only deal with continuous functions:
	
	\begin{theorem}[\cite{Wei00}]
		Every computable real function is continuous.
	\end{theorem}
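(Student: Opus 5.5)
The plan is the standard finite-use argument: a function-oracle TM that halts consults only finitely many oracle values, so its output is determined by a finite amount of information about the input. Since real functions are computed in the sense of Definition~\ref{MBCompFun}, it is enough to show that $f$ is continuous at every $x\in\dom(f)$. Fix such an $x$, a function-oracle TM $M$ computing $f$, and an output precision $n\in\N$. The goal is a $\delta>0$ such that $|f(y)-f(x)|\le 2^{-n+1}$ whenever $y\in\dom(f)$ and $|y-x|<\delta$.

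The first step is to choose a particular oracle for $x$ that leaves some slack. Let $\phi(i)$ be a nearest element of $\dyadic_i$ to $x$. Then $\operatorname{prec}(\phi(i))=i$ and $|\phi(i)-x|\le 2^{-(i+1)}$, so $\phi\in CF_x$, with half of the allowed error still unused. Now run $M$ with oracle $\phi$ on input $n$. By the definition of computability it halts and outputs some $d$ with $|d-f(x)|\le 2^{-n}$. Because the computation halts, it asks finitely many oracle queries $\phi(m_1),\dots,\phi(m_k)$. Set $m=\max_j m_j$, or $m=0$ if there were no queries, and $\delta=2^{-(m+1)}$.

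The second step is to build a name of each nearby point that agrees with $\phi$ on every query $M$ made. Let $y\in\dom(f)$ with $|y-x|<\delta$. Define $\phi'(i)=\phi(i)$ for $i\le m$, and let $\phi'(i)$ be a nearest element of $\dyadic_i$ to $y$ for $i>m$. For $i>m$ the bound $|\phi'(i)-y|\le 2^{-i}$ is immediate. For $i\le m$ we have
\[
|\phi'(i)-y|\le|\phi(i)-x|+|x-y|\le 2^{-(i+1)}+2^{-(m+1)}\le 2^{-i}.
\]
So $\phi'\in CF_y$. The machine $M$ is deterministic and receives identical answers to all the queries it makes, so the two runs coincide step by step and $M^{\phi'}(n)=d$. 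Since $M$ computes $f$, this gives $|d-f(y)|\le 2^{-n}$. The triangle inequality then yields $|f(x)-f(y)|\le 2^{-n+1}$. As $n$ was arbitrary, $f$ is continuous at $x$.

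The main obstacle is the second step, where an arbitrary name of $x$ may fail to be a valid name of $y$. If $\phi(i)$ sits exactly at distance $2^{-i}$ from $x$, no $y\neq x$ on the far side can reuse it. That is why the proof fixes the rounding oracle with error $2^{-(i+1)}$ rather than an arbitrary $\phi\in CF_x$: continuity only needs one name of $x$, and the definition guarantees that $M$ is correct on every name of $y$. The same argument extends to $\tu f:\R^k\to\R$ by building one such oracle per coordinate and taking $\delta$ from the largest precision queried on any coordinate.
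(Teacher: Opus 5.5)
Your proof is correct. The paper gives no argument of its own here and only cites Weihrauch. There the result is obtained in the Type-2 setting in two steps. First, every computable string function on $\Sigma^\omega$ is continuous for the Cantor topology, since a finite prefix of the output depends only on a finite prefix of the input. Second, the standard (equivalently, Cauchy) representation of $\R$ is admissible, so a real function with a continuous realizer is itself continuous.

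You instead work directly with the function-oracle machines of Definition~\ref{MBCompFun}, which is the definition the paper actually uses. This means you do not need the equivalence between Ko's oracle model and Weihrauch's Type-2 model. Your finite-use observation plays the role of the first step. Your choice of the rounding name, with error at most $2^{-(i+1)}$, replaces the admissibility step: it guarantees that the finitely many queried values remain valid approximations for every $y$ in a neighbourhood of $x$. As you note, an arbitrary name of $x$ does not have this property.

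Weihrauch's route buys generality: it covers any admissible representation and arbitrary represented spaces. Yours is elementary and self-contained, and it produces an explicit $\delta=2^{-(m+1)}$ at each point, where $m$ is the largest precision queried on the rounding name.
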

	
	\begin{remark}\label{NBeqatest}
		From the model of computable analysis, given the name of $\tu f$, $\vx, \vy\in\Q$, it is impossible in general to tell effectively if  $\tu f(\tu \vx)=\vy$.
	\end{remark}
	
	For closed sets, the notion of computability can be interpreted as the possibility of being plotted with an arbitrarily chosen precision: $\vz/2^{n}$ corresponds to a pixel at precision $2^{-n}$, $1$ is black (the pixel is plotted black), $0$ is white (the pixel is plotted white). 
	
	\begin{theorem}[\cite{brattka2008tutorial}] \label{thdessin}
		For a closed set $A \subseteq \mathbb{R}^k$,  A is computable iff it can be plotted: there exists a computable function $ f: \N \times \Z^k \rightarrow \{0,1\}$ and such that for all $n \in \mathbb{N}$ and $\vz \in \mathbb{Z}^k$
		$$
		f(n, \vz)= \begin{cases}1 & \text { if } B\left(\frac{\vz}{2^n},2^{-n}\right) \cap A \neq \emptyset,
			\\ 0 & \text { if } B\left(\frac{\vz}{2^n},2\cdot2^{-n}\right) \cap A = \emptyset, 
			\\ 0 \text { or } 1 & \text { otherwise. }\end{cases}
		$$
	\end{theorem}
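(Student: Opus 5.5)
We work with the standard definition of computable closed sets from \cite{brattka2008tutorial}: a closed $A\subseteq\R^k$ is computable iff its distance function $d_A(\vx)=\inf_{\tu a\in A}\|\vx-\tu a\|$ is computable, in the sense of Definition \ref{MBCompFun}. This is equivalent to $A$ being both c.e. and co-c.e. We assume $A\neq\emptyset$; the empty set is treated by convention, with $f\equiv 0$. We use the sup-norm, so balls are cubes, and the plotting function is naturally indexed by the grid $2^{-n}\Z^k$. The plan is to prove the two implications separately. Both rest on one elementary fact: the grid $2^{-n}\Z^k$ is $2^{-n}$-dense, so every $\tu a\in A$ lies in some ball $B(\vz/2^n,2^{-n})$.

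\emph{($\Rightarrow$)} Given $n$ and $\vz$, we compute a dyadic $\tilde d$ with $|\tilde d-d_A(\vz/2^n)|\le 2^{-(n+2)}$. We output $1$ iff $\tilde d<\tfrac32\cdot 2^{-n}$. Suppose $B(\vz/2^n,2^{-n})\cap A\neq\emptyset$. Then $d_A(\vz/2^n)<2^{-n}$, so $\tilde d<\tfrac54 2^{-n}$ and we output $1$. Suppose instead $B(\vz/2^n,2\cdot 2^{-n})\cap A=\emptyset$. Then $d_A(\vz/2^n)\ge 2\cdot 2^{-n}$, so $\tilde d\ge\tfrac74 2^{-n}$ and we output $0$. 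In the remaining case either answer is allowed, so the constraints are met.

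\emph{($\Leftarrow$)} Given $f$, we compute $d_A(\vx)$ to precision $2^{-m}$. We query an approximation $\vx'$ of $\vx$ at precision $2^{-(m+3)}$ and set $n=m+3$. Define $D_n(\vx')=\min\{\|\vx'-\vz/2^n\| : f(n,\vz)=1\}$. We claim $|D_n(\vx')-d_A(\vx')|\le 2\cdot 2^{-n}$.
\begin{itemize}
\item \emph{Upper bound.} If $f(n,\vz)=1$, the second clause of the specification shows that $B(\vz/2^n,2\cdot2^{-n})$ meets $A$. Hence $d_A(\vx')\le\|\vx'-\vz/2^n\|+2\cdot2^{-n}$.
\item \emph{Lower bound.} Let $\tu a\in A$ realise $d_A(\vx')$; it exists since $A$ is closed. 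Pick $\vz$ with $\|\tu a-\vz/2^n\|<2^{-n}$. By the first clause $f(n,\vz)=1$, and $\|\vx'-\vz/2^n\|\le d_A(\vx')+2^{-n}$.
\end{itemize}
Since $d_A$ is $1$-Lipschitz, the total error is at most $3\cdot 2^{-n}<2^{-m}$.

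The main obstacle is making the minimum a finite computation. The set of $\vz$ with $f(n,\vz)=1$ is unbounded, and a priori we do not know where $A$ lies. We handle this by searching grid points in cubes of increasing radius around $\vx'$ until a first $\vz_0$ with $f(n,\vz_0)=1$ is found. The search terminates because $A\neq\emptyset$ and the lower-bound argument guarantees that some grid point near a point of $A$ evaluates to $1$. Once $\vz_0$ is found, any minimiser satisfies $\|\vx'-\vz/2^n\|\le\|\vx'-\vz_0/2^n\|$. So only the finitely many grid points in that cube need to be examined, and $D_n(\vx')$ is computable. This yields a computable $d_A$, hence $A$ is computable.
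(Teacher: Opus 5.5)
The paper gives no proof of this statement: it is quoted from \cite{brattka2008tutorial} as background, so there is no internal argument to compare yours against. Your proof is correct, and it is the standard distance-function argument. In the forward direction, thresholding a $2^{-(n+2)}$-approximation of $d_A(\vz/2^n)$ at $\tfrac32\cdot 2^{-n}$ separates the two mandatory cases. In the converse direction, the minimum distance from $\vx'$ to the centres $\vz/2^n$ with $f(n,\vz)=1$ approximates $d_A(\vx')$ to within $2\cdot 2^{-n}$, and the outward search turns this into a finite computation once $A\neq\emptyset$.

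Your choice of the sup-norm is not cosmetic, and it matches the paper's convention. The covering step ``every $\tu a\in A$ lies in some $B(\vz/2^n,2^{-n})$'' uses that the grid $2^{-n}\Z^k$ has sup-norm covering radius $2^{-(n+1)}$. For Euclidean balls the covering radius is $\tfrac{\sqrt k}{2}\,2^{-n}$, so that step fails once $k\ge 4$. For large $k$ the equivalence itself breaks: a singleton $\{\tu a\}$ with $\tu a$ non-computable but far from every grid $2^{-n}\Z^k$ is plotted by $f\equiv 0$.

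The only ingredient you take on faith is that the sup-norm distance-function definition of a computable closed set agrees with ``c.e.\ and co-c.e.''. That equivalence is standard, and it is the right one to quote here.
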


Computability and complexity classes over the reals depend on the model of computation we consider, and also the representations of the reals: some complexity results hold for some representation (of the reals) but not for another (see \cite{Wei00}). Here, we consider only equivalent, with respect to the following definition, representations.

\begin{definition}[Translation and equivalence, from \cite{Wei00}]\label{defRepEquiv}
	For arbitrary functions $\gamma: Y \rightarrow S $, $\gamma^{'}: Y^{'} \rightarrow S^{'}$ with $Y, Y^{'} \in \{\Sigma^*, \Sigma^\omega\}$ we have:
	$f:  S^Y \rightarrow {S^{'}}^{Y^{'}}$, $f$ computable, \emph{translates} $\gamma$ to $\gamma^{'}$ iff, for all $y\in \dom(\gamma)$, $\gamma(y) = \gamma^{'}(f(y))$; and $\gamma$ is \emph{equivalent} to $\gamma^{'}$ iff some computable function translates $\gamma$ to $\gamma^{'}$ and some computable function translates $\gamma^{'}$ to $\gamma$. 
\end{definition}

	\subsection{On computable analysis: Complexity}
	
	Assume that $M$ is an oracle machine which computes $f$ on the domain $\dom(f)$. 

For simplicity, we may assume in our discussion that $\dom(f)$ is some closed interval or $\R$. For any oracle $\phi \in C F_{x}$, with $x \in \dom(f)$, let $T_{M}(\phi,~ n)$ be the number of steps for $M$ to halt on input $n$ with oracle $\phi$, and $T_{M}^{\prime}(x,~ n)=\max \left\{T_{M}(\phi,~ n) \mid \phi \in C F_{x}\right\}$. 

We define the time complexity of $f$ as follows:

\begin{definition}[{From \cite{Ko91}}]
	Let $\dom(f)$ be  bounded closed interval $[a, b]$. Let $f: \dom(f) \rightarrow \R$ be a computable function. Then, we say that the time complexity of $f$ on $\dom(f)$ is bounded by a function $t: \dom(f) \times \N \rightarrow \N$ if there exists an oracle TM $M$ which computes $f$ 
	such that for all $x \in \dom(f)$ and all $n>0$, $T_{M}^{\prime}(x,~ n) \leq t(x, ~n)$.
\end{definition}

In other words, as mentioned in \cite{Ko91}, the idea is to measure the time complexity of a function over the reals based on the input and the output precision $2^{-n}$. However, it is usually more convenient to simplify the complexity measure and make it depend only on the output precision. They define the uniform time complexity of $f$ on $G$ with a bound function $t^{\prime}: \N \rightarrow \N$.

However, if we do so, it is critical to realise that if we took $\dom(f)=\R$ in previous definition, for unbounded functions $f$, \cite{Ko91} observes that \textit{``the uniform time complexity does not exist, because the number of moves required to write down the integral part of $f(x)$ grows as $x$ approaches $+\infty$ or $-\infty$''}. 

Therefore, the approach of \cite{Ko91} is to do as follows (the bounds $-2^{X}$ and $2^{X}$ are somewhat arbitrary, but are  chosen here  because the binary expansion of any $x \in\left(-2^{n}, 2^{n}\right)$ has $n$ bits in the integral part):
	
	\begin{definition}[Adapted from \cite{Ko91}]  For functions $f(x)$ whose domain is $\R$, 
		we say that the (non-uniform) time complexity of $f$ is bounded by a function $t^{\prime}: \N^{2} \rightarrow \N$ if the time complexity of $f$ on $\left[-2^{X}, 2^{X}\right]$ is bounded by a function $t: \N^{2} \rightarrow \N$ such that $t(x, ~n) \leq t^{\prime}(X, ~n)$ for all $x \in\left[-2^{X}, 2^{X}\right]$. 
	\end{definition}
	
	We extend the approach to more general functions.  
	
	When $\tu x=(x_{1},\cdots,x_{p})$ and $\tu X= (X_{1},\cdots, X_{p})$, we write
	$\tu x \in [-2^{\tu X}, 2^{\tu X}]$ as short for $x_{1} \in\left[-2^{X_{1}}, 2^{X_{1}}\right]$,  \dots, $x_{p} \in\left[-2^{X_{p}}, 2^{X_{p}}\right]$.
	
	For $n\in\N$, we denote by $\ell(n)$ the number of bits of the binary representation of $n$.

	\begin{definition}[Complexity for real functions: general case, from \cite{Ko91}]   \label{def:bonendroit} Consider a function  $f(x_{1}, \cdots, x_{p}, n_{1}, \cdots, n_{q})$ whose domain is $\R^{p} \times \N^{q}$. 
		We say that the (non-uniform) time complexity of $f$ is bounded by a function $t^{\prime}: \N^{p+q+1} \rightarrow \N$ if the time complexity of $f(\cdot,\cdots,\cdot,\ell(n_{1}),$ $\cdots,\ell(n_{q}))$ on $\left[-2^{X_{1}}, 2^{X_{1}}\right] \times \cdots \times \left[-2^{X_{p}}, 2^{X_{p}}\right] $  
		is bounded by a function: $$t(\cdot,\cdots,\cdot,\ell(n_{1}),\cdots,\ell(n_{q}),\cdot): \N^{p} \times \N \to \N$$ such that 
		$ t(\tu x,\ell(n_{1}),\cdots,\ell(n_{q}), n) \leq t^{\prime}(\tu X,\ell(n_{1}), \cdots,\ell(n_{q}), n)$
	whenever  $\tu x \in \left[-2^{\tu X}, 2^{\tu X}\right].$
	We say that $f$ is polynomial time computable if $t^{\prime}$ can be chosen as a polynomial. 
	We say that a vectorial function is polynomial time computable iff all its components are. 
\end{definition}

The motivation behind those previous definitions is similar to the one in \cite{pouly} and \cite{ICALP2016vantardise}: we want this measure of complexity to extend the usual complexity for functions over the integers, where the complexity of integers is measured with respect to their lengths, and over the reals, where complexity is measured with respect to their approximation.

In particular, in the specific case of a function $f: \N^{d} \to \R^{d'}$, it means there is some polynomial $t': \N^{d+1} \to \N$ such that the time complexity of producing some dyadic approximating $f(\tu m)$ at precision $2^{-n}$ is bounded by $t'\left(\ell(m_{1}),\cdots,\ell(m_{d}),n\right)$, writing $\tu m = (m_{1},\cdots, m_{d})$.

A key observation is the stability of polynomial time under composition:

\begin{lemma}[\cite{Ko91}]\label{lemCompoOK}
The class of polynomial time computable functions is stable under composition.
\end{lemma}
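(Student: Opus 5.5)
The plan is a direct simulation argument in the framework of Definition \ref{def:bonendroit}. Let $g$ be computed by an oracle machine $M_g$ within polynomial time $t_g$, and $f$ by $M_f$ within polynomial time $t_f$. For readability I treat one real argument; several real or integer arguments are handled coordinatewise. If $g$ is plugged into an integer slot of $f$, it must be integer-valued, so its exact value can be recovered by rounding a $2^{-2}$-approximation; that case is routine.

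We build a machine $M_{f\circ g}$ with oracle $\phi \in CF_x$ and input precision $n$. It runs $M_f$ on input $n$, and answers every oracle query of $M_f$ at precision $m$ by running $M_g$ on input $m+1$ with oracle $\phi$. The output $d$ of $M_g$ is then rounded to the nearest element of $\dyadic_m$. This gives a dyadic of precision exactly $m$ with
\[ |d' - g(x)| \le 2^{-(m+1)} + 2^{-(m+1)} = 2^{-m}. \]
Hence the answers given to $M_f$ form (a prefix of) a genuine element of $CF_{g(x)}$. By correctness of $M_f$ for every name of its input, the final output is within $2^{-n}$ of $f(g(x))$. This handles correctness.

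For complexity, the first step is a magnitude bound. For $x \in [-2^X, 2^X]$, $M_g$ on input $0$ halts within $t_g(X,0)$ steps, so it writes a dyadic with at most $t_g(X,0)$ integer bits. Therefore $|g(x)| \le 2^{t_g(X,0)} + 1 \le 2^{Y}$ with $Y = t_g(X,0)+1$, a polynomial in $X$.

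The second step bounds the queries. Since $g(x) \in [-2^Y, 2^Y]$, $M_f$ halts within $t_f(Y,n)$ steps, so every query precision satisfies $m \le t_f(Y,n)$, and there are at most $t_f(Y,n)$ queries. Each query costs at most $t_g(X, t_f(Y,n)+1)$ steps plus a polynomial overhead for rounding and copying the answer. The total time is therefore
\[ O\bigl(t_f(Y,n)\cdot t_g(X,t_f(Y,n)+1)\bigr), \]
which is a polynomial in $(X,n)$ because polynomials are closed under composition. Integer arguments $\ell(n_i)$ just ride along as extra polynomial parameters.

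I expect the main obstacle to be the non-uniformity of Definition \ref{def:bonendroit}. The time bound of $f$ is only given on boxes $[-2^Y,2^Y]$, so we really need the a priori polynomial magnitude bound on $g$ from the first step, and it must hold for \emph{every} name $\phi$ of $x$, not just one. The second delicate point is that simulated oracle answers must have precision exactly $m$ (Definition \ref{def:cinq}); this is why we query $M_g$ at $m+1$ and round. If the time measure counted the answer length differently, I would instead pad the answers.
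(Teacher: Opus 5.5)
The paper gives no proof of this lemma and only cites \cite{Ko91}; your oracle-simulation argument is the standard proof of it and is correct. In that argument you answer each query of $M_f$ at precision $m$ by running $M_g$ at precision $m+1$ and rounding, after first establishing a polynomial a priori bound on $|g(x)|$ over $[-2^X,2^X]$, which is exactly what the non-uniform setting of Definition~\ref{def:bonendroit} requires. Three small adjustments: the complexity definitions quantify over precisions $n>0$, so use precision $1$ rather than $0$ for the magnitude bound; the claim $m\le t_f(Y,n)$ relies on the usual convention that query precisions are written in unary; and take $t_f,t_g$ monotone so that $t_g(X,m+1)\le t_g(X,t_f(Y,n)+1)$.
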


There exist various characterisations of functions computable in polynomial time over the reals. For example, the following is proved in \cite{Ko91} for functions from $[a,b] \to \R$. It could be extended to more general functions.

\begin{theorem}[{Alternative characterisation \cite{Ko91}}]
A function $f:[a,b] \to \R$ is computable in polynomial time  iff there exist polynomial functions $m$ and $q$ and a function $\psi:(\dyadic \cap[a, b]) \times \N \rightarrow \dyadic$ such that
\begin{enumerate}
	\item $m$ is a polynomial modulus function for $f$ on $[a, b]$: $|x-y| \le 2^{-m(n)}$ implies $|f(x)-f(y)| \le 2^{-n}$ for every $x,y \in [a,b]$ and every integer $n$;
	\item for any $d \in \dyadic \cap[a, b]$ and all $n \in \N,|\psi(d, n)-f(d)| \leq 2^{-n}$; 
	\item $\psi(d, n)$ is computable in time $q(\ell(d)+n)$.
\end{enumerate}
\end{theorem}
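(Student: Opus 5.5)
The plan is to prove the two directions separately, using only the definition of oracle-machine computation (Definition \ref{MBCompFun}) and the fact that, on a bounded interval $[a,b]$, a dyadic of precision $k$ has length $O(k)$. The only delicate point is in the forward direction: extracting a modulus of continuity from a time bound.

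\emph{($\Leftarrow$)} Suppose we have $m$, $q$ and $\psi$. Given an oracle $\phi \in CF_x$ with $x\in[a,b]$ and an output precision $n$, I would proceed as follows.
\begin{enumerate}
\item Query $\phi(m(n+1))$ to obtain a dyadic $d'$ with $|d'-x|\le 2^{-m(n+1)}$.
\item Clamp $d'$ into $[a,b]$ to get $d$. Since $x\in[a,b]$, projecting onto the interval does not increase the distance to $x$, so $|d-x|\le 2^{-m(n+1)}$ still holds. (If $a,b$ are not dyadic, clamp to a dyadic grid point inside $[a,b]$ of precision $m(n+1)+1$ and shift the indices by one.)
\item Output $\psi(d,n+1)$.
\end{enumerate}
By the modulus property, $|f(d)-f(x)|\le 2^{-(n+1)}$. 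Together with $|\psi(d,n+1)-f(d)|\le 2^{-(n+1)}$, this gives an error of at most $2^{-n}$. For the running time, $\ell(d)$ is $O(m(n+1)+\ell(a)+\ell(b))$, so the total time is polynomial in $n$.

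\emph{($\Rightarrow$)} Let $M$ compute $f$ in time $p(n)$ uniformly on $[a,b]$. The construction has two parts.

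\emph{Definition of $\psi$.} For a dyadic $d$, feed $M$ the oracle $\phi_d(k)=$ ``$d$ rounded to precision $k$''. This oracle lies in $CF_d$ and each of its answers is computable in time polynomial in $\ell(d)+k$. Since $M$ only asks about precisions $k\le p(n)$, simulating $M^{\phi_d}(n)$ takes time $q(\ell(d)+n)$ for a polynomial $q$. Set $\psi(d,n)$ to be the output of this simulation; then $|\psi(d,n)-f(d)|\le 2^{-n}$ by correctness of $M$.

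\emph{The modulus (main obstacle).} I would set $m(n)=p(n+2)+2$ and take $x,y\in[a,b]$ with $|x-y|\le 2^{-m(n)}$.
\begin{enumerate}
\item Let $N=p(n+2)$ and fix a dyadic $e$ of precision $N+1$ with $|e-x|,|e-y|\le 2^{-(N+1)}$. Such an $e$ exists because $x$ and $y$ are within $2^{-(N+2)}$ of each other.
\item Build one oracle that answers each query at precision $k\le N$ with $e$ rounded to precision $k$. I must check that this answer is within $2^{-k}$ of both $x$ and $y$. The rounding error $2^{-(k+1)}$ plus the slack $2^{-(N+1)}$ can exceed $2^{-k}$ at $k=N$, so I will shrink the grid by a constant, taking $e$ at precision $N+2$ and $m(n)=p(n+2)+3$. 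This constant-shift bookkeeping, together with the requirement that the answers have exactly the prescribed precision (Definition \ref{def:cinq}), is where care is needed.
\item Extend this common prefix in two ways: once to a full name of $x$, and once to a full name of $y$.
\end{enumerate}
Because $M$ halts within $N$ steps, it never sees beyond this common prefix. Hence on input $n+2$ it outputs the same dyadic $r$ under both oracles. Correctness of $M$ gives $|r-f(x)|\le 2^{-(n+2)}$ and $|r-f(y)|\le 2^{-(n+2)}$, so $|f(x)-f(y)|\le 2^{-(n+1)}\le 2^{-n}$.
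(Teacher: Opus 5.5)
The paper gives no proof of this statement; it quotes it from \cite{Ko91} as background. Your argument is correct and is the standard one: you get $\psi$ by simulating $M$ on a canonical dyadic name of $d$. You get the modulus from the fact that a machine halting within $p(n)$ steps only queries precisions $k\le p(n)$, so it cannot separate two points whose names share a common finite prefix. The converse comes from querying at precision $m(n+1)$ and calling $\psi$.

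There are two small corrections and one caveat.
\begin{itemize}
\item Your worry at $k=N$ is unfounded. With nearest rounding, $|\phi(N)-x|\le 2^{-(N+1)}+2^{-(N+1)}=2^{-N}$, which satisfies the non-strict bound. So your original choice $m(n)=p(n+2)+2$ already works, though the extra shift is harmless.
\item In the $(\Leftarrow)$ direction you enforce exact precision for the oracle answers but not for the output. By Definition~\ref{def:cinq}, $M^{\phi}(n)$ must have precision exactly $n$, whereas $\psi(d,n+1)$ is an arbitrary dyadic. The fix is to query at precision $m(n+2)$, compute $\psi(d,n+2)$, and round it to precision $n$. The total error is then at most $2^{-(n+1)}+2^{-(n+2)}+2^{-(n+2)}=2^{-n}$.
\item Your remark on non-dyadic endpoints needs approximations of $a,b$ at precision $m(n+1)+1$ to be computable in polynomial time. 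That holds for rational endpoints, which is the intended setting.
\end{itemize}
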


\section{Length ODEs}\label{lenODE}

\emph{Length ODEs} are a scheme of discrete ODEs, where the derivative is not with steps of 1, as for classical discrete ODEs, but of the successive powers of 2.
 More comprehensive definitions can also be found in \cite{MFCS2019}.

\begin{definition}[Length-ODE \THEPAPIERS]\label{def:linear lengt ODE}
	A function $\tu f : \N \times \R^d \rightarrow \R^{d^{'}}$, $d, d^{'}\in\N$, is length-ODE definable if it corresponds to the
	solution of 
	\begin{equation} \label{SPLode}
		\tu f(0,\tu y) 
		= \tu g(\tu y)   \quad  \text{and} \quad
		\dderivl{\tu f(x,\tu y)} 
		=   \tu u\left(\tu f(x,\tu y), \tu h(x,\tu y),
		x,\tu y\right)
	\end{equation}
\end{definition}

Equation \ref{SPLode} can be formally rewritten as:
	$$ \tu f(x+1,\tu y)= \tu f(x,\tu y) + (\ell(x+1)-\ell(x)) \cdot
\tu u(\tu f(x,\tu y), \tu h(x,\tu y),
x,\tu y)$$
So $\frac{\delta f}{\delta \ell(n)}\left(n,y\right) \neq 0$ when $n=2^m-1$ for some $m\in\N$. The latter is a change of variable on the former: we differentiate with respect to the binary length of an integer $x$ (its number of bits), instead of $x$ itself.
It crucially links the size of a given representation for real numbers and the complexity class its computation is in. 

For polynomial time we consider \emph{Linear} length-ODE, for the following notion of linearity:
\begin{definition}[Essential linearity]
	$\tu g(\tu f(x, ~\tu y), x,
	\tu y)$ is \textit{essentially linear} in $\tu f(x, ~\tu y)$ if
	it is of the form: $
	\tu g(\tu f(x, ~\tu y),~ x,
	~\tu y) =
	\tu A [\tu f(x,~\tu y),~ 
	x,~\tu y] ~ \cdot ~\tu f(x,~\tu y) 
	+ \tu B [\tu f(x,~\tu y), ~
	x,~\tu y ] $
	where $\tu A$ and $\tu B$ depend on some sigmoid on $\tu f(x, \tu y)$.
\end{definition}

Thus, a length ODE is linear when $\tu u$ (in Equation \ref{SPLode}) is \textit{essentially linear} in $\tu f$, 
$\tu g :\R^d\rightarrow\R^{d^{'}}$ and $\tu h: \N \times \R^d \rightarrow \R^{d^{'}}$.

A fundamental fact is that the derivative with respect to the length provides a way to make a change of variables:

\begin{lemma}[{\THEPAPIERS}] \label{myfundobge}
	Assume that \eqref{SPLode} holds. 
	Then $\tu f(x,\tu y)$ is given by 
	$\tu f(x,\tu y)= \tu F(\length{x},\tu y)$
	where $\tu F$ is the solution of the initial value problem 
	\begin{equation} \label{eq:puissance}
		\tu F(1,\tu y)= \tu g(\tu y), \quad and \quad \dderiv{\tu F(t,\tu y)}{t} =\tu u(\tu F(t, \tu y),\tu h(2^{t}-1,\tu y), 2^{t}-1,\tu y).
	\end{equation}
\end{lemma}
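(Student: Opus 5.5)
The statement to prove is Lemma \ref{myfundobge}: the solution of the length ODE \eqref{SPLode} is obtained by evaluating at $\ell(x)$ the solution $\tu F$ of the ordinary discrete ODE \eqref{eq:puissance}. The plan is to use the rewritten form of \eqref{SPLode},
$$\tu f(x+1,\tu y)= \tu f(x,\tu y) + (\ell(x+1)-\ell(x)) \cdot \tu u(\tu f(x,\tu y), \tu h(x,\tu y), x,\tu y),$$
and proceed by strong induction on $x$, proving $\tu f(x,\tu y)=\tu F(\ell(x),\tu y)$ for all $x$. Throughout, $\tu y$ is a frozen parameter.

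The first step is to record the behaviour of $\ell$. The increment $\ell(x+1)-\ell(x)$ lies in $\{0,1\}$, and it equals $1$ exactly when $x+1$ is a power of two, i.e. when $x=2^t-1$ for some $t$. In that case $\ell(x)=t$. I will take the convention $\ell(0)=1$, which matches the initial condition $\tu F(1,\tu y)=\tu g(\tu y)$ and makes $0=2^1-1$ the first jump point. I will check this convention against the indexing of \eqref{eq:puissance}, since an off-by-one is the most likely error here.

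The induction then splits into two cases.
\begin{itemize}
\item \emph{Non-jump steps.} If $x+1$ is not a power of two, then $\ell(x+1)=\ell(x)$. The recurrence gives $\tu f(x+1,\tu y)=\tu f(x,\tu y)=\tu F(\ell(x),\tu y)=\tu F(\ell(x+1),\tu y)$.
\item \emph{Jump steps.} If $x=2^t-1$, then
$$\tu f(x+1,\tu y)=\tu F(t,\tu y)+\tu u(\tu F(t,\tu y),\tu h(2^t-1,\tu y),2^t-1,\tu y)=\tu F(t+1,\tu y)$$
by \eqref{eq:puissance}. This equals $\tu F(\ell(x+1),\tu y)$.
\end{itemize}
The base case is $\tu f(0,\tu y)=\tu g(\tu y)=\tu F(1,\tu y)=\tu F(\ell(0),\tu y)$.

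The only point I expect to need care is the boundary bookkeeping. The value of $\ell(0)$ and the indexing of the jump times must align, so that the arguments $\tu h(2^t-1,\tu y)$ and $2^t-1$ in \eqref{eq:puissance} are exactly the $x$ at which $\tu f$ jumps. Once that correspondence is fixed, the rest is a direct two-case induction.
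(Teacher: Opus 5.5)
Your strategy (induction on $x$ via the rewritten recurrence, split into steps where $\ell$ is constant and steps where it increases by one) is the standard argument. The paper gives no proof of its own and defers to Bournez--Durand. Your convention $\ell(0)=1$ is also the right one, and in fact it is forced: with $\ell(0)=0$ you would get $\tu f(1,\tu y)=\tu g(\tu y)+\tu u(\tu g(\tu y),\tu h(0,\tu y),0,\tu y)$, whereas $\tu F(\ell(1),\tu y)=\tu F(1,\tu y)=\tu g(\tu y)$.

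However, the boundary bookkeeping, which you rightly single out as the only delicate point, is wrong as written. First, $2^1-1=1$, not $0$. More importantly, with $\ell(0)=1$ you have $\ell(1)=\ell(0)$, so $x=0$ is \emph{not} a jump point. Your claim that the increment equals $1$ exactly when $x+1$ is a power of two, with then $\ell(x)=t$, therefore fails at $x=0$: there $x+1=2^0$, yet the increment is $0$ and $\ell(0)=1\neq 0$. Your case split sends $x=0$ to the jump case, where the computation is invalid:
\begin{itemize}
\item with $t=0$, it needs the increment to be $1$ and $\tu f(0,\tu y)=\tu F(0,\tu y)$, but the increment is $0$ and $\tu F$ is not even defined at $0$;
\item reading $0$ as $2^1-1$, it yields $\tu f(1,\tu y)=\tu F(2,\tu y)\neq\tu F(\ell(1),\tu y)$, and uses $\tu h(1,\tu y)$ where the recurrence at $x=0$ has $\tu h(0,\tu y)$.
\end{itemize}
So the very first induction step does not go through as stated.

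The repair is a one-line change. The jump points are exactly $x=2^t-1$ with $t\ge 1$, i.e.\ $x\in\{1,3,7,\dots\}$, where indeed $\ell(x)=t$ and $\ell(x+1)=t+1$. Every other $x$, including $x=0$, is a non-jump step. Then $\tu f(1,\tu y)=\tu f(0,\tu y)=\tu F(1,\tu y)=\tu F(\ell(1),\tu y)$. Your jump computation is then valid for every $t\ge 1$; for instance, $\tu f(2,\tu y)=\tu F(1,\tu y)+\tu u(\tu F(1,\tu y),\tu h(1,\tu y),1,\tu y)=\tu F(2,\tu y)$. With this correction your proof is complete, and ordinary induction suffices; strong induction is not needed.
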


This means $\tu f(x,\tu y)$ depends only on the length of its first argument:  $\tu f(x,\tu y)= \tu f(2^{\ell(x)},\tu y)$.
Then \eqref{eq:puissance} can be seen as defining a function (with this latter property) by a recurrence of type
\begin{equation}\label{ourform}
	\tu f(2^{0},\tu y)= \tu g(\tu y), \quad and \quad {\tu f(2^{t+1},\tu y)} =\overline{\tu u}(\tu f(2^{t}, \tu y),\tu h(2^{t}-1,\tu y), 2^{t},\tu y).
\end{equation}
for $\overline{\tu u}$ \textit{essentially linear} in $\tu f(2^{t}, \tu y)$. As recurrence \eqref{eq:puissance} is equivalent to  \eqref{SPLode}:

\begin{corollary}[Linear length ODE presented with powers of $2$]
	A function $\tu f$ is linear length ODE definable iff
	the value of $\tu f(x,\tu y)$ depends only on the length of its first argument
	and satisfies \eqref{ourform}, 
	for some $\tu g$ and $\tu h$, and 
	$\overline{\tu u}$, \textit{essentially linear} in $\tu f(2^{t}, \tu y)$. 
\end{corollary}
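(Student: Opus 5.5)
The plan is to derive both directions directly from Lemma \ref{myfundobge} and the explicit rewriting of \eqref{SPLode}:
$$\tu f(x+1,\tu y)= \tu f(x,\tu y) + (\ell(x+1)-\ell(x))\, \tu u(\tu f(x,\tu y), \tu h(x,\tu y), x,\tu y),$$
together with the observation that $\ell(x+1)-\ell(x)\in\{0,1\}$, and that it equals $1$ exactly when $x+1$ is a power of $2$.

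\emph{($\Rightarrow$)} Assume $\tu f$ is linear length ODE definable. By Lemma \ref{myfundobge}, $\tu f(x,\tu y)=\tu F(\ell(x),\tu y)$. Hence $\tu f$ depends only on the length of its first argument, so $\tu f(x,\tu y)=\tu f(2^{\ell(x)-1},\tu y)$ with the paper's indexing. Evaluating the rewritten equation at $x=2^{t+1}-1$ gives a jump of exactly one step. All intermediate steps $x\mapsto x+1$ with $x+1$ not a power of $2$ contribute $0$. Therefore
$$\tu f(2^{t+1},\tu y)=\tu f(2^{t},\tu y)+\tu u(\tu f(2^t,\tu y),\tu h(\cdot,\tu y),\cdot,\tu y).$$
I would set $\overline{\tu u}(\tu v,\tu w,s,\tu y)=\tu v+\tu u(\tu v,\tu w,s',\tu y)$, where $s'$ is the appropriately shifted time argument (a function of $s$ obtainable by composition). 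Essential linearity is preserved: if $\tu u=\tu A\cdot\tu v+\tu B$, then $\overline{\tu u}=(\tu A+\tu I)\cdot \tu v+\tu B$, and the identity matrix trivially qualifies as depending on sigmoids of $\tu f$.

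\emph{($\Leftarrow$)} Conversely, assume $\tu f$ depends only on $\ell(x)$ and satisfies \eqref{ourform}. Define $\tu u(\tu v,\tu w,x,\tu y)=\overline{\tu u}(\tu v,\tu w,x+1,\tu y)-\tu v$, which is again essentially linear with matrix $\tu A-\tu I$. Take the initial condition $\tu g$ unchanged. Then I would check the rewritten equation for every $x$ by cases:
\begin{itemize}
\item If $\ell(x+1)=\ell(x)$, both sides vanish, by length-dependence of $\tu f$.
\item If $x+1=2^{t+1}$, the identity is exactly \eqref{ourform}.
\end{itemize}
A direct induction on $x$ then shows that $\tu f$ coincides with the unique solution of \eqref{SPLode}.

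\emph{Main obstacle.} I expect the main difficulty to be the index bookkeeping, not any analysis. There are three points to reconcile: $\tu F(1,\tu y)=\tu g$ versus $\tu f(2^0,\tu y)=\tu g$; the fact that $\ell(2^t)=t+1$; and the mismatch between the argument $\tu h(2^t-1,\tu y)$ in \eqref{ourform} and $\tu h(2^{t+1}-1,\tu y)$, which is what appears naturally at the jump point. Since the statement only asks for the existence of \emph{some} $\tu g,\tu h$, I would resolve the $\tu h$ mismatch by noting that only the values of $\tu h$ at points $2^k-1$ ever matter. I would then replace $\tu h$ by a reindexed function $\tilde{\tu h}$ that takes the needed value at those points, obtained by composing with the map $2^{k+1}-1\mapsto 2^k-1$, i.e.\ $x\mapsto\lfloor x/2\rfloor$ on those points. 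This shift is harmless for definability, and I would verify the off-by-one for the base case $t=0$ separately.
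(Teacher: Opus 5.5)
This is essentially the paper's argument (Lemma \ref{myfundobge} for length-dependence, and $\overline{\tu u}=\tu v+\tu u$, i.e.\ $\tu A\mapsto\tu A\pm\tu I$, to transfer essential linearity), and your index bookkeeping is more careful than the paper's, which writes $\tu f(x,\tu y)=\tu f(2^{\ell(x)},\tu y)$ and silently absorbs the shift from $\tu h(2^{t+1}-1,\tu y)$ to $\tu h(2^{t}-1,\tu y)$. When you write it out, orient the shifts correctly---in $(\Rightarrow)$ the new function is $x\mapsto\tu h(2x+1,\tu y)$ with time argument $2s-1$ inside $\tu u$, and in $(\Leftarrow)$ it is $x\mapsto\tu h(\lfloor x/2\rfloor,\tu y)$ with time argument $(x+1)/2$ (not $x+1$) inside $\overline{\tu u}$---and note that $\ell(0)=\ell(1)=1$ (forced by $\tu F(1,\tu y)=\tu g(\tu y)$), so there is no jump at $x=0$ even though $1=2^0$.
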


	\section{Proof of Lemma \ref{lemNext}, from [4]}\label{AppNext}

		Before giving the proof of the lemma, we admit the following:
	
	\begin{lemma}[\cite{BlancBournezMFCS2023vantardise}] \label{lem:switch:pairs:tanh}
		Let $N\in\N$, $\alpha_{1},  \alpha_{2},\dots, \alpha_{n}\in\N^n$,  pairwise distinct, and, for $1 \le i \le n$ and $1 \le j \leq N$, let $V_{i,j}$ be some constants. Then there exists some  function $$\sendtanh(2^{m},{(\alpha_{i},j) \sendsymbol V_{i,j}})_{i \in \{1,\dots,n\}, j \in \{1,\dots,N\}}$$ in $\manonclasslighttanh$, such as, for all $i \in \{1,\dots,n\}$, $j \in \{1,\dots,N\}$, it maps any $x  \in  [\alpha_{i}-1/4, \alpha_{i}+1/4]$ and any $y \in [j-1/4,j+1/4]$ to a real number at a distance at most $2^{-m}$ of $V_{i,j}$.
	\end{lemma}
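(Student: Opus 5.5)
The statement to prove is Lemma \ref{lem:switch:pairs:tanh}: a function $\sendtanh$ in $\manonclasslighttanh$ sending each pair $(\alpha_i,j)$, up to $\pm 1/4$ in each coordinate, to $V_{i,j}$ within $2^{-m}$. I would build it as a sum of products of smooth indicator functions. The identity to aim for is
\[
\sendtanh(2^m,x,y)=\sum_{i=1}^{n}\sum_{j=1}^{N} V_{i,j}\,\mathbb{1}_i(x)\,\mathbb{1}_j(y),
\]
where $\mathbb{1}_i(x)$ is a smooth function within a small error $\eta$ of $1$ on $[\alpha_i-1/4,\alpha_i+1/4]$ and within $\eta$ of $0$ on every other $[\alpha_{i'}-1/4,\alpha_{i'}+1/4]$, and similarly for $\mathbb{1}_j(y)$.

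\emph{Step 1: a smooth sigmoid.} Let $\sigtanh(2^k,z)=\frac{1+\tanh(2^k z)}{2}$. Then $|\sigtanh(2^k,z)-[z>0]|\le e^{-2^{k+1}|z|}$. For $|z|\ge 1/4$ this is at most $2^{-2^{k-1}}$, which is below $2^{-k}$ once $k\ge 3$ or so. The factor $2^k$ is supplied as an argument, since multiplication by the constant $2^k$ is not directly available in the light algebra. The map $(2^k,z)\mapsto 2^k z$ is a product, and products are in $\manonclasslighttanh$ via linear length ODEs (Theorem \ref{thmCharacPTIME}). Alternatively, I would use the standard construction of \cite{BlancBournezMFCS2023vantardise}, where the parameter enters through the length ODE.

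\emph{Step 2: bump functions.} If the $\alpha_i$ are integers (treated coordinatewise), let $\mathbb{1}_a(z)=\sigtanh(2^k,z-a+1/2)-\sigtanh(2^k,z-a-1/2)$. On $[a-1/4,a+1/4]$ both arguments have absolute value at least $1/4$, with the correct signs, so $\mathbb{1}_a$ is within $2\cdot 2^{-k}$ of $1$. On $[b-1/4,b+1/4]$ for an integer $b\ne a$, both sigmoids lie on the same side, so $\mathbb{1}_a$ is within $2\cdot2^{-k}$ of $0$. For vector-valued $\alpha_i\in\N^n$, I would take the product of the coordinate bumps. Each product stays in $[-1,2]$ up to small error, so errors add up to a constant factor.

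\emph{Step 3: error bookkeeping.} For $x$ near $\alpha_{i_0}$ and $y$ near $j_0$, the term $(i_0,j_0)$ contributes $V_{i_0,j_0}(1\pm O(n\,2^{-k}))$, and every other term contributes at most $|V_{i,j}|\,O(n\,2^{-k})$. The total error is therefore at most $nN\max|V|\cdot O(n)\cdot 2^{-k}$. Choosing $k=m+c$, with $c$ a constant depending only on $n$, $N$ and $V$, brings it below $2^{-m}$. The shift from $2^m$ to $2^{m+c}$ is a multiplication by a constant, obtained by adding the argument to itself $2^c$ times, so it stays in the algebra. Constants $V_{i,j}$ are assumed rational with denominators built from $\frac{\cdot}{2}$ and $\frac{\cdot}{3}$, or else approximated. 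Finally, I would note that everything is obtained by composition from $\tanh$, $+$, $-$, constants and the product.

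\emph{Main obstacle.} The delicate point is justifying that the products $V\cdot\mathbb{1}\cdot\mathbb{1}$, and the scaling $2^k z$, belong to $\manonclasslighttanh$ rather than only to a larger algebra. I would first try to reuse the multiplication obtained in \cite{BlancBournezMFCS2023vantardise} from linear length ODEs. If that is not directly available, I would replace the products by a smooth "and": the function $\sigtanh(2^k,\mathbb{1}_i(x)+\mathbb{1}_j(y)-3/2)$ is close to $1$ exactly when both bumps are close to $1$. Multiplying this by the constant $V_{i,j}$ then only requires scalar multiplication by a fixed rational, which is achievable with $+$, $\frac{\cdot}{2}$ and $\frac{\cdot}{3}$.
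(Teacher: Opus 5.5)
The paper gives no proof of this lemma. It is admitted from \cite{BlancBournezMFCS2023vantardise} at the start of Appendix~\ref{AppNext}, so there is no in-paper argument to compare against. Your construction is sound: shifted $\tanh$-sigmoids, bumps as differences of two of them, one weighted indicator per pair $(\alpha_i,j)$, and precision passed as the argument $2^m$ and boosted to $2^{m+c}$. This holds only once your fallback becomes the main line, and three points need fixing.

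First, drop the appeal to Theorem~\ref{thmCharacPTIME} for products. That theorem concerns the closure under $\MANONlim$, and it says nothing about membership in $\manonclasslighttanh$ of a product of two real quantities. The paper's own appendix introduces $\TTtanh$ precisely to avoid multiplications. The only product you genuinely need, $(2^k,z)\mapsto 2^k z$, comes from an explicit linear length ODE: $F(0,z)=z$ and $\frac{\delta F}{\delta \ell(t)}(t,z)=F(t,z)$. It gives $F(t,z)=2^{\ell(t)-\ell(0)}z$, and its right-hand side is linear in $F$ with constant coefficient.

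Second, combine the bumps by the smooth \emph{and} rather than by products. Take the $D$ bumps $\chi_1,\dots,\chi_D$ (one per coordinate of $x$, one for $y$) and use $\sigtanh\bigl(2^k,\sum_{c=1}^{D}\chi_c-(D-\tfrac12)\bigr)$. Each $\chi_c$ is within $2\cdot 2^{-k}$ of $0$ or $1$. So once $2^k\ge 8D$, the argument stays at distance at least $\tfrac14$ from $0$, and the output is within $2^{-k}$ of the correct $0/1$ value.

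Third, multiplying such a near-$0/1$ value by $V_{i,j}$ is a plain composition only for $V_{i,j}\in\Z[\tfrac16]$: repeated additions, then $\frac{\cdot}{2}$ and $\frac{\cdot}{3}$. ``Or else approximated'' is not a valid step. An $m$-dependent approximation of $V_{i,j}$ times a non-constant quantity is again a real-by-real product. Restricting to such constants loses nothing: every use in the paper has $V_{i,j}\in\{0,1\}$ (the move indicators $\rightarrow(q,r)$, $\leftarrow(q,r)$) or $V_{i,j}\in Q\subset\N$ (the next states).

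Finally, your remark that multiplication by the constant $2^k$ is unavailable is misdirected. For fixed $k$ it is $k$ nested doublings $z\mapsto z+z$; what forces $2^m$ to be an argument is only uniformity in $m$. This is worth recording. It shows that for fixed $m$ your construction is a pure composition of $+,-,\tanh,\frac{\cdot}{2}$ and constants, with a description of size polynomial in $m$. Hence it lies in $\mathbb{LDL}^\circ_\textit{light}$, which is what Section~\ref{secClassicRed} uses and asserts only by citation.
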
 
	
	\begin{proof}[Proof of Lemma \ref{lemNext} from \cite{BlancBournezMFCS2023vantardise}]
		We can write  $l = l_0 l^\bullet $ and $r = r_0r^\bullet $, respectively the left part of the tape with respect to the head and the right part of the tape, where $l_{0}$ and $r_{0}$ are the first letters of $l$ and $r$, and $l^\bullet$ and $r^\bullet$  corresponding to the (possibly infinite) word  $l_{-1} l_{-2} \dots$ and $r_{1} r_{2} \dots$ respectively. Notice that $l$ is written in reverse order on the tape.

		\begin{center}
			\begin{tabular}{c c|c|c|c c}
				\hline 
				... & $l^\bullet $ & $l_0 $ & $r_0$ & $ r^\bullet$ & ... \\ 
				\hline 
				\multicolumn{1}{c}{} & 
				\multicolumn{2}{@{}c@{}}{$\underbrace{\hspace*{\dimexpr6\tabcolsep+3\arrayrulewidth}\hphantom{012}}_{l}$} & 
				\multicolumn{2}{@{}c@{}}{$\underbrace{\hspace*{\dimexpr6\tabcolsep+3\arrayrulewidth}\hphantom{3}}_{r}$}
			\end{tabular} 	
		\end{center}

		Consider the transition function working over the representation of the configurations as reals, considering $r_{0} =  \lfloor \base \bar{r}\rfloor$ :
		\begin{align*}
			\mathit{\bar{\Next}}(q,\bar l, \bar r) &= \mathit{\bar{\Next}}(q,\bar{l^\bullet l_0},\bar{r_0r^\bullet}) = (q', \bar{l'}, \bar{r'})\\
			&= \left\{ 
			\begin{array}{ll}
				(q', \bar{l^\bullet l_0 x}, \bar{r^\bullet}) & \mbox{whenever $\delta(q,r_{0}) = (q',x, \rightarrow)$} \\
				(q', \bar{l^\bullet}, \bar{l_0 x r^\bullet}) & \mbox{whenever $\delta(q,r_{0}) = (q',x, \leftarrow)$} \\
			\end{array}
			\right.
		\end{align*}

		\begin{equation}
			\begin{array}{l} \label{textaremplacer}

				\mbox{
					$\bullet$  in the first case ``$\rightarrow$'' : $\bar{l'} = \base^{-1} \bar l + \base^{-1} x $ and $\bar{r'}  = \bar{r^\bullet} = \{\base \bar r\} $} \\
				\mbox{
					$\bullet$ in the second case ``$\leftarrow$'' : $\bar{l'} =\bar{ l^\bullet} = \{\base \bar l \} $ and $\bar{r'} = \base^{-2} \{\base \bar r\}  + \base^{-2} x + \lfloor \base \bar{l}\rfloor / 4$ }\end{array}
		\end{equation}
		
		We introduce the following functions, corresponding respectively to the head moves to the right and to the head moves to the left:
		\begin{align*}
			&\begin{aligned}[t]
				\rightarrow : Q \times \{0,1,3\} &\mapsto \{0,1\} \\
				(q,a) &\mapsto 1 \quad \text{if $\delta(q,a) = (\cdot,\cdot, \rightarrow)$}\\
				(q,a) &\mapsto 0 \quad \text{otherwise}
			\end{aligned}&&
			\begin{aligned}[t]
				\leftarrow:Q \times \{0,1,3\} &\mapsto \{0,1\} \\
				(q,a) &\mapsto 1 \quad \text{if $\delta(q,a) = (\cdot,\cdot, \leftarrow)$}\\
				(q,a) &\mapsto 0 \quad \text{otherwise}
			\end{aligned}\\
		\end{align*}

		\noindent We can rewrite $ \mathit{\bar{\Next}}(q,\bar l, \bar r) = (q',\bar l',\bar r')$ as
		$\bar{l'} = \displaystyle \sum_{q,r_{0}} \left[\rightarrow(q, r_{0}) \left( \frac{\bar{l}}{4} + \frac{x}{4} \right) + \leftarrow(q, r_{0}) \left\{ 4 \bar{l}\right\}\right] $
		and
		$\bar{r'} = \displaystyle \sum_{q,r_{0}} \left[\rightarrow(q, r_{0}) \left\{ 4 \bar{r} \right\} + \leftarrow(q, r_{0}) \left( \frac{\left\{ 4r  \right\} }{4^2} + \frac{x}{4^2} + \frac{\lfloor 4 \bar{l} \rfloor }{4}\right)\right],$
		and, using notation of Lemma \ref{lem:switch:pairs:tanh},
		$q'=\send((q,r) \sendsymbol nextq^{q}_{r})_{q \in Q, r\in \{0,1,3\}}(q,\lfloor 4 \bar{r} \rfloor ).$

	Then, following the intuition of having continuous approximations of discontinuous functions, we can replace $\lfloor \base \bar r \rfloor$ by $\sigma(\base \bar r)$
if we take $\sigma$ as some continuous function that would be affine and values respectively $0$, $1$ and $3$ on $\{0\} \cup [\symboleun,2] \cup [\symboledeux,4]$
(that is to say matches $\lfloor \base \bar r \rfloor$ on this domain). A possible candidate is $\sigma(x)=\sig(1/4,3/4,x) + 2\cdot\sig(9/4,11/4,x)$, with $\sig(a,b,.)$ the sigmoid valuing 0 before $a$ and 1 after $b$.  
Then considering $\xi(x)=x-\sigma(x)$, then  $\xi(\base \bar r)$ would be the same as $\left\{ \base \bar r \right\}$:  
that is, considering  $r_{0}= \sigma(4\bar{r})$,  replacing in the above expression every $\left\{ \base \cdot \right\}$ by $\xi(\cdot)$, and every $\lfloor \cdot \rfloor$ by $\sigma( \cdot)$, and get something that would still work the same, but using only 
continuous functions.

Now, we would like to use analytic functions, and not only continuous functions. It is well-known that an analytic function that equals some affine function on some interval (e.g. on [1,2]) must be affine, and hence cannot be $3$ on $[3,4]$. 
We introduce $\sigtanh(m,a,b,x) \in \manonclasslighttanh$ such that for all integer $m$, $$|\sigtanh(m,a,b,x) - \sig(a,b,x)|\le 2^{-m}.$$
The function $\sigtanh$ allows us to tolerate errors by replacing $\sig(\cdot,\cdot)$ by $\sigtanh(2^{m+c},\cdot,\cdot)$ in the expressions above for $\sigma$ and $\xi$, taking $c$ such that $(3+1/4^{2})3|Q|  \le 2^{c}$. This would just introduce some error at most  $(3+1/4^{2})3|Q| 2^{-c}2^{-m}\le 2^{-m}$.

We can also replace every $\rightarrow(q,r)$ in above expressions for $\bar l'$ and $\bar r'$ by $$\sendtanh(k, (q,r) \sendsymbol \rightarrow(q,r))(q,\sigma(4 \bar r)),$$ for a suitable error bound $k$, and symmetrically for $\leftarrow(q,r)$.
However, if we do so, we still might have some multiplications in the above expressions. 
To avoid that, we give a continous approximation in $\TTtanh \in \manonclasslighttanh$ of the conditional sigmoid function \textit{if$(.,.,.)$}: 
\begin{itemize}
	\item if we take $|d'-0| \le 1/4$, 
	then $|\TTtanh(d',\ell)-0| \le 2^{-m},$ 
	\item and if we take $|d'-1| \le 1/4$, then $|\TTtanh(d',\ell)-\ell| \le 2^{-m}$.
\end{itemize}
		
		We can then also write the above expressions as
		$$
		\begin{array}{ll}
			\bar{l'} =   \sum_{q,r}& \left[ 
			\TTtanh\left(2^{m+c},\sendtanh(2^{2},(q,r) \sendsymbol \rightarrow(q,r)) (q,\sigma(4 \bar r)), 
			\frac{\bar{l}}{4} + \frac{x}{4} \right) \right. \\
			&+ \left.
			\TTtanh\left(2^{m+c},\sendtanh(2^{2},(q,r) \sendsymbol \leftarrow(q,r)) (q,\sigma(4 \bar r)),
			\xi( 4 \bar{l}) \right) \right]
		\end{array}
		$$
		$$
		\begin{array}{ll}
			\bar{r'} =   \sum_{q,r}& \left[ 
			\TTtanh\left(2^{m+c},\sendtanh(2^{2},(q,r) \sendsymbol \rightarrow(q,r)) (q,\sigma(4 \bar r)),
			\xi( 4 \bar{r} ) \right) \right. \\
			&+ \left.
			\TTtanh\left(2^{m+c},\sendtanh(2^{2},(q,r) \sendsymbol \leftarrow(q,r)) (q,\sigma(4 \bar r)),
			\frac{\xi( 4r )}{4^2} + \frac{x}{4^2} + \frac{\sigma(4 \bar{l})}{4} \right) \right]
		\end{array}
		$$
		and still have the same bound on the error.
	\end{proof}

\section{Proof of Theorem \ref{thEmbeddingTM}, from [3, 4]}\label{appPreuveTH}

	\begin{proof} [Proof from \cite{BlancBournezMCU22vantardise, BlancBournezMFCS2023vantardise}]
		We consider a Turing machine $M= \left(Q, \{0,1, 3\}, q_{\init}, \delta, F\right)$. Up to renaming, we assume $Q=\{0,1,\cdots,|Q|-1\}\subset \N$. Let 
		$$ \cdots l_{-k} l_{-k+1} \cdots l_{-1} l_{0} r_0 r_1 \cdots r_n \cdots$$ 
		denote the content of the tape of the Turing machine $M$. In this representation, the head is in front of symbol $r_{0}$, and $l_i, r_{i} \in  \{0,\symboleun,\symboledeux\}$ for all $i$. Furthermore, if the blank symbol $B=0$ appears at some letter $r_{i}$ then it appears at letter $r_{j}$ for all $j \ge i$; if the blank symbol $B=0$ appears at some letter $l_{-i}$ then it appears at letter $l_{-j}$ for all $j \ge i$.
		
			The idea is that such a configuration $C$ can also be encoded by some element $\bar C=\encodageconfiguration(C)=(q, \bar l,\bar r) \in \N \times \R^{2}$, by considering $\bar r= \encodagemot(r)$ and $\bar l=\encodagemot(l)$. 
		
		Denoting the image of $\encodagemot: \Sigma^{\omega} \to \R$ by $\Image$, this even lives in $\Q \times \Image^{2}$. 
		
			\begin{remark}
					Notice that $\Image$ is a Cantor-like set: it corresponds to the rational numbers that can be written using only $1$ and $3$ in base $4$. We write $\Image_{S}$ for those with at most $S$ digits after the point (i.e. of the form $n/4^{S}$ for some integer $n$).
				\end{remark}
		
		We denote by $\delta$ the transition function: given some control state $q \in Q$, and some symbol $r_{0}$, $\delta(q,r_{0})=(q',x,m)$ indicates the new control state $q' \in Q$, the symbol $x$ to be written, and some movement $m \in \{\leftarrow,\rightarrow\}$, corresponding to shifting left or right. For practical reasons, we assume that when $M$ reaches a final state, it stays in the same configuration. 
		
			A configuration $C$ of such TM can be denoted by $C=(q,l,r)$, where 
	\begin{eqnarray*}
		\bar r &=& r_0  r_1 \cdots r_n  \cdots , \\
		\bar l &= & l_{0}  l_{-1}  \cdots  l_{-k}  \cdots 
	\end{eqnarray*} 
	are words over alphabet $\Sigma=\{0,\symboleun,\symboledeux\}$ and where $q \in Q$ denotes  the internal state of $M$. Notice that $r$, the right part, is written from left to right (in the usual way), while $l$, the left part, is written from right to left (in a mirror way). We work with infinite words, but as expected, this covers the case of finite words. Namely, given a finite word $w=w_{1} w_{2}  \dots w_{n}$ on the tape, in control state $q$, with the head in front of symbol $1 \le i \le n$, corresponds to the configuration $C=(q, w_{i-1} w_{i-2} \dots w_{1} 0^{\omega}, w_{i} w_{i+1} \dots w_{n} 0^{\omega})$, that is to say, with eventually the blank symbol $B=0$.

	The idea is that such a configuration $C$ can also be encoded by some element $\encodageconfiguration(C)=(q, \bar l, \bar r) \in \N \times \R^{2}$, by considering 
	\begin{eqnarray*}
		\bar r &=& r_0 \base^{-1} + r_1 \base^{-2} + \cdots +  r_n \base^{-(n+1)} + \cdots , \\
		\bar l &= & l_{0} \base^{-1} + l_{-1} \base^{-2} + \cdots + l_{-k} \base^{-(k+1)} + \cdots 
	\end{eqnarray*}

	In other words,  we encode the configuration of a bi-infinite tape Turing machine $M$ by real numbers using their radix 4 encoding, but using only digits $\{0,\symboleun, \symboledeux\}$. As digit $0$ is used only at the end of the words, this corresponds to some rational numbers for a finite word, while for an infinite word, this corresponds to some real numbers, possibly irrational. 
	If we write: $\encodagemot: \Sigma^{\omega} \to \R$ for the function that maps word $w=w_{0} w_{1} w_{2} \cdots$ to
	$$\encodagemot(w)= w_0 \base^{-1} + w_1 \base^{-2} + \cdots +  w_n \base^{-(n+1)} + \cdots,$$ we can also write: 
	$\encodageconfiguration(C)=\encodageconfiguration(q,l,r)= (q,\encodagemot(l),\encodagemot(r)).$
	
	Here, the transition function $\delta$ leads naturally to a function $\overline{\delta}: C \mapsto C'$ that maps a configuration $C = (q, l, r)$ to a configuration $C' = (q', l', r')$ after one step in $M$. Via the above encoding, this can be seen as a function $\tu f: \N \times \R^{2} \mapsto \N \times \R^{2}$.
		
		Notice that $\tu f$ takes its values in $Q \times [0,1]^{2}$ and it is a piecewise affine map. 
		Actually, if we denote the image of $\encodagemot: \Sigma^{\omega} \to \R$ by $\Image$, it even takes its values in $Q \times \Image^{2}$. 
		
		It remains to prove that $\tu f$ is a PAM:
		
		The function $ {\tu f}$ is basically of the form
		
		$\tu f(q, l, r) = \tu f(q, l_0l^\bullet, r_0r^\bullet) = (q', l', r')$ defined as a definition by case of type:
		\[ (q', l', r') = 
		\left\{ \begin{array}{ll}
			(q', x l_0 l^\bullet , r^\bullet) & \mbox{whenever $\delta(q,r_{0}) = (q', x, \rightarrow)$} \\
			(q', l^\bullet, l_0 x r^\bullet) & \mbox{whenever $\delta(q, r_{0}) = (q',x, \leftarrow)$} \\
		\end{array} \right.
		\]
		
		This rewrites as a function $\bar{\tu f}$ which is similar, working over the representation of the configurations as reals, considering $r_{0} =  \lfloor \base \bar{r}\rfloor$ 
		\begin{align*}
			\mathit{\bar{\tu f}}(q, \bar l, \bar r) &= \mathit{\bar{\tu f}}(q, \bar{ l_0 l^\bullet}, \bar{r_0r^\bullet}) = (q', \bar{l'}, \bar{r'})\\
			&= \left\{ 
			\begin{array}{ll}
				(q', \bar{x l_0 l^\bullet }, \bar{r^\bullet}) & \mbox{whenever $\delta(q, r_{0}) = (q', x, \rightarrow)$} \\
				(q', \bar{l^\bullet}, \bar{l_0 x r^\bullet}) & \mbox{whenever $\delta(q, r_{0}) = (q', x, \leftarrow)$} \\
			\end{array}
			\right.
		\end{align*}

		\begin{equation}
			\begin{array}{l} \label{textaremplacerD}
				\mbox{where $r_{0} =  \lfloor \base \bar{r}\rfloor$ and  \hfill } \\
				\mbox{
					$\bullet$  in the first case ``$\rightarrow$'' : $\bar{l'} = \base^{-1} \bar l + \base^{-1} x $ and $\bar{r'}  = \bar{r^\bullet} = \{\base \bar r\} $} \\
				\mbox{
					$\bullet$ in the second case ``$\leftarrow$'' : $\bar{l'} =\bar{ l^\bullet} = \{\base \bar l \} $ and $\bar{r'} = \base^{-2} \bar{r^\bullet} + \base^{-2} x + \base^{-1} \lfloor \base \bar{l}\rfloor $ }
			\end{array}
		\end{equation}
		
		We introduce the following functions: 
		\begin{align*}
			\rightarrow : Q \times \{0,1,3\} &\mapsto \{0,1\} \\
			(q,a) &\mapsto 1 \quad \text{if $\delta(q,a) = (\cdot,\cdot, \rightarrow)$}\\
			(q,a) &\mapsto 0 \quad \text{otherwise}
		\end{align*}
		corresponding to the head moving to the right, 
		and
		
		\begin{align*}
			\leftarrow:Q \times \{0,1,3\} &\mapsto \{0,1\} \\
			(q,a) &\mapsto 1 \quad \text{if $\delta(q,a) = (\cdot,\cdot, \leftarrow)$}\\
			(q,a) &\mapsto 0 \quad \text{otherwise}
		\end{align*}
		corresponding to the head moving to the left.
		
		Hence, we can rewrite, $ \mathit{\bar{\tu f}}(q, \bar l, \bar r) = (q', \bar l', \bar r')$ as
		$$\bar{l'} = \displaystyle \sum_{q, r_{0}} \left[\rightarrow(q, r_{0}) \left( \frac{\bar{l}}{4} + \frac{x}{4} \right) + \leftarrow(q, r_{0}) \left\{ 4 \bar{l}\right\}\right] $$
		and
		$$\bar{r'} = \displaystyle \sum_{q, r_{0}} \left[\rightarrow(q, r_{0}) \left\{ 4 \bar{r} \right\} + \leftarrow(q, r_{0}) \left( \frac{\left\{ 4 \bar{r}  \right\} }{4^2} + \frac{x}{4^2} + \frac{\lfloor 4 \bar{l} \rfloor}{4} \right)\right].$$
		And $\{\cdot\}$ and $\lfloor\cdot \rfloor$ are PAMs.
	\end{proof}

	The embedding of configurations of TM into reals is $\gammac(C)=\encodageconfiguration(C)$.
	But, without loss of generality, we could have taken $\gammac\left(q,w_{1},w_{2}\right)= \left(q,\gamma\left(w_{1}\right),\gamma\left(w_{2}\right)\right)$ with $\gamma: \Sigma^{*} \to \R$ taken as:

	\begin{itemize}
		\item  the encoding $\gammaentier$  mapping the word $w=a_{1} \dots a_{n}$ to the integer whose binary expression is $w$,
		\item or $\gammacompact$ mapping $w$ to the real number of $[0,1]$ whose binary expansion is $w$,
		\item or  more generally, $\gammacompactk$  or $\gammaentierk$, using base $k$  instead of base $2$ for $k \ge 2$,
		\item or  $\gammacompactpk$ mapping $w$ to $\left(\gammacompactk(w),\ell(w)\right)$. 
	\end{itemize}

	In any case, we can then consider a function $\tu f: X \subseteq \R^{d} \to X$ such that for any configuration $C$, denoting by $C'$ the next configuration, $\tu f\left(\gammac(C)\right)=\gammac(C')$: one step of the TM corresponds to one step in the dynamical system $(X,\tu f)$ with respect to $\gamma$. Then, we have that the diagram of the execution commutes for any number of steps:

	\begin{center}
		\tikzset{every node/.style={align=center}}
		\begin{tikzpicture}[scale=0.9][x=1cm,y=0.8cm]
			\node at (0,0) (a) {$C$};
			\node at (0,-2) (aa) {$\gammac(C)$};
			
			\node at (2,0) (b) {$C'$};
			\node at (2,-2) (bb) {$\gammac(C') $};
			\node at (4,0) (c) {$C''$};
			\node at (4,-2) (cc) {$\gammac(C'') $};
			\node at (6,0) (d) {$C'''$};
			\node at (6,-2) (dd) {$\gammac(C''') $};
			\node at (8,0) (e) {};
			\node at (8,-2) (ee) {};

			\node at (-0.3, -1) (i) {$\gammac$};
			
			\node at (1, 0.3) (ah) {$\vdash$};
			\node at (1, -1.7) (h) {$\tu f$};
			\node at (2.2, -1) (i) {$\gammac$};

			\node at (3, 0.3) (ah) {$\vdash$};
			\node at (3, -1.7) (h) {$\tu f$};
			\node at (4.2, -1) (i) {$\gammac$};
			
			\node at (5, 0.3) (ah) {$\vdash$};
			\node at (5, -1.7) (h) {$\tu f$};
			\node at (6.2, -1) (i) {$\gammac$};
			
			\node at (7, 0.3) (ah) {$\vdash$};
			\node at (7, -1.7) (h) {$\tu f$};

			\draw [->] (a)--(aa);	
			
			\draw [->] (a)--(b);
			\draw [->] (b)--(bb);
			\draw [->] (aa)--(bb);
			
			\draw [->] (b)--(c);
			\draw [->] (c)--(cc);
			\draw [->] (bb)--(cc);
			
			\draw [->] (c)--(d);
			\draw [->] (d)--(dd);
			\draw [->] (cc)--(dd);
			
			\draw [->,dashed] (d)--(e);
			\draw [->,dashed] (dd)--(ee);		
		\end{tikzpicture}
	\end{center}
	
	We can write  $l = l_0 l^\bullet $ and $r = r_0r^\bullet $, respectively the left part of the tape with respect to the head and the right part of the tape, where $l_{0}$ and $r_{0}$ are the first letters of $l$ and $r$, and $l^\bullet$ and $r^\bullet$  corresponding to the (possibly infinite) word  $l_{-1} l_{-2} \dots$ and $r_{1} r_{2} \dots$ respectively. Notice that $l$ is written in reverse order on the tape.
	\begin{figure}[h]
		\begin{center}
			\begin{tabular}{c c|c|c|c c}
				\hline 
				... & $l^\bullet $ & $l_0 $ & $r_0$ & $ r^\bullet$ & ... \\ 
				\hline 
				\multicolumn{1}{c}{} & 
				\multicolumn{2}{@{}c@{}}{$\underbrace{\hspace*{\dimexpr6\tabcolsep+3\arrayrulewidth}\hphantom{012}}_{{l}}$} & 
				\multicolumn{2}{@{}c@{}}{$\underbrace{\hspace*{\dimexpr6\tabcolsep+3\arrayrulewidth}\hphantom{3}}_{r}$}
			\end{tabular} 	
		\end{center}	
	\end{figure}
	The questions related to the existence of trajectories in the dynamical system $(X,\tu f)$ associated with the TM lead to questions about the existence of suitable trajectories in the TM. Specifically, it provides c.e.-hardness of reachability for various classes of dynamical systems.  
Call such a situation a \emph{step-by-step} emulation. However, encodings such as $\gammacompact$, whose image is compact, or on $\encodagemot$ as above,  map intrinsically distinct configurations to points arbitrarily close to each other (a sequence over a compact must have some accumulation point). 
Encodings like $\gammaentier$ do not have a compact image but involve emulations with arbitrarily large integers, which is another issue. 
These observations led to the already mentioned (informal) conjecture that undecidability/hardness may hold only for non-robust systems.
This leads to formally discussing robustness issues for general dynamical systems over  $\R^{d}$, for some $d\in\N$. 
These statements and underlying constructions, which allow the simulation of Turing machines, 
led to solving several open problems: the existence of a universal ODE \cite{ICALP2017}, the proof of the  Turing-completeness of chemical reactions \cite{CMSB17vantardise}, or statements about the hardness of several dynamical systems problems \cite{GracaZhongHandbook}.

	\section{Details, proofs and improvement about the construction of Definition \ref{defBranicky}}\label{detailBra}
	
	\begin{proposition}[\cite{ICALP24}]\label{propBraMarche}
		The solution of the continuous ODE in Definition \ref{defBranicky} emulates continuously the associated discrete ODE. 
	\end{proposition}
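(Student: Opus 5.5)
The statement to prove is Proposition~\ref{propBraMarche}: the solution $(\tu y_1,\tu y_2)$ of \eqref{code}, started at $\tu y_1(0,\tu x)=\tu y_2(0,\tu x)=\tu g(\tu x)$, satisfies $\tu y_1(n,\tu x)=\tu y_2(n,\tu x)=\tu f(n,\tu x)$ for every integer $n$. Here $\tu f$ is the solution of the discrete ODE, so $\tu f(n+1,\tu x)=\tu G(\tu f(n,\tu x),n,\tu x)$.

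I would argue by induction on $n$, analysing the two half-periods $[n,n+\frac12]$ and $[n+\frac12,n+1]$ separately. The induction hypothesis is that both components agree with $\tu f(n,\tu x)$ at time $n$; the base case is the initial condition.

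\emph{First half-period.} On $[n,n+\frac12]$ we have $\sin(2\pi t)\ge 0$, so $\theta(-\sin(2\pi t))=0$. Hence $\tu y_2'=0$ and $\tu y_2$ stays equal to $\tu f(n,\tu x)$, which is an integer vector. Therefore $r(\tu y_2)=\tu f(n,\tu x)$ exactly. For $t\in[n,n+\frac12]$ we also have $r(t)=n$, since $t\in[n-\frac14,n+\frac12]$. So the target $b:=\tu G(\tu f(n,\tu x),n,\tu x)=\tu f(n+1,\tu x)$ is constant on this interval. The first equation then reduces, componentwise, to $z'=\mathfrak c(b-z)^3\phi(t)$ with $\phi(t)=\theta(\sin 2\pi t)\ge0$ and $\int_n^{n+1/2}\phi>0$.

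Lemma~\ref{lem:quatre:cinq}, applied with $\rho=\delta=0$, gives $|\tu y_1(n+\frac12)-b|<\gamma$ whenever $\mathfrak c\ge 1/(2\gamma^2\int\phi)$. I choose $\gamma<\frac14$.

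\emph{Second half-period.} On $[n+\frac12,n+1]$ the roles swap. Now $\tu y_1'=0$, so $\tu y_1$ stays within $\gamma<\frac14$ of the integer $b$, and hence $r(\tu y_1)=b$ exactly. The same lemma then brings $\tu y_2$ within $\gamma$ of $b$ at time $n+1$. Since $\tu y_2$ lies within $\gamma$ of $b$, rounding recovers the integer exactly, so the next step starts from $r(\tu y_2(n+1))=\tu f(n+1,\tu x)$.

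\emph{Main obstacle.} With the literal $\theta$ and cubic term, the convergence is only asymptotic, so equality at integer times cannot be exact for a finite $\mathfrak c$. The statement "emulates" should therefore be read as: $\tu y_i(n)$ is within $\gamma$ of $\tu f(n,\tu x)$, and $r(\tu y_i(n))=\tu f(n,\tu x)$ exactly. I would state the induction hypothesis in this rounded form, namely $|\tu y_2(n)-\tu f(n,\tu x)|\le\gamma<\frac14$ together with $r(\tu y_2(n))=\tu f(n,\tu x)$.

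With this formulation the induction is self-correcting. The rounding $r$ resets the error at each step, so errors never accumulate and the bound $\gamma$ is uniform in $n$. This is exactly the mechanism later reused, with the approximations $\theta_{\epsilon'}$ and $r_{\epsilon'}$ and the nonzero $\rho,\delta$ of Lemma~\ref{lem:quatre:cinq}, in Lemma~\ref{lemAmeSin}.
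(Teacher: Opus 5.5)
Your proof is correct and follows the paper's argument. Both proceed by induction over half-periods: the inactive variable is frozen, so $r$ returns the exact integer target $\tu f(n+1,\tu x)$; the active variable is driven within $\frac14$ of that target by taking $\mathfrak c$ large; and rounding resets the error at each step. Your application of Lemma~\ref{lem:quatre:cinq} with $\rho=\delta=0$ gives exactly the bound $1/\sqrt{2\mathfrak c\int\phi}$ of Lemma~\ref{bran:funda}, which is the estimate the paper uses. Your rounded reading of \emph{emulates} also matches the paper's conclusion $\tu y_1(n,\tu x)=_\epsilon \tu y_2(n,\tu x)=_\epsilon \tu f(n,\tu x)$ with $\epsilon\le\frac14$.
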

	
	\begin{proof}
	The idea is that $\tu y_1$ corresponds to the actual computation of the iterates of $\tu G$ (and hence computes the successive values of $\tu f$)  and $\tu y_2$ acts as a ``memory'' equation. Let us detail how it works.  
	
	\begin{remark}
		We describe here an ``ideal'' computation, as $\theta(x)$ is exactly $0$ when
		$x\leq 0 $, and $r(z)$ is exactly some integer on suitable domains. 
	\end{remark}
	
	Initially, $\tu f\left(0,\tu x\right) = \tu y_{1}\left(0,\tu x\right) = \tu y_{2}\left(0,\tu x\right) = \tu g\left(\tu x\right)$.
	For $t \in [0,1/2]$, we have $$\theta\left(- \sin(2\pi t)\right)=0,$$ and hence 
	$\tu y_2' = 0$, so $\tu y_2$ is fixed and kept at value $\tu g\left(\tu x\right)$ for $t \in \left[0,\frac{1}{2}\right]$. 
	Consequently, for $t \in \left[0,\frac{1}{2}\right]$, $r\left(\tu y_{2}\right)$ is also fixed and kept at value $\tu g\left(\tu x\right)$, and $r(t)$ is also fixed and kept at value $0$. Consequently, on this interval, 
	if we write $C(t) = c\theta\left(\sin(2\pi t)\right)$, then the dynamics of $\tu y_1$ is given by 
	\begin{equation} \label{eq:de:branicky}
		\tu y_1' = C (t) \left(\tu G\left(\tu g(\tu x),0,\tu x\right) - \tu y_1 \right)^3
	\end{equation}
	Then, a key is to analyse such an ODE:
	
	\begin{lemma}[Analysis of ODE \eqref{eq:de:branicky} \cite{ICALP24}] \label{bran:funda}
		The solution $\tu y_{1}(t,\tu x)$ of ODE \eqref{eq:de:branicky} is converging to $G\left(\tu g(\tu x),0,\tu x\right)$ for any initial condition. Furthermore, for any initial condition $\tu y_{1}(0,\tu x) \neq G(\tu g(\tu x),0,\tu x)$, we have $\left\| \tu y_1\left(\frac{1}{2}, \tu x\right) - \tu G\left(\tu g(x),0,\tu x\right) \right\| \leq \frac{\sqrt{2}}{2\sqrt{\int_{0}^{\frac{1}{2}} C(z)dz}}$.
	\end{lemma}
	
	For any $m \in \N$, we can select constant $c$ such that for any initial condition $\tu y_{1}(0,\tu x)$, 
	$$\left\| \tu y_1\left(\frac{1}{2}, \tu x\right) - \tu G\left(\tu g(x),0,\tu x\right) \right\| \le 2^{-m}$$ 
	
	Consequently, $\tu y_1(t,\tu x)$ will approach $\tu G(\tu g(\tu x),0,\tu x)=\tu f(1,\tu x)$ on this interval.
	Thus, $\tu y_1(\frac{1}{2},\tu x) =_\epsilon \tu f(1,\tu x)$ and $\tu y_2(\frac{1}{2},\tu x) = \tu g(\tu x)$, for some $\epsilon > 0$, that we can consider less than $\frac14=2^{-2}$, by selecting a big enough constant $c$ (just taking $m=2$ above). 
	At $t=\frac{1}{2}$, $\tu y_1$ will hence have simulated one step of the discrete ODE. 
	
	Now, for $t \in \left[\frac{1}{2}, 1\right]$ the roles of $\tu y_1$ and $\tu y_2$ are exchanged : $\tu y_1'(t, \tu x) = 0$, so $\tu y_1$ is kept fixed, $\tu y_2$ approaches $r(\tu y_{1})=\tu f(1, \tu x)$, thus 
	$\tu y_1(1,\tu x) =_\epsilon \tu y_2(1, x) =_\epsilon \tu f(1, \tu x)$.
	
	By induction, from the same reasoning, we obtain that, for all $n\in\N$, $\tu y_1(n,\tu x) =_\epsilon \tu y_2(n,\tu x) =_\epsilon \tu f(n,\tu x)$, and actually, we also have, for all $t \in \left[n,n+\frac12\right]$, for any integer $n$:
	$$\tu y_1\left(t+\frac12,\tu x\right) =_\epsilon \tu y_2\left(t,\tu x\right) =_\epsilon \tu f\left(n,\tu x\right).$$ \end{proof}
	
	To implement such continuous ODEs, we must fix a function $\theta(x)$ with the above property. Taking 
	$\relu(x)=\max(0, x)$ would satisfy it, but it is not differentiable, and hence would not lead to an ODE. We could take $\theta(x)=0$ for $x \le 0$, $\exp\left(-\frac{1}{x}\right)$ for $x>0$. The point is that such a function is not real analytic. The base functions considered in $\contClass$  are all real analytic, and real analytic functions are preserved by composition, so we cannot get such a function by composition from our base functions. 
	Thus, we consider the following continuous approximation on $\relu$  (\cite[Lemma 19]{BlancBournezMFCS2023vantardise}):
	
	\begin{lemma}[\cite{BlancBournezMFCS2023vantardise}]\label{lemmeRelu}
		We set $Y(x,2^{m+2}) = \frac{1+\tanh(2^{m+2} x)}{2} $. For all integer $m$, for all $x\in \R$,
		$| \relu(x) - xY(x, 2^{m+2})| \leq 2^{-m}$, where $\relu(x)=\max(0, x)$. 
	\end{lemma}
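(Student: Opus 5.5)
The plan is a direct computation: reduce both sign cases to one symmetric expression for the error and then bound a single one-variable function. Write $k = 2^{m+2}$ and use the identity $\frac{1+\tanh(kx)}{2} = \frac{1}{1+e^{-2kx}}$, so that $xY(x,2^{m+2}) = \frac{x}{1+e^{-2kx}}$.

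First, for $x \ge 0$, where $\relu(x) = x$, I will compute
\[
xY(x,2^{m+2}) - x = -\frac{x\,e^{-2kx}}{1+e^{-2kx}} = -\frac{x}{1+e^{2kx}}.
\]
For $x < 0$, where $\relu(x) = 0$, the error is $\frac{x}{1+e^{-2kx}}$, whose absolute value is $\frac{|x|}{1+e^{2k|x|}}$. So in both cases
\[
\left|\relu(x) - xY(x,2^{m+2})\right| = \frac{|x|}{1+e^{2k|x|}}.
\]

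Second, substitute $u = 2k|x| \ge 0$. The error becomes $\frac{1}{2k}\cdot\frac{u}{1+e^{u}}$. Since $e^u \ge 1+u$, we get $\frac{u}{1+e^u} \le \frac{u}{2+u} < 1$. Hence the error is strictly less than $\frac{1}{2k} = 2^{-(m+3)} \le 2^{-m}$, which proves the claim with room to spare.

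I do not expect a genuine obstacle. The only care needed is the sign bookkeeping in the first step, i.e. checking that the two cases really collapse to the same expression $|x|/(1+e^{2k|x|})$, and using the uniform bound on $u/(1+e^u)$ rather than a bound depending on $x$. This uniformity is what makes the estimate hold for all $x \in \R$, including unbounded $|x|$.
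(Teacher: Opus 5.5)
Your proof is correct. The identity $\frac{1+\tanh(y)}{2}=\frac{1}{1+e^{-2y}}$ holds. Both sign cases do collapse to the exact error $\frac{|x|}{1+e^{2k|x|}}$ with $k=2^{m+2}$. The uniform bound $\frac{u}{1+e^{u}}<1$, obtained from $e^{u}\ge 1+u$, then gives the stronger estimate $2^{-(m+3)}\le 2^{-m}$ for every integer $m$, negative ones included.

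There is nothing in the paper to compare against: it does not prove this lemma but imports it from \cite{BlancBournezMFCS2023vantardise}. Your direct computation is a complete, self-contained substitute for that citation.
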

	
	Having $Y(x,z)=\frac{1+\tanh(4 x z)}{2}$ yields a function in $\contClasslight = \contClasslim \backslash \{\robODE,$ $ \MANONlim\}$ with the same property: we avoid the computation of $2^{m}$ by a substitution of a variable, and using a multiplication. We write $\relutanh(Y,x)$ for $xY(x,z)$: we have $| \relutanh(2^{m},x) -\relu(x)| \leq 2^{-m}$.
	Furthermore, a real analytic function constant on some interval (we assumed it is $0$ for $x\le 0$) is constant. Hence, the above-considered function $\theta(x)$ cannot be real analytic: implementing this trick cannot be done directly using our base functions and only compositions. 
	In Proposition \ref{the:manon:trick}, we do a similar construction, but we improve it to deal with errors and not exact functions $\theta(z)$ and $r(x)$. 
	Furthermore, the purpose of the function $r$ is to correct errors around integers, i.e. around $\Z$: it is possibly around other $\Z\delta = \{ n \delta| n \in \Z\}$ for some $\delta>0$. Then, we assume that, for $z \in \left[n\delta-\frac14\delta,n\delta +\frac12\delta \right]$, $r(z)=n$, for any integer $n \in \Z$, to have a similar reasoning as the one above, where $\delta=1$.

	\begin{proposition}[Simulating a discrete ODE by a continuous ODE \cite{ICALP24}] \label{the:manon:trick}	
		Suppose that, in \eqref{code}, we replace function $\theta(z)$ and function $r(z)$ by some suitable approximations: we take $\theta(x)=\relu(x)$, $\theta_{\epsilon'}(x)$, $r_{\epsilon'(z)}$ such that $\theta(z)=_{\epsilon'}\theta_{\epsilon'}(z)$ and $r(x)=_{\epsilon'}r_{\epsilon'}(x)$,  and take constant $\mathfrak{c}$ big enough. 	Then the solution of the obtained ODE continuously simulates the discrete ODE, with error at most $\epsilon$ if $\epsilon'$ is taken sufficiently small. To guarantee $\epsilon=2^{-n}$, it is sufficient to take $\epsilon'=2^{-q(n)}$  and $\theta_{\epsilon'}(x)=\relutanh(2^{q(n)},x)$ for some polynomial $q$. 
	\end{proposition}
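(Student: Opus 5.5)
The statement to prove is Proposition \ref{the:manon:trick}. I would prove it by induction on half-periods, following the ideal argument of Proposition \ref{propBraMarche}. The exact $\theta$ and $r$ are replaced by $\theta_{\epsilon'}=\relutanh(2^{q(n)},\cdot)$ and $r_{\epsilon'}$, and each approximation is absorbed into the perturbation terms $\rho$ (for $\bar b$) and $E$ (for $\delta$) of Lemma \ref{lem:quatre:cinq}. The invariant I would carry is the one stated in the proof of Lemma \ref{lemAmeSin}:
\begin{itemize}
\item at every integer time $k$, both $\tu y_1(k,\tu x)$ and $\tu y_2(k,\tu x)$ are within $\epsilon$ of $\tu f(k,\tu x)$;
\item on $[k,k+\frac12]$, the memory $\tu y_2$ stays within $\frac{\eta}{4}$ of $\tu f(k,\tu x)$;
\item symmetrically, on $[k+\frac12,k+1]$, $\tu y_1$ stays within $\frac{\eta}{4}$ of $\tu f(k+1,\tu x)$.
\end{itemize}
Here $\eta$ is the grid step of the values of $\tu f$.

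\textbf{Step 1: the frozen variable.} On $[k,k+\frac12]$ we have $\sin(2\pi t)\ge 0$. Lemma \ref{lemmeRelu} then gives $|\theta_{\epsilon'}(-\sin 2\pi t)|\le 2^{-q(n)}$. A priori bounds on $\tu y_1,\tu y_2$ and $\tu G$, which come from item 2 of Lemma \ref{lem:quatre:cinq} on the previous half-period, bound the cubic factor by some $K$. So $|\tu y_2'|\le \mathfrak c K 2^{-q(n)}$, and $\tu y_2$ drifts by at most $\frac12\mathfrak c K2^{-q(n)}$. I choose $q$ so that this drift is $\le\epsilon'$, well below $\eta/4$. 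Consequently $r_{\epsilon'}(\tu y_2)=_{\epsilon'}\tu f(k,\tu x)$ and $r_{\epsilon'}(t)=_{\epsilon'}k$. Since $\tu G$ is polynomial-time computable, it has a polynomial modulus of continuity, so $\bar b(t):=\tu G(r_{\epsilon'}(\tu y_2),r_{\epsilon'}(t),\tu x)$ satisfies $|\bar b(t)-\tu f(k+1,\tu x)|\le\rho$. Here $\rho$ is polynomially small once $\epsilon'$ is.

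\textbf{Step 2: the active variable.} I write $\theta_{\epsilon'}(\sin 2\pi t)=\phi(t)+(\theta_{\epsilon'}-\relu)(\sin2\pi t)$ with $\phi=\relu(\sin 2\pi\cdot)$, so that $\int_k^{k+1/2}\phi=\frac1\pi$. The second term multiplies a bounded cube and goes into $E$, with $\delta=\mathfrak cK2^{-q(n)}$. Lemma \ref{lem:quatre:cinq} then gives $|\tu y_1(k+\frac12)-\tu f(k+1,\tu x)|<\rho+\gamma+\delta/2$ for any starting value, provided $\mathfrak c\ge \pi/(2\gamma^2)$. Taking $\gamma=\epsilon/8$ makes $\mathfrak c$ polynomial in $2^n$ with polynomially many bits. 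The half-period $[k+\frac12,k+1]$ is symmetric, with target $r_{\epsilon'}(\tu y_1)$, which equals $\tu f(k+1,\tu x)$ up to $\epsilon'$.

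\textbf{Step 3: no accumulation.} The main obstacle is showing that errors do not accumulate over many periods. The key is that the conclusion of Lemma \ref{lem:quatre:cinq} does not depend on the initial condition: $r_{\epsilon'}$ snaps any value within $\eta/4$ of a grid point back to that point up to $\epsilon'$, as in Lemma \ref{lemInterStep}. The per-half-period error therefore stays bounded by $\rho+\gamma+\delta/2+\epsilon'$ uniformly in $k$, and I choose $q$ so that this is $<\min(\epsilon,\eta/4)$. The delicate point is circular. The constant $K$ needs a priori bounds on the trajectories, which come from item 2 of the lemma. Those bounds in turn need $\delta$ small, and $\delta$ involves $\mathfrak c K$. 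I would break this by fixing $\mathfrak c$ first from $\gamma$, then $K$ from the box containing $\tu G$ and the initial data, and only then picking $q(n)\ge n+\log(\mathfrak cK)+c_0$, which is polynomial.
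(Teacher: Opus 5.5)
Your proposal is correct and follows essentially the same route as the paper's argument (written out for Lemma~\ref{lemAmeSin} and inherited from \cite{ICALP24}): induction on half-periods with the same invariant, Lemma~\ref{lem:quatre:cinq} with the approximation errors absorbed into $\rho$ and $E$, and the snapping of Lemma~\ref{lemInterStep} to stop accumulation. Your explicit accounting of the factor $\mathfrak c K$ multiplying the $\theta_{\epsilon'}$ error is more careful than the paper's, but the drift $\frac12\mathfrak c K 2^{-q(n)}$ can never be $\le \epsilon' = 2^{-q(n)}$; what you actually need, and what your Step~3 choice $q(n)\ge n+\log(\mathfrak cK)+c_0$ delivers, is that this drift lies below a fixed fraction of $\min(\epsilon,\eta/4)$.
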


	The main idea is that the constructions always replace every function with another not changing much locally (i.e. changes with an error we control). This is the key that provides a robust ODE as in Definition \ref{defRobODE}, leading to polynomial space complexity. 
	Whenever there is a discrete ODE $\frac{\delta \tu f}{\delta t}(t,\tu x) = \tu u (\tu f(t,\tu x),t,\tu x)$ defining some function $\tu f(t, \tu x)$, one can construct some continuous ODE, using only functions from $\contClasslight$, such that one of its projections provides $\tu f\left(z, t, \tu x\right)$, where $\tu f\left(2^{n}, t, \tu x\right) =_{2^{-n}} \tu f\left(n, \tu x\right)$ whenever $t =_{\frac14} n$, for some $n\in\N$. 
	The simulation of Proposition \ref{the:manon:trick} is computable in polynomial time.
	The sine function is computable in polynomial time (adapting \cite{kawamura2009lipschitz}), so is $\pi$ (consider the Taylor expansion) and the other functions of $\contClasslight$. $\FPtime$ is closed under composition.

\end{document}